\numberwithin{equation}{section}
\newcommand{\vm}{\vec m}
\newcommand{\vN}{\vec N}
\newcommand{\vP}{\vec P}
\newcommand{\vX}{\vec X}
\newcommand{\vY}{\vec Y}
\newcommand\cE{\mathcal{E}}
\newcommand\cF{\mathcal{F}}
\newcommand\cP{\mathcal{P}}
\newcommand\SIGMA{\vec\sigma}
\newcommand{\Bin}{{\rm Bin}}
\newcommand{\Be}{{\rm Be}}
\renewcommand{\vec}[1]{\boldsymbol{#1}}
\newcommand{\vecone}{\vec{1}}
\newcommand\KL[2]{D_{\mathrm{KL}}\bc{{{#1}\|{#2}}}}
\newcommand\eps{\varepsilon}
\newcommand\pr{\mathbb{P}} 
\renewcommand\Pr{\pr}
\newcommand\Erw{\mathbb{E}}
\newcommand{\whp}{w.h.p.}
\newcommand\geb[1]{\textcolor{black}{#1}}
\newcommand\gebA[1]{\textcolor{black}{#1}}
\newcommand\change[1]{\textcolor{black}{#1}}
\newtheorem{definition}{Definition}[section]
\newtheorem{remark}[definition]{Remark}
\newtheorem{theorem}[definition]{Theorem}
\newtheorem{lemma}[definition]{Lemma}
\newtheorem{proposition}[definition]{Proposition}
\newtheorem{corollary}[definition]{Corollary}
\newcommand\Lem{Lemma}
\newcommand\Prop{Proposition}
\newcommand\Thm{Theorem}
\newcommand\Cor{Corollary}
\newcommand\Sec{Section}
\newcommand\bc[1]{\left({#1}\right)}
\newcommand\cbc[1]{\left\{{#1}\right\}}
\newcommand\brk[1]{\left\lbrack{#1}\right\rbrack}
\newcommand\abs[1]{\left|{#1}\right|}
\newcommand{\mzero}{\vm_{0}}
\newcommand{\mzerond}{\vm_{0,\text{nd}}}
\newcommand{\mone}{\vm_{1}}
\newcommand{\zeroplus}{V_{0,\text{PD}}}
\newcommand{\vzeroplus}{\vec{ V_{0,\textbf{PD}}}}
\newcommand{\mcomp}{m_{\text{COMP}}}
\newcommand{\mdd}{m_{\text{DD}}}
\newcommand{\mcount}{m_{\text{COUNT}}}
\newcommand{\Cchan}{C_{\text{Chan}}}
\newcommand{\G}{\vec G}
\newcommand{\VS}{\vspace*{10pt}}
\newcommand{\wt}[1]{{\widetilde{#1}}}
\newcommand{\dch}{d^*_{{\rm ch}}}
\begin{document}
	\title{Improved bounds for noisy group testing with constant tests per item}
	
	\author{Oliver Gebhard, Oliver Johnson, Philipp Loick, Maurice Rolvien}

    \address{ {\tt \{Gebhard, Loick, Rolvien\}@math.uni-frankfurt.de}, Goethe University, Mathematics Institute,\\ 10 Robert Mayer St, Frankfurt 60325, Germany.}
    \address{{\tt O.Johnson@bristol.ac.uk}, University of Bristol, School of Mathematics, \\ Woodland Road, Bristol, BS8 1UG, United Kingdom}
    
	\begin{abstract}
	\noindent
	The group testing problem is concerned with identifying a small set of infected individuals in a large population. At our disposal is a testing procedure that allows us to test several individuals together. In an idealized setting, a test is positive if and only if at least one infected individual is included and negative otherwise. Significant progress was made in recent years towards understanding the information-theoretic and algorithmic properties in this noiseless setting. In this paper, we consider a noisy variant of group testing where test results are flipped with certain probability, including the realistic scenario where sensitivity and specificity can take arbitrary values. Using a test design where each individual is assigned to a fixed number of tests, we derive explicit algorithmic bounds for two commonly considered inference algorithms and thereby \gebA{naturally extend the} results of Scarlett \& Cevher (2016) and  Scarlett \& Johnson  (2020). \gebA{We  provide improved performance guarantees for the efficient algorithms in these noisy group testing models -- indeed, for a large set of parameter choices the bounds provided in the paper are the strongest currently proved.}
	\end{abstract}
	\maketitle
	\newpage
\section{Introduction}

\subsection{Motivation and background}

Suppose we have a large collection of $n$ people, a small number $k$ of whom are infected by some disease, and where only $m \ll n$ tests are available. 
In a landmark paper \cite{Dorfman_1943} from 1943, Dorfman introduced the idea of group testing. The basic idea is as follows: rather than screen one person using one test, we could mix samples from individuals in one pool, and use a single test for this whole pool. The task is to recover the  infection status of all individuals using the pooled test results.
Dorfman's original work was motivated by a biological application, namely identifying individuals with syphilis. Subsequently, group testing has found a number of related applications, including detection of HIV \cite{Wein_1996}, DNA sequencing \cite{Kwang_2006,Ngo_2000} and protein interaction experiments \cite{Mourad_2013, Thierry_2006}. More recently, it has been recognised as an essential tool to moderate pandemic spread \cite{Cheong_2020}, where identifiying infected individuals fast and at a low cost is indispensable \cite{Madhav_2017}. In particular, group testing has been identified as a testing scheme for the detection of COVID-19 \cite{Abdalhamid_2020,Goethe_2020,Technion_2020}.
From a mathematical perspective, group testing is a prime example of an inference problem where one wants to learn a ground truth from (possibly noisy) measurements \cite{Abbe_2016,Arikan_2009,Donoho_2006}. 
Over the last decade, it has regained popularity and 
a significant body of research was dedicated to understand its information-theoretic and algorithmic properties \cite{ Baldassini_2013, Coja_2019, Coja_2019_2,Scarlett_2019, Scarlett_2016_2, Scarlett_2016}.
In this paper, we provide improved upper bounds on the number of tests that guarantee successful inference for the noisy variant of group testing.

\subsection{Related Work}
\subsubsection{Noiseless Group Testing}
In the simplest version of group testing, we suppose that a test is positive if and only if the pool contains at least one infected individual. We refer to this as the noiseless case. In this setting, each negative test guarantees that every member of the corresponding pool is not infected, so they can be removed from further consideration. However, a positive test only tells us that at least one item in the test is defective (but not which one), and so requires further investigation. 
Dorfman's original work \cite{Dorfman_1943}  proposed a simple adaptive strategy where a small pool of individuals is tested, and where each positive test is followed up by testing every individual in the corresponding pool individually. Since then it has been an important problem to find the optimal way to recover the whole population's infection status in the noiseless case \geb{(see \cite{Aldridge_2019_2} for a detailed survey)}. A simple counting argument (see for example \cite[Section 1.4]{Aldridge_2019_2}) shows that to ensure recovery with zero error probability, since every possible defective set must give different test outcomes, the following must hold  in the noiseless setting:
\begin{align}\label{count}
   2^m\geq \binom{n}{k} \qquad \Rightarrow \qquad m \geq m^{0}_{\inf} := \frac{1}{\log2} k \log(n/k)
\end{align}

\geb{This can be extended to the case of recovery with small error probability, for example with the bound (see \cite[Eq. (1.7)]{Aldridge_2019_2}) that the success probability 
\begin{align} \label{count2}
\Pr( \rm{ suc}) \leq \frac{2^m}{\binom{n}{k}},
\end{align}
meaning that the success probability must decay exponentially with the number of tests below $m^0_{\inf}$.}
Hwang \cite{Hwang_1972} provided an algorithm based on repeated binary search, which is essentially optimal in terms of the number of tests required in that it requires $m^0_{\inf} + O(k)$ tests, but may require many stages of testing.
The question of whether non-adaptive algorithms (or even adaptive algorithms with a limited number of stages) can attain the bound \eqref{count} remained open until recently. \cite{Aldridge_2018,Coja_2019_2} showed that the answer depends on the prevalence of the disease, for example on the value of $\theta \in (0,1)$ in a parameterisation\footnote{\gebA{The result of \cite{Coja_2019_2} is two-fold. On the one hand, it provides a method to recover infected individuals \whp as well as attaining \eqref{count} for a certain range of $\theta<\theta^*$. On the other hand they show that \eqref{count} cannot be attained by any testing procedure for larger $\theta>\theta^*$. One finds $\theta^*=\log(2)\cdot (1+\log(2))^{-1}$.}} where the number of infected individuals $k\sim n^{\theta}$. Non-adaptive testing schemes can be represented through a binary $(m \times n)$-matrix that indicates which individual participates in which test. Significant research was dedicated to see which design attains the optimal performance, 
\geb{although much of the recent research  analysed the performance of randomized designs.}
Initial research focused on the case where the matrix entries are i.i.d. \cite{Aldridge_2017,Aldridge_2014,Scarlett_2016}, which we will refer to as Bernoulli pooling. Later work considered a constant column design where each individual is assigned to a (near-)constant number of tests \cite{Aldridge_2016,Coja_2019,Coja_2019_2,Johnson_2019}. Indeed \cite{Coja_2019_2} showed that such a design is information-theoretically optimal in the \textit{noiseless} setting and it is to be expected that this remains true for the noisy case.
To recover the ground truth from the test results and the pooling scheme, this paper focuses on two non-adaptive algorithms, {\tt COMP} and {\tt DD}, which are relatively simple to perform and interpret in the noiseless case. We describe them in more detail below, but in brief {\tt COMP} \cite{Chan_2011} simply builds a list of all the individuals who ever appear in a negative test and are hence certainly healthy, and assumes that the other individuals are infected.
{\tt DD} \cite{Aldridge_2014} uses {\tt COMP} as a first stage and builds on it by looking for individuals who appear in a positive test that only otherwise contains individuals known to be healthy. 
While the noiseless case provides an interesting mathematical abstraction, it is clear that it may not be realistic in practice \cite{Plebani_2015}.
\subsubsection{Noisy Group Testing}
 In medical applications \cite{Ting_2011} the two occurring types of noise  in a testing procedure are related to sensitivity (\geb{the probability that a test containing an infected individual is indeed positive}) and specificity (\geb{the probability that a test with only healthy individuals is indeed negative}), and in that language we cannot assume the gold standard of tests with unit specificity and sensitivity.
Thus, research attention in recent years has shifted towards the noisy version of group testing \cite{Chan_2011,Scarlett_2018,Scarlett_2019,Scarlett_2016,Scarlett_2017,Johnson_2018}. On the one hand, the \textit{adaptive} noisy case was considered in \cite{Scarlett_2018,Scarlett_2019}. On the other hand \cite{Chan_2011,Johnson_2010,Knill_1996,Malioutov_2012,Scarlett_2016,Scarlett_2017,Johnson_2018} looked at the \textit{non-adaptive} noise case from different angles (for instance linear programming, belief propagation, and Markov Chain Monte Carlo). \gebA{In \cite{Scarlett_2016,Scarlett_2017, Johnson_2018} the  algorithmic performance guarantees within noisy group testing under Bernoulli pooling are discussed. First of all \cite{Scarlett_2016} obtained a converse as well as a theoretical achievability bound, but stated the practical recovery as an direction for further research. In the following \cite{Scarlett_2017, Johnson_2018} shed light on this question by using  Bernoulli pooling.}\footnote{\gebA{\cite{Scarlett_2017} introduced an approach based on separate decoding of items for symmetric noise models. While this approach works well for small $\theta$ (in particular $\theta\rightarrow 0$), the performance drops dramatically for larger $\theta$.  For most $\theta$ this approach is worse off than the noisy {\tt DD} discussed in \cite{Johnson_2018}. Note there exist some noise levels with the very strong restriction assuming $p=q$ where \cite{Scarlett_2017} improve over our results in the $\theta$ very close to 0 regime. Due to the generality of our model we will from now on focus on \cite{Johnson_2018} as benchmark for our results.}}
In this paper we focus on the {\tt COMP} and {\tt DD} algorithms, since it is possible to deduce explicit performance guarantees for them. The original {\tt COMP} and {\tt DD} were designed for the noiseless case and do not automatically carry over to general noisy models.
However, recent work of Scarlett and Johnson \cite{Johnson_2018} showed that noisy versions of these algorithms can perform well under certain noise models using  i.i.d. (Bernoulli pooling) test designs, particularly focusing on $Z$ channel and reverse $Z$ channel noise.
As common medical tests have different values for sensitivity and specificity \cite{Long_2018} the analysis of a generalized noise model beyond the $Z$ and reverse $Z$ channel is warranted.
\subsubsection{Model Justification}
As described for example in pandemic plans developed by the EU, US and WHO \cite{EU_2009,US_2017,WHO_2009}, and in COVID-specific work \cite{mutesa2020}, adaptive strategies may not be suitable for pandemic prevention. For example, if a test takes one day to prepare and for the results to be known, then each stage will require an extra day to perform, meaning that adaptive group testing information can be received too late to be useful. Hence the need to perform large-scale testing to identify infected individuals fast relative to the doubling time \cite{Cheong_2020, Madhav_2017, mutesa2020} can make adaptive group testing unsuitable to prevent an infectious disease from spreading.  Furthermore it may be difficult to preserve virus samples in a usable state for long enough to perform multi-round testing \cite{Gould_1999}. Due to its automation potential and the fact that tests can be completed in parallel (for example by the use of 96-well PCR plates \cite{erlich_2015}), the main applications of group testing such as DNA screening \cite{Chen_2008,Kwang_2006,Ngo_2000}, HIV testing \cite{Wein_1996} and protein interaction analysis \cite{Mourad_2013,Thierry_2006} are non-adaptive, where all tests are specified upfront and performed in parallel. For example, while group testing strategies appear to be useful to identify individuals infected with COVID-19 (see for example \cite{Goethe_2020,Technion_2020}), testing for the presence of the SARS-CoV-19 virus is not perfect \cite{Woloshin_2020}, and so we need to understand the effect of both false positive and false negative errors in this context, with non-identical error probabilities. For this reason, we consider a general $p-q$ noise model in this paper. Under this model, a truly negative test is flipped with probability $p$ to display a positive test result, while a truly positive test is flipped to negative with probability $q$ (Figure \ref{pqnoise}). Its formulation is sufficiently general to accommodate the recovery of the noiseless results ($p=q=0$), Z channel ($p=0$), reverse Z channel ($q=0$) and the Binary Symmetric Channel ($p=q$). However, our results include the case of non-zero $p$ and $q$ without having to make the somewhat artificial assumption that false negative and false positive errors are equally likely.  We note that it may be unrealistic to assume that the noise parameters are known exactly, and more sophisticated models may be needed to understand the real world. Nevertheless our analysis of a generalised noise model serves as a starting point towards a full understanding of the  difficulties occurring while implementing group testing algorithms in laboratories.

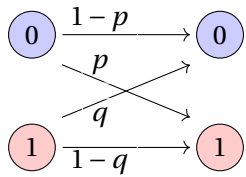
\begin{figure}[!ht]
\centering
\begin{tikzpicture}
\node (A) at (0,0) [circle,draw,fill=blue!20] {$0$};
\node (B) at (2.5,0) [circle,draw,fill=blue!20] {$0$};
\node (C) at (2.5,-1.5) [circle,draw,fill=red!20] {$1$};
\node (D) at (0,-1.5) [circle,draw,fill=red!20] {$1$};
\draw[->] (0.4,0)--(2.1,0);
\draw[->] (0.4,-1.5)--(2.1,-1.5);
\draw[->] (0.4,-0.4)--(2.1,-1.1);
\draw[->] (0.4,-1.1)--(2.1,-0.4);
\node (A) at (0.9,-0.4) [draw=white] {$p$};
\node (B) at (0.9,0.2) [draw=white] {$1-p$};
\node (C) at (0.9,-1.1) [draw=white] {$q$};
\node (D) at (0.9,-1.7) [draw=white] {$1-q$};
\end{tikzpicture}
\caption{The $p-q$-noise model: the result of each standard noiseless group test is transmitted independently through the given noisy communication channel.}
\label{pqnoise}
\end{figure}

\subsection{Contribution}
This paper provides a simultaneous extension of \cite{Coja_2019} and \cite{Johnson_2019,Johnson_2018}, by analysing noisy versions of {\tt COMP} and {\tt DD} under more general noise models for constant-column weight designs. \gebA{In contrast to prior work \cite{Aldridge_2014, Johnson_2019} assuming sampling with replacement, in this paper we use sampling without replacement, meaning that our designs have exactly the same number of tests for each item, rather than approximately the same as in those previous works. This makes little difference in practice, but may be closer to the spirit of LDPC codes for example.}

We provide explicit bounds on the performance of these algorithms in a generalized noise model. \gebA{We will prove that (noisy versions of) {\tt COMP} as well as {\tt DD} succeed with $\Theta(k\log(n/k))$ tests. Our analysis reveals the exact constants to ensure the recovery with these two inference algorithms. The main results will be stated formally in Theorems~\ref{thm_COMP} and~\ref{thm_DD}}, but we would like to give the reader a first insight of what will follow.
We analyze Algorithms \ref{comp_algorithm} and \ref{dd_algorithm} for the constant degree model, where there are $m=c k \log(n/k)$ tests performed and each individual chooses $\Delta = c d \log(n/k)$ tests uniformly at random. Let $p,q\geq 0, p+q<1$ and $\epsilon>0$.

We start with the performance of COMP (Algorithm~\ref{comp_algorithm}), as stated in Theorem \ref{thm_COMP}:

\textit{
For any $\Delta:=\Delta(c,d)$ we find a threshold $\alpha:=\alpha(d,p,q)$ such that COMP succeeds in inferring the infected  individuals if the number of tests $$m\geq (1+\eps)m_{COMP}=\min_{\alpha, d} \max \cbc{b_1(\alpha, d), b_2(\alpha, d)} k \log(n/k)$$ }

The next step on our agenda is the performance of DD (Algorithm~\ref{dd_algorithm}), as stated in Theorem \ref{thm_COMP}:

\textit{
For any $\Delta:=\Delta(c,d)$ we find  thresholds $\alpha:=\alpha(d,p,q)$ and $\beta:=\beta(d,q)$ such that DD succeeds in inferring the infected individuals if the number of tests $$m\geq (1+\eps)\mdd(n,\theta, p, q) = \min_{\alpha, \beta, d} \max \cbc{c_1(\alpha, d), c_2(\alpha, d), c_3(\beta, d), c_4(\alpha, \beta, d)} k \log(n/k)$$
}

For all typical noise channels (Z, reverse Z and BSC) we compare the constant-column and Bernoulli design and find for all such instances that the \gebA{required number of tests in the }former \gebA{is lower than the number needed in} the latter thereby improving on results from \cite{Johnson_2018}, and providing  the strongest performance guarantees currently proved for efficient algorithms in noisy group testing.

As group testing offers an essential tool for pandemic prevention \cite{Madhav_2017} and as the the accuracy of medical testing is limited \cite{Long_2018,Plebani_2015} this paper provides the natural next step in the group testing literature. 

\subsection{Test design and notation} \label{sec:design}
To formalize our notation, we write $n$ for the number of individuals in the population, $\SIGMA$ for a binary vector representing the infection status of each individual, $k$ (the Hamming weight of $\SIGMA$) for the number of infected individuals and $m$ for the number of tests performed. 
We assume that $k$ is known for the purposes of matrix design, though in practice (see \cite[Remark 2.3]{Aldridge_2019_2}) it is generally enough to know $k$ up to a constant factor to design a matrix with good properties. In this paper, in line with other work such as \cite{Aldridge_2014}, we consider a scaling  $k\sim n^{\theta}$ for some fixed $\theta \in (0,1)$, referred to in \cite[Remark 1.1]{Aldridge_2019_2} as the sparse regime\footnote{\gebA{Note that the analysis directly extends to $k=\Theta(n^{\theta})$ as a constant factor in front does not influence the analysis.}}. 
In addition to the interesting phase transitions observed using this scaling, this sparse regime is particularly relevant as it was found suitable to model the early state of a pandemic \cite{Wang_2011}.

Let us next introduce the test design. With $V=(x_i)_{i \in [n]}$ denoting the set of $n$ individuals\footnote{$[n]$ will be used as an abbreviated notation for the set $\cbc{1, \dots, n}$.}   and $F=(a_i)_{i \in [m]}$ the set of $m$ tests, the test design can be envisioned as a bipartite factor graph with $n$ variable nodes "on the left" and $m$ factor nodes "on the right". We draw a configuration $\SIGMA \in \cbc{0,1}^V$,  encoding the infection status of each individual, uniformly at random from vectors of Hamming weight $k$. The set of healthy individuals will be denoted by $V_0$ and the set of infected individuals by $V_1$. In symbols,
\begin{align*}
	V_0 = \cbc{x \in V: \SIGMA(x) = 0} \qquad \text{and} \qquad V_1 = V \setminus V_0 = \cbc{x \in V: \SIGMA(x) = 1}
\end{align*}
The lower bound from \eqref{count} suggests that in the noisy group testing setting it is natural to compare the performance of algorithms and matrix designs in terms of  the prefactor of $k \log(n/k)$ in the number of tests required. 
To be precise, we carry out $m$ tests, and each item is assigned to exactly $\Delta$ tests chosen uniformly at random without replacement. We parameterize $m$ and $\Delta$ as
\begin{align} \label{eq:param}
	m = c k \log(n/k) \qquad \text{and} \qquad \Delta = cd\log(n/k)
\end{align}
for some suitably chosen constants $c,d \geq 0$. 
 
Let $\partial x$ denote the set of  tests that individual $x$ appears in and $\partial a$ the set of individuals assigned to test $a$. The resulting (non-constant) collection of test degrees will be denoted by the vector $\vec{\Gamma}=(\vec{\Gamma}_a)_{a \in [m]}$. Further, let
\begin{align} \label{eq:testdegrees}
	\Gamma_{\min} = \min_{a \in [m]} \Gamma_a \qquad \text{and} \qquad \Gamma_{\max} = \max_{a \in [m]} \Gamma_a. 
\end{align}
Throughout, $\G=\G(n,m,\Delta)$ describes the random bipartite factor graph from this construction.

Now consider the outcome of the tests. Recall from above that a standard noiseless group test $a$ gives a positive result if and only if there is at least one defective item contained in the pool, or equivalently if $\sum_{x \in \partial a} \SIGMA(x) \geq 1$. Even in the noisy case, this sum is a useful object to consider. Writing $\vecone$ for the indicator function, we define
\begin{equation} \label{eq:truesigma}
\SIGMA^*(a) = \vecone \cbc{ \sum_{x \in \partial a} \SIGMA(x) \geq 1 } \end{equation}
to be the outcome we would observe in the noiseless case using the test matrix corresponding to $\G$. We will say that test $a$ is {\em truly positive} if $\SIGMA^*(a) = 1$ and truly negative otherwise.

However, we do not observe the values of $\SIGMA^*(a)$ directly, but rather see what we will refer to as the {\em displayed} test outcomes $\hat \SIGMA(a)$ -- the outcomes of sending the true outcomes $\SIGMA^*(a)$ independently through the $p-q$ channel of Figure \ref{pqnoise}.  Since in this model a truly positive test remains positive with probability $1-q$ and a truly negative test is displayed as positive with probability $p$ we can write
\begin{align}
	\hat \SIGMA(a) & = \vecone\cbc{\Be(p)=1 } \left( 1 - \SIGMA^*(a) \right) + \vecone\cbc{ \Be(1-q)=1} \SIGMA^*(a) \label{eq:noisysigma}
\end{align}
where $\Be(r)$ denotes a Bernoulli random variable with parameter $r$ independent of all other randomness in the model. For models with binary outputs, this is the most general channel satisfying the noisy defective channel property of \cite[Definition 3.3]{Aldridge_2019_2}, though more general models are possible under the only defects matter property \cite[Definition 3.2]{Aldridge_2019_2}, where the probability of a test being positive depends on the number of  infected individuals it contains.

Note that if $p + q > 1$, we can preprocess the outputs from \eqref{eq:noisysigma} by flipping them, i.e. setting $\wt{p} = 1-p$ and $\wt{q} = 1-q$, where $\wt{p} + \wt{q} < 1$. Hence without loss of generality we will assume throughout that $p+q < 1$. In the case $p+q = 1$, the test outcomes are independent of the inputs, and we cannot hope to find the infected individuals -- see Corollary \ref{thm:shanCAP}.

With $\vm_0$ being the number of truly negative tests, let $\vm_0^f$ be the number of truly negative tests that are flipped to display a positive test result and $\vm_0^u$ be the number of truly negative tests that are unflipped. Similarly, define $\vm_1$ as the number of truly positive tests, of which $\vm_1^f$ are flipped to a negative test result and of which $\vm_1^u$ are unflipped. For reference, for $t \in \cbc{0,1}$ we write
\begin{align*}
\vm_t = &\abs{ \cbc{ a: \SIGMA^*(a) = t}} \\    
\vm_t^f = \abs{\cbc{ a: \SIGMA^*(a) = t, \hat \SIGMA(a) \neq t
}} &\quad \text{and} \quad    
\vm_t^u = \abs{ \cbc{ a: \SIGMA^*(a) = t, \hat \SIGMA(a) = t} } 
\end{align*}
\geb{Here we use bold letters to indicate random variables.}
Throughout the paper, we use the standard Landau notation $o(\cdot), O(\cdot), \Theta(\cdot), \Omega(\cdot), \omega(\cdot)$ and define $0 \log 0 = 0$.
Furthermore we say that a property $\cP$ holds \textit{with high probability ( \whp )}, if $\Pr \bc{ \cP} = 1$ as $n \to \infty$.
In order to quantify the performance of our algorithms, for any $0 < r \neq s < 1$, we write
\begin{align}\label{KL_definition}
   \KL{r}{s} :=  r \log \left( \frac{r}{s} \right) + (1-r) \log \left( \frac{1-r}{1-s} \right), 
\end{align}
for the relative entropy of a Bernoulli random variable with parameter $r$ to a Bernoulli random variable with parameter $s$, commonly referred to as the Kullback--Leibler divergence. Here and throughout the paper we use $\log$ to denote the natural logarithm. For $r$ or $s$ equal to $0$ or $1$ we define the value of $\KL{\cdot}{\cdot}$ (possibly infinite) on grounds of continuity, so for example $\KL{0}{s} = -\log(1-s)$.

\section{Main results}

With the test design and notation in place, we are now in a position to state our main results. \Thm s~\ref{thm_COMP}, \ref{thm_DD} 
are the centerpiece of this paper, featuring improved bounds for the noisy group testing problem for the general $p-q$ model. We follow up in Section \ref{sec:comb} with a discussion of the combinatorics underlying both algorithms, and provide a converse bound in Section \ref{sec:channel}. Subsequently, in Section \ref{sec:standard} we show how the bounds simplify when we consider the special cases of the Z, the reverse Z and Binary Symmetric Channel. Finally, in Section \ref{sec:better} we derive sufficient conditions under which {\tt DD}\ 
 requires fewer tests than the {\tt COMP} algorithm and compare the bounds of our constant-column design against the Bernoulli design employed in prior literature.

\subsection{Bounds for Noisy Group Testing}

We will consider two well-known algorithms from the noiseless setting to identify infected individuals in this paper. First, we study a noisy variant of the {\tt COMP} algorithm, originally introduced in \cite{Chan_2011}.

\VS
\begin{algorithm}[H] 
Declare every individual that appears in  $\alpha \Delta$ or more displayed negative tests as healthy. \\
Declare all remaining individuals as infected.
\caption{The noisy {\tt COMP} algorithm}
\label{comp_algorithm}
\end{algorithm}
\VS

Note that \change{for $\alpha\Delta=1$ the formulation of Algorithm~\ref{comp_algorithm} coincides with} 
the standard ${\tt COMP}$ algorithm where an individual is classified as healthy if it appears in at least one displayed negative test which constitutes a sufficient condition in the noiseless case.
We now state the first main result of this paper.

\begin{theorem}[Noisy {\tt COMP}] \label{thm_COMP}
Let $p,q\geq 0$, $p+q<1, d \in (0,\infty), \alpha \in (q, e^{-d}(1-p) + \bc{1-e^{-d}}q)$. Suppose that $0<\theta<1$  and let
\begin{align*}
    \mcomp &= \mcomp(n,\theta, p, q) = \min_{\alpha, d} \max \cbc{b_1(\alpha, d), b_2(\alpha, d)} k \log(n/k) \\
&\text{where} \qquad b_1(\alpha, d) = \frac{\theta}{1-\theta} \frac{1}{d \KL{\alpha}{q}} \\
&\text{and} \qquad b_2(\alpha, d) = \frac{1}{1-\theta}\frac{1}{d \KL{\alpha}{e^{-d}(1-p) + \bc{1-e^{-d}}q}}
\end{align*}
If $m \geq (1+\eps) \mcomp$ for some $\eps>0$, noisy {\tt COMP} will recover $\SIGMA$ \whp\ given test design $\G$ and test results $\hat \SIGMA$.
\end{theorem}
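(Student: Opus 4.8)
The plan is to show that with high probability, the COMP algorithm correctly classifies every individual. Since COMP declares an individual healthy precisely when it appears in at least $\alpha\Delta$ displayed negative tests, there are exactly two failure modes to rule out: a truly infected individual being wrongly declared healthy (because it happens to land in many displayed-negative tests), and a truly healthy individual being wrongly declared infected (because too few of its tests display as negative). I would bound the probability of each failure type for a single individual, then take a union bound over the (at most $k$ infected and $n-k$ healthy) individuals. The two terms $b_1$ and $b_2$ in $\mcomp$ should correspond exactly to these two events, which is why the final threshold is their maximum.

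\textbf{Controlling the infected individuals.} Fix an infected individual $x \in V_1$. Every test containing $x$ is truly positive, so it displays as negative only if it is flipped, which happens independently with probability $q$. Thus the number of displayed-negative tests among $x$'s $\Delta$ tests is stochastically dominated by (in fact equal in distribution to) a $\Bin(\Delta, q)$ variable. The algorithm misclassifies $x$ as healthy when this count reaches $\alpha\Delta$, and since $\alpha > q$ this is a large-deviation event. I would apply a Chernoff/relative-entropy bound to get
\begin{align*}
\Pr\brk{\Bin(\Delta, q) \geq \alpha\Delta} \leq \exp\bc{-\Delta\, \KL{\alpha}{q}}.
\end{align*}
Taking a union bound over the $k$ infected individuals and inserting $\Delta = cd\log(n/k)$ shows this contribution is $o(1)$ provided $cd\,\KL{\alpha}{q} > \theta/(1-\theta)$, using $k \sim n^\theta$ so that $\log k \sim \theta\log n$ and $\log(n/k)\sim(1-\theta)\log n$. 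Rearranging gives exactly $c > b_1(\alpha,d)$.

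\textbf{Controlling the healthy individuals.} Fix a healthy individual $x\in V_0$. Here the complication is that a test containing $x$ may still be truly positive if it contains some \emph{other} infected individual; it displays as negative only if it is truly negative and unflipped, or truly positive and flipped. I would first show that, conditioned on the test design, each test containing $x$ is truly negative with probability close to $e^{-d}$ (the probability that none of the other infected items land in it, as each of the $k$ infected items independently hits the test with probability $\approx \Delta/m = d/(k)\cdot\ldots$; the Poissonian $e^{-d}$ arises from $k$ near-independent inclusions). Combining with the channel, each of $x$'s tests displays negative with probability $e^{-d}(1-p) + (1-e^{-d})q$, the second KL-argument. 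The algorithm misclassifies $x$ as infected when \emph{fewer} than $\alpha\Delta$ of its tests display negative; since $\alpha$ is below this success probability, this is again a lower-tail large-deviation event bounded by $\exp(-\Delta\,\KL{\alpha}{e^{-d}(1-p)+(1-e^{-d})q})$. A union bound over the $\approx n$ healthy individuals forces $cd\,\KL{\cdot}{\cdot} > 1/(1-\theta)$, i.e. $c > b_2(\alpha,d)$.

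\textbf{The main obstacle} will be making the heuristic $e^{-d}$ rigorous: the truly-negative events across the $\Delta$ tests of a fixed $x$ are not exactly independent, because the design samples tests without replacement and the same infected items are shared across tests, introducing correlations. I would handle this by first establishing concentration of the test-degree profile $\Gamma$ (controlling $\Gamma_{\min},\Gamma_{\max}$), then coupling the without-replacement inclusion process to independent Poisson or binomial variables with the negligible error this introduces absorbed into the $(1+\eps)$ slack. Once independence is justified up to lower-order terms, the large-deviation bounds combine cleanly, and taking the minimum over admissible $(\alpha,d)$ yields the stated $\mcomp$.
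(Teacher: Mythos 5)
Your overall skeleton --- the two failure modes, relative-entropy Chernoff exponents, and union bounds over the $k$ infected and $n-k$ healthy individuals --- is exactly the paper's, and your treatment of infected individuals (exact $\Bin(\Delta,q)$ law for $\vN_x$, upper tail at $\alpha\Delta$, union bound giving $b_1$) coincides with Lemmas \ref{lem_dist_neg_inf} and \ref{lem_COMP_inf}. The genuine gap is in the step you yourself flag as the main obstacle: the healthy individuals. The coupling you propose cannot deliver what the union bound needs. For a fixed healthy $x$, the joint law of the displayed-negative indicators of the tests in $\partial x$ is at total-variation distance polynomially larger than $1/n$ from the corresponding product law: the probability that some infected individual lands in \emph{two} of $x$'s $\Delta$ tests is already of order $k\Delta^4/m^2=\Theta\bc{\log^2 n/k}$, and on that event two of $x$'s tests are forced truly positive together, a dependence no product law reproduces; even at the level of the one-dimensional law of $\vN_x$, the pairwise covariances of order $1/(k\log n)$ across $\Theta(\log^2 n)$ pairs shift the variance by $\Theta(\log n/k)$, which forces $\dTV=\Omega\bc{1/(k\log n)}=\Omega\bc{n^{-\theta}/\log n}$. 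A coupling failure is an \emph{additive} error of at least this size; it cannot be ``absorbed into the $(1+\eps)$ slack'', which only tolerates multiplicative $(1+o(1))$ corrections in the exponent. Since your union bound over the $\sim n$ healthy individuals requires a per-individual misclassification probability of $o(1/n)$, and $n\cdot n^{-\theta}\mathrm{polylog}(n)\to\infty$ for every $\theta<1$, the argument collapses precisely where the work is needed.

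The paper's proof (Lemma \ref{lem_dist_neg_healthy}) sidesteps per-individual approximation errors entirely: since $x$ is healthy, deleting $x$ changes no test outcome, so the displayed outcomes are \emph{independent} of which $\Delta$ tests $x$ joins. One conditions once on the single global event $\cE$ that the number of displayed negative tests satisfies $\vm_0^u+\vm_1^f=\bc{e^{-d}(1-p)+(1-e^{-d})q}m+O\bc{\sqrt m\log^4 n}$, which holds with probability $1-o(n^{-2})$ (Lemma \ref{lem_m0}, via Azuma--Hoeffding, and Corollary \ref{cor_m0_m1}) and is paid for only once rather than $n$ times. Given $\cE$, the count $\vN_x$ is then \emph{exactly} hypergeometric $H\bc{m,\vm_0^u+\vm_1^f,\Delta}$, because $x$ draws its $\Delta$ tests without replacement from a population of tests whose displayed outcomes are already fixed; the Chernoff bound for the hypergeometric distribution (Lemma \ref{lem_chernoff_hyp}) then gives $\Pr\bc{\vN_x\le\alpha\Delta\mid x\in V_0,\cE}\le\exp\bc{-(1+o(1))\Delta\KL{\alpha}{e^{-d}(1-p)+(1-e^{-d})q}}$ with no additive term, after which your union bound and your derivation of $b_2$ go through verbatim. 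If you wish to rescue your per-test picture instead, the correct tool is not coupling but negative association: the inclusion indicators of sampling without replacement are negatively associated, the displayed-negative indicators are monotone functions of these plus independent channel noise, and hence the Chernoff lower-tail bound holds with the exact marginal --- but that is a genuinely different argument from the one you describe.
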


The noisy variant of the {\tt DD} algorithm of \cite{Aldridge_2014} was introduced in \cite{Johnson_2018} and reads as follows:

\VS
\begin{algorithm}[H]
Declare every individual that appears in  $\alpha \Delta$ or more displayed negative tests as healthy and remove such individual from every assigned test. \\
Declare every yet unclassified individual who is now the only unclassified individual in $\beta \Delta$ or more displayed positive tests as infected. \\
Declare all remaining individuals as healthy.
\caption{The noisy {\tt DD} algorithm \cite{Johnson_2018}}
\label{dd_algorithm}
\end{algorithm}
\VS

\change{Note that the formulation of Algorithm~\ref{dd_algorithm}} reduces to the noiseless version of {\tt DD} introduced in \cite{Aldridge_2014} by taking \change{$\alpha\Delta=\beta\Delta=1$. This is because in the noiseless setting a single negative test or a single positive test with just individuals already classified as uninfected is sufficient in the noiseless case}. 
\gebA{Furthermore note that for $\beta=0$ noisy {\tt DD} and noisy {\tt COMP} are the same. From now on we assume $\beta>0$.} \gebA{The proof of Theorem~\ref{thm_COMP} can be found in Appendix~\ref{sec_COMP}}. We now state the second main result of the paper. 

\begin{theorem}[Noisy {\tt DD}] \label{thm_DD}
Let $p,q\geq0$, $p+q<1, d \in (0, \infty), \alpha \in (q, e^{-d}(1-p) + \bc{1-e^{-d}}q)$ and $\beta \in (0, e^{-d}(1-q))$ and define $w = e^{-d}p + (1-e^{-d}) (1-q)$.
Suppose that $0<\theta<1$  and let
\begin{align*}
   \mdd &= \mdd(n,\theta, p, q) = \min_{\alpha, \beta, d} \max \cbc{c_1(\alpha, d), c_2(\alpha, d), c_3(\beta, d), c_4(\alpha, \beta, d)} k \log(n/k) \\
&\text{where} \qquad c_1(\alpha, d) = \frac{\theta}{1-\theta} \frac{1}{d \KL{\alpha}{q}} \\
&\text{and} \qquad c_2(\alpha, d) = \frac{1}{d \KL{\alpha}{1-w}} \\
&\text{and} \qquad c_3(\beta, d) = \frac{\theta}{1-\theta} \frac{1}{d \KL{\beta}{(1-q) e^{-d}}} \\
&\text{and} \qquad c_4(\alpha, \beta, d) = \max_{1-\alpha\leq z \leq 1} \cbc{ \frac{1}{1-\theta} \frac{1}{d \bc{ \KL{z}{w} + \vecone \cbc{\beta > \frac{z e^{-d}p}{w}} z \KL{\frac \beta z}{\frac{e^{-d}p}{w}}}}}
\end{align*}
If $m \geq (1+\eps) \mdd$ for some $\eps>0$, then noisy {\tt DD} will recover $\SIGMA$ \whp\ given test design $\G$ and test results $\hat \SIGMA$.
\end{theorem}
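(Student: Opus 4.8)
```latex
\emph{Proof proposal.}
The plan is to control, for each individual, the number of displayed negative tests and displayed positive ``explained'' tests it participates in, and to show that the four thresholds $c_1,\dots,c_4$ correspond exactly to the four ways the algorithm can fail. The skeleton mirrors the $\mcomp$ analysis, but now with two stages. First I would fix the test design $\G$ and condition on the typical event that $\Gamma_{\min}$ and $\Gamma_{\max}$ are close to the mean test degree $ck\log(n/k)\cdot\Delta/m \approx k\Delta/m$, so that each test contains $\approx k\Delta/m = (1+o(1))\log(n/k)\cdot(d/\ldots)$ individuals and, crucially, that the probability a given test is truly negative is $(1+o(1))e^{-d}$ (since a test avoids all $k$ infected individuals with probability $\approx(1-\Delta/m)^k \approx e^{-d}$, using the parametrisation \eqref{eq:param}). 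This Poissonisation of test degrees and the independence of the $p$--$q$ channel across tests are what make the per-individual counts concentrate.

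The first stage is identical to {\tt COMP}: I would show that w.h.p.\ every healthy individual is correctly declared healthy at the threshold $\alpha\Delta$, and that no infected individual is wrongly declared healthy. For a \emph{healthy} individual, each of its $\Delta$ tests is displayed-negative with probability $w' := e^{-d}(1-p)+(1-e^{-d})q$ (truly negative then unflipped, or truly positive then flipped), so its count of displayed-negative tests is $\approx\Bin(\Delta,w')$; a Chernoff/relative-entropy bound gives failure probability $\exp(-\Delta\,\KL{\alpha}{w'}(1+o(1)))$, and a union bound over $n-k$ healthy items needs $\Delta\,\KL{\alpha}{w'}\ge(1+\eps)\log n$, which after substituting \eqref{eq:param} is exactly $c_2$ (note $1-w = w'$). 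For an \emph{infected} individual, its tests are truly positive, so displayed-negative with probability $q$; we need $\Pr[\Bin(\Delta,q)\ge\alpha\Delta]$ summed over the $k$ infected items to vanish, giving $\Delta\,\KL{\alpha}{q}\ge(1+\eps)\log k$, i.e.\ $c_1$. The requirement $\alpha\in(q,w')$ is precisely what makes both KL divergences in $c_1,c_2$ strictly positive.

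The second stage is the genuinely new part. After stage one, w.h.p.\ exactly the true healthy set (up to negligibly many exceptions) has been removed, so among an infected item's tests the relevant ones are those that are displayed-positive and contain no \emph{other} surviving infected item. I would first handle $c_3$: an infected item $x$ appears in a test that is ``useful'' (displayed-positive, and $x$ is the only infected item there) with probability $\approx e^{-d}(1-q)$ --- the factor $e^{-d}$ forcing the other $\approx\Delta$ slots to avoid the remaining $k-1$ infected items, and $(1-q)$ for the truly-positive test surviving the channel. Requiring every infected item to clear the $\beta\Delta$ threshold yields, by the same Chernoff argument over $k$ items, $\Delta\,\KL{\beta}{(1-q)e^{-d}}\ge(1+\eps)\log k$, which is $c_3$; the constraint $\beta<e^{-d}(1-q)$ keeps this divergence meaningful.

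The hard part, and where I expect the main obstacle, is $c_4$: ruling out a \emph{healthy} item being falsely declared infected in stage two. A surviving healthy item $y$ is misclassified if it lands in $\beta\Delta$ displayed-positive tests in which it is the only unclassified item --- and the subtlety is that such a test can be displayed-positive either because it was truly positive (probability involving $w$) or because it was a \emph{false positive} (a truly negative test flipped up, probability $e^{-d}p$, i.e.\ the $e^{-d}p/w$ conditional factor). This is why $c_4$ carries the second KL term, gated by the indicator $\vecone\{\beta>ze^{-d}p/w\}$: when $\beta$ exceeds the ``free'' false-positive supply one must additionally pay to over-populate with genuine positives. The inner maximisation over $z\in[1-\alpha,1]$ arises because $y$'s outcome is coupled to the \emph{joint} count: $y$ survived stage one (so at most $\alpha\Delta$ of its tests were displayed-negative, forcing $z$, the fraction displayed-positive, into $[1-\alpha,1]$), and we must take the worst-case conditional distribution over this surviving population. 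I would make this rigorous by a two-level union bound --- first over the survival pattern (parametrised by $z$), then over the explained-positive count --- and identify the exponential rate as the bracketed expression; a short convexity argument shows the worst case is attained at the stated endpoints, and substituting \eqref{eq:param} converts the rate bound $\ge(1+\eps)\log n$ into $c_4$. Combining all four bounds and optimising over the free parameters $\alpha,\beta,d$ gives $\mdd$, and taking $m\ge(1+\eps)\mdd$ drives every failure probability to $o(1)$.
```
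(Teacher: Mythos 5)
There is a genuine gap, and it sits exactly at the point where {\tt DD} gains its advantage over {\tt COMP}. In your first stage you require that \emph{every} healthy individual is declared healthy at threshold $\alpha\Delta$, and you union bound over all $n-k$ healthy items, demanding $\Delta\,\KL{\alpha}{1-w}\geq(1+\eps)\log n$. Substituting $\Delta = cd\log(n/k) = cd(1-\theta)\log n$, this requirement is $c \geq \frac{1+\eps}{1-\theta}\,\frac{1}{d\,\KL{\alpha}{1-w}}$, which is {\tt COMP}'s bound $b_2$, \emph{not} $c_2 = \frac{1}{d\,\KL{\alpha}{1-w}}$: your claim that it ``is exactly $c_2$'' is off by the factor $\frac{1}{1-\theta}>1$. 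Carried out as written, your argument proves a strictly weaker theorem (with $b_2$ in place of $c_2$), and in the regime where $c_2$ is the binding constraint and $m$ lies between $(1+\eps)\,\mdd$ and the $b_2$ threshold, your stage-one claim is actually \emph{false} \whp: roughly $n^{1-cd(1-\theta)\KL{\alpha}{1-w}}\gg 1$ healthy individuals survive the first stage. Your proposal is also internally inconsistent on this point: if stage one really classified all healthy individuals, the set $\zeroplus$ would be empty and your entire $c_4$ analysis of ``surviving healthy items'' would be vacuous, yet the theorem's bound genuinely needs $c_4$.

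The missing idea, which is how the paper proceeds, is that {\tt DD}'s first stage only needs to leave $\abs{\zeroplus} = kn^{-\Omega(1)}$ healthy individuals unclassified, not zero. Requiring $n\exp\bc{-\Delta\,\KL{\alpha}{1-w}} \leq kn^{-\delta}$ and applying Markov's inequality costs only $c > (1+\eta)\frac{1}{d\,\KL{\alpha}{1-w}}$, i.e.\ exactly $c_2$ --- this is the paper's \Lem~\ref{lem_COMP_healthy_2}. The smallness of $\zeroplus$ is then used twice: first to show $\mzerond = \bc{1-n^{-\Omega(1)}}e^{-d}m$ (\Lem~\ref{lem_m0nd}), which underpins the hypergeometric distributions of $\vP_x$ that you invoke for $c_3$ and $c_4$; second, the members of $\zeroplus$ are not misclassified because step 3 of the algorithm declares them healthy by default, provided they fail the $\beta\Delta$ threshold --- and that is precisely what the $c_4$ bound guarantees, via the joint estimate $\Pr\bc{\vN_x = (1-z)\Delta}\cdot\Pr\bc{\vP_x(z\Delta)\geq\beta\Delta}$ maximised over $z\in[1-\alpha,1]$, which you do describe correctly. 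Your treatments of $c_1$, $c_3$, and the structure of $c_4$ (the $z$-decomposition, the indicator gating the second divergence) match the paper; the flaw is confined to stage one, but it is not cosmetic, since the relaxed first-stage requirement is the entire reason $c_2$ lacks the $\frac{1}{1-\theta}$ factor and hence the reason {\tt DD} improves on {\tt COMP}.
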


\gebA{The proof of Theorem~\ref{thm_DD} can be found in Appendix~\ref{sec_DD}}.
While the bounds appear cumbersome at first glance, the optimization is of finite dimension and for every specific value of $p$ and $q$ can be efficiently solved to arbitrary precision yielding explicit values for $\mcomp$ and $\mdd$. For illustration purposes, we will calculate those bounds for several values of $p,q$ and $\theta$.

\subsection{The combinatorics of the noisy group testing algorithms} \label{sec:comb}

In the following, we outline the combinatorial structures that Algorithm \ref{comp_algorithm} and \ref{dd_algorithm} take advantage of. \\
\geb{We start with defining the three types of tests that are relevant for the classification of an individual $x_i$  while using {\tt COMP} and {\tt DD}.}
\gebA{In the first stage we find
\begin{itemize}
    \item Type DN: Displayed negative tests
    \item Type DP: Displayed positive tests
\end{itemize}
}
Note that the only available information during the first stage of the algorithms is the test result and the pooling structure -- no information about the individuals' infection status is available. We give an illustration on the left hand side of Figure~\ref{neighbor}. After this step {\tt COMP} terminates by declaring all remaining individuals as infected. 

The {\tt DD} algorithm continues with a second step which considers just the displayed positive tests. From the first step of the algorithm one receives the estimate of the set of non-infected individuals obtained in the first round. Now distinguish the following two types, illustrated on the right hand side in Figure~\ref{neighbor}:
\begin{itemize}
    \item \gebA{Type \change{Displayed-Positive-Single} (DP-S)}: Displayed positive tests in which all other individuals are already declared as uninfected.
    \item \gebA{Type \change{Displayed-Positive-Multiple} (DP-M)}: Displayed positive tests with at least one other individual that is not contained in the estimated set of uninfected individuals.
\end{itemize}

\subsubsection{The noisy COMP algorithm}  \label{sec_COMP_explanation}
To get started, let us shed light on the combinatorics of noisy {\tt COMP} (Algorithm \ref{comp_algorithm}). For the \textit{noiseless} case, the {\tt COMP} algorithm classifies each individual that appears in at least one negative test as healthy and all other individuals as infected, since the participation in a negative test is a sufficient condition for the individual to be healthy. 

For the noisy case, the situation is not as straightforward, since an infected individual might appear in \textit{displayed} negative tests that were flipped when sent through the noisy channel. Thus, a single negative test is not definitive evidence that an individual is healthy. Yet, we can use the number of negative tests to tell the infected individuals apart from the healthy individuals. 

Clearly, noisy {\tt COMP} (Algorithm~\ref{comp_algorithm}) using a threshold $\alpha \Delta$ succeeds if no healthy individual appears in fewer than $\alpha \Delta$ displayed negative tests and no infected individual appears in more than $\alpha \Delta$ displayed negative tests. To this end, we define
\begin{align}
	\vN_x = \abs{\cbc{a \in \partial x: \hat \SIGMA(a) = 0}} \label{eq:Nxdef}
\end{align}
for the number of displayed negative tests that item $x$ appears in.
In terms of Figure \ref{neighbor}, the algorithm determines the infection status by counting the number of tests of \gebA{ Type DN}.

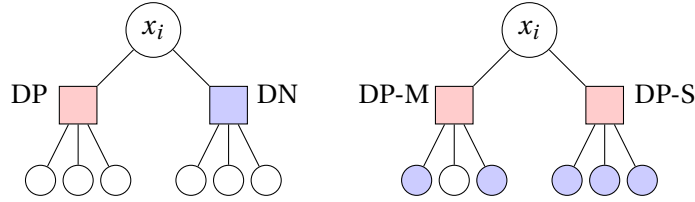
\begin{figure}[!ht]
\centering
\begin{tikzpicture}
\node[circle, draw, minimum width=0.5cm, fill=white] (xi) at (0,0) {$ x_i$};
\node[rectangle, draw, minimum width=0.5cm, minimum height=0.5cm, fill=blue!20] (a1) at (1, -1) {$ $};
\node(A) at (1.65,-0.85){\gebA{DN}};
\node[rectangle, draw, minimum width=0.5cm, minimum height=0.5cm, fill=red!20] (a2) at (-1, -1) {$ $};
\node(B) at (-1.65,-0.85){\gebA{DP}};
\node[circle, draw, minimum width=0.4cm, fill=white] (x1) at (0.5,-2) {$ $};
\node[circle, draw, minimum width=0.4cm, fill=white] (x2) at (1,-2) {$ $};
\node[circle, draw, minimum width=0.4cm, fill=white] (x3) at (1.5,-2) {$ $};
\node[circle, draw, minimum width=0.4cm, fill=white] (x4) at (-0.5,-2) {$ $};
\node[circle, draw, minimum width=0.4cm, fill=white] (x5) at (-1,-2) {$ $};
\node[circle, draw, minimum width=0.4cm, fill=white] (x6) at (-1.5,-2) {$ $};

\path[draw] (x1) -- (a1);
\path[draw] (x2) -- (a1);
\path[draw] (x3) -- (a1);
\path[draw] (x4) -- (a2);
\path[draw] (x5) -- (a2);
\path[draw] (x6) -- (a2);
\path[draw] (xi) -- (a1);
\path[draw] (xi) -- (a2);

\node[circle, draw, minimum width=0.5cm, fill=white] (xj) at (5,0) {$ x_i$};
\node[rectangle, draw, minimum width=0.5cm, minimum height=0.5cm, fill=red!20] (a3) at (6, -1) {$ $};
\node[rectangle, draw, minimum width=0.5cm, minimum height=0.5cm, fill=red!20] (a4) at (4, -1) {$ $};
\node[circle, draw, minimum width=0.4cm, fill=blue!20] (x7) at (5.5,-2) {$ $};
\node[circle, draw, minimum width=0.4cm, fill=blue!20] (x8) at (6,-2) {$ $};
\node[circle, draw, minimum width=0.4cm, fill=blue!20] (x9) at (6.5,-2) {$ $};
\node[circle, draw, minimum width=0.4cm, fill=blue!20] (x10) at (3.5,-2) {$ $};
\node[circle, draw, minimum width=0.4cm, fill=white] (x11) at (4,-2) {$ $};
\node[circle, draw, minimum width=0.4cm, fill=blue!20] (x12) at (4.5,-2) {$ $};
\node(C) at (6.8,-0.85){\gebA{DP-S}};
\node(D) at (3.2,-0.85){\gebA{DP-M}};
\path[draw] (x7) -- (a3);
\path[draw] (x8) -- (a3);
\path[draw] (x9) -- (a3);
\path[draw] (x10) -- (a4);
\path[draw] (x11) -- (a4);
\path[draw] (x12) -- (a4);
\path[draw] (xj) -- (a3);
\path[draw] (xj) -- (a4);
\end{tikzpicture}
\caption{\gebA{The relevant neighborhood structures for the analysis of the algorithms, on the left for the first stage and on the right for the second step. Rectangles represent tests (displayed positive in red, displayed negative in blue). Blue circles represent individuals that have been classified as healthy in the first step of {\tt DD} (or by {\tt COMP}). White circles represent individuals that are unclassified in the current stage. We refer to displayed negative tests as Type DN, displayed positive tests as Type DP, displayed positive with a single unclassified individual as Type DP-S and displayed positive with a multiple unclassified individual as Type DP-M}}
\label{neighbor}
\end{figure}

\subsubsection{The noisy {\tt DD} algorithm} \label{sec_DD_explanation}
As in the prior section, let us first consider the \textit{noiseless} {\tt DD} algorithm. The first step is identical to {\tt COMP} classifying all individuals that are contained in at least one negative test as healthy. In a second step, the algorithm checks each individual to see if \geb{ it is } contained in a positive test as the only \gebA{remaining} unclassified individual \gebA{after the first step of the algorithm} and thus must be infected. 

Again, the situation is more intricate when we add noise, since neither a single negative test gives us confidence that an individual is healthy nor does a positive test where the individual is the single \gebA{remaining} unclassified individual \gebA{after the first step of the algorithm} inform us that this individual must be infected. Instead we count and compare the number of such tests.
The first step of the noisy {\tt DD} algorithm is identical to noisy {\tt COMP}, but we are not required to identify all healthy individuals in the first step (\gebA{we are able to keep some unclassified for the second round}). Thus, after the first step, we are left with all infected individuals $V_1$ \geb{(as the algorithm did not \gebA{try to classify any individual as infected } in the first step)} and a set of yet unclassified healthy individuals \geb{(as some of them might exhibit a first neighbourhood that is not sufficient for a clear first round classification)} which we will denote by $\zeroplus$.These are healthy individuals who did not appear in sufficiently many displayed negative tests to be declared healthy with confidence in the first step\footnote{\gebA{Note that the bounds are taken in a way such that no infected individual is classified as uninfected in the first round.}}. In symbols, for some $\alpha \in (0,1)$
\begin{align*}
	\zeroplus = \cbc{x \in V_0: \vN_x < \alpha \Delta}
\end{align*}
To tell $V_1$ and $\zeroplus$ apart, we consider the number of displayed positive tests $\vP_x$ where the individual $x$ appears on its own after removing the 
 individuals \gebA{, which were declared healthy already, }$V_0 \setminus \zeroplus$ from the first step, i.e.
\begin{align}
	\vP_x = \abs{\cbc{a \in \partial x: \hat\SIGMA(a) = 1 \text{ and } \partial a \setminus \cbc{x} \subset V_0 \setminus \zeroplus}} \label{eq:Pxdef}
\end{align}
Referring to Figure \ref{neighbor}, the second step of the algorithm is based on counting tests of \gebA{Type DP-S}. Tests of \gebA{Type DP-M} contain another \gebA{remaining} unclassified individual \gebA{after the first step of the algorithm} from $V_{0,PD}\cup V_1$. 
The noisy {\tt DD} algorithm takes advantage of the fact that it is less likely for an individual $x \in \zeroplus$ to appear as the only yet unclassified individual in a displayed positive test than it is for an individual in $x \in V_1$. For $x \in \zeroplus$ such a test would be truly negative and would have been flipped (which occurs with probability $p$) to display a positive test result. Conversely, an individual $x \in V_1$ renders any of its tests truly positive and thus the only requirement is that the test otherwise contains only individuals \gebA{which were declared healthy already, }and is not flipped (which occurs with probability $1-q$). 
For this reason, we will see that the distribution of $\vP_x$ differs between $x \in V_1$ and $x \in \zeroplus$, and the difference $(1-q)-p > 0$ helps determine the size of this difference.  The second step of {\tt DD} exploits this observation by counting tests of \gebA{Type DP-S}.

\subsection{The Channel Perspective of noisy group testing} \label{sec:channel}
Motivated by \eqref{count}, we can describe the bounds in terms of rate, in a Shannon-theoretic sense. That is, we \gebA{follow the common notion to define} the rate (bits learned per test) of an algorithm in this setting \gebA{(for instance as in \cite{Baldassini_2013})} to be
\begin{align*}
    R := \frac{ \log \binom{n}{k}}{m \log 2} \sim \frac{k \log(n/k)}{m \log 2}.
\end{align*}
(Recall that we take logarithms to base $e$ throughout this paper). For example the fact that Theorems \ref{thm_COMP} and \ref{thm_DD} show that noisy {\tt COMP} and {\tt DD} respectively can succeed \whp\ ; with $m \geq (1+\epsilon) c k \log(n/k)$ tests for some $c$ is equivalent to the fact that $R = 1/(c \log 2)$ is an achievable rate in a Shannon-theoretic sense.

We now give a counterpart to these two theorems by stating a universal converse for the $p-q$ channel below, improving on the universal counting bound from \eqref{count}. 
The starting observation (see \cite[Theorem 3.1]{Aldridge_2019_2}) is that no group testing algorithm can succeed \whp\ with rate greater than $\Cchan$,  the Shannon capacity of the corresponding noisy communication channel. Thus, we cannot hope to succeed \whp\ with $m < (1-\epsilon) c k \log(n/k)$ tests where $c = 1/( \Cchan \log 2)$. Hence  as a direct consequence of the value of the channel capacity of the $p-q$ channel, we deduce the following statement.
\begin{corollary} \label{thm:shanCAP}
Let $p,q \geq 0$, $p + q < 1$ and $\epsilon > 0$, write $h(\cdot)$ for the binary entropy in nats (logarithms taken to base $e$) and $\phi = \phi(p,q) = (h(p) -h(q))/(1-p-q)$. If we define
$$ \mcount = \left( \frac{1}{ \KL{q}{1/(1+e^\phi)}} \right) k \log(n/k),$$
then for $m \leq (1-\epsilon) \mcount$ no algorithm can recover
$\SIGMA$ \whp\ for any matrix design.
\end{corollary}

\begin{remark} This result follows from Lemma \ref{lem:shanCAP} derived in Appendix \ref{Notes_on_Capacity} below. As discussed there, this derivation \geb{(combined with the fact that each test is negative with probability $e^{-d}$}) suggests a choice of density for the matrix:
$$ d = \dch = \log(1-p-q) - \log \left( \frac{1}{1+e^\phi} - q \right).$$
While a choice of $\Delta=c \cdot \dch \cdot \log(n/k)$ is not \gebA{necessarily} optimal, it may be regarded as a sensible heuristic that provides good rates for a range of $p$ and $q$ values. 
\end{remark}

\subsection{Applying the results to standard channels} \label{sec:standard}

With \Thm~\ref{thm_COMP} and \Thm~\ref{thm_DD} we derived achievable rates for the generalized \textit{p-q-model} (see Figure~\ref{pqnoise}). Prior research considered the Z channel where $p=0$ and $q>0$, the Reverse Z channel where $p>0$ and $q=0$ and the Binary Symmetric Channel with $p=q>0$.
These channels are common models in coding theory \cite{MCT_2007}, but are also often considered in medical applications \cite{Lalkhen_2008,Long_2018} concerned with taking \geb{imperfect} sensitivity ($q>0$), specificity ($p>0$) or both ($p>0$ and $q>0$) into account. \geb{As a consequence we also compare our results with the most recent results of Johnson and Scarlett \cite{Johnson_2018}.}
In the following section we will demonstrate how performance guarantees 
on these channels can directly be obtained from our main theorems. 

\subsubsection{Recovery of the noiseless model}
\change{Note that the bounds Corollary~\ref{cor_noiseless_COMP} and Corollary~\ref{cor_noiseless_DD} are already known \cite{Chan_2011, Johnson_2019}. We would like to give the reader an idea of how one can see that our cumbersome looking bounds relate to the more accessible bounds given for the noiseless case.}
First, we show the noiseless bounds can be simply recovered by \change{letting $p,q\rightarrow 0$}. In the noiseless setting, \change{it is sufficient, by definition of the algorithm, to set both $\alpha\Delta=1$ and $\beta\Delta=1$.} 
To see why, observe that in the absence of noise a single negative test is sufficient evidence that an individual is healthy. Conversely, a single positive test where the individual only appears with 
individuals \gebA{, which were declared healthy already, } implies that particular individual must surely be infected. As shown in \cite{Coja_2019} the optimal parameter choice for  the density parameter  $d$ in the constant-column design in the noiseless setting is $\log(2)$. 
Applying these values to \Thm~\ref{thm_COMP} we recover the noiseless bound for {\tt COMP}.\geb{These bounds were first stated in \cite{Chan_2011}.}
\change{\begin{corollary}[{\tt COMP} in the noiseless setting] \label{cor_noiseless_COMP}
Let $p,q\rightarrow 0$, $0<\theta<1$ and $\eps>0$. Further, let
\begin{align*}
    m_{{\tt COMP}, \text{noiseless}} = \frac{1}{(1-\theta) \log^2 2} k \log(n/k).
\end{align*}
Furthermore let $m_{\tt COMP}(n,\theta,p,q)$ be defined as in \Thm~\ref{thm_COMP}
Then we find 
$$m_{\tt COMP}(n,\theta,p,q)\underset{p,q\rightarrow 0}{\longrightarrow} m_{{\tt COMP}, \text{noiseless}}$$
\end{corollary}}
\change{\begin{proof}
We start by taking the bounds $b_1(\alpha, d) $ and $b_2(\alpha, d)$. To see how this boils down to $m_{{\tt COMP}, \text{noiseless}}$, we start with using the well-known fact that within the near constant column design $d=\log(2)$ is the optimal choice \cite{Coja_2019}. Now by taking both $p,q\rightarrow 0$ one realizes that $b_1(\alpha,\log(2))$ vanishes as $\log(p)\rightarrow -\infty$ as $p\rightarrow 0$. Turning our focus to the second bound we see that it boils down to
$$b_2(\alpha,\log(2)))=\frac{1}{(1-\theta)\log(2)}\frac{1}{\log(2)+\alpha\log(\alpha)+(1-\alpha)\log(1-\alpha)}$$
On the one hand we realize that $\alpha\log(\alpha)+(1-\alpha)\log(1-\alpha)$ is negative for all $\alpha\in(0,1)$. This leads to
$$b_2(\alpha,\log(2))>b_2(0,\log(2))$$
On the other hand we realize that in the noiseless case a single negative test is sufficient for a classification as uninfected. 
Therefore we may choose $\alpha>0$ sufficiently small. One indeed realizes that for each $\alpha$ we can choose $\eps:=\eps(\alpha)> 0$ appropriately, such that the bounds given in  \Thm~\ref{thm_COMP} recover the noiseless case.
\end{proof}}
We also recover the noiseless bounds for the {\tt DD} algorithm as stated in \cite{Johnson_2019}.
\change{
\begin{corollary}[{\tt DD} in the noiseless setting] \label{cor_noiseless_DD}
Let $p,q\rightarrow 0, 0<\theta<1$ and $\eps>0$. Further, let
\begin{align*}
   m_{{\tt DD}, \text{noiseless}} = \max \cbc{1, \frac{\theta}{1-\theta}} \frac{1}{\log^2 2} k \log(n/k).
\end{align*}
Furthermore let $m_{\tt DD}(n,\theta,p,q)$ be defined as in \Thm~\ref{thm_DD}
Then we find 
$$m_{\tt DD}(n,\theta,p,q)\underset{p,q\rightarrow 0}{\longrightarrow} m_{{\tt DD}, \text{noiseless}}$$
\end{corollary}
\begin{proof}
We start with taking $c_1(\alpha,d),c_2(\alpha,d),c_3(\beta,d)$ and $c_4(\alpha,\beta,d)$ as defined in Theorem~\ref{thm_DD}. First of all we take $c_4(\alpha,\beta,d)$. By assumption we find $\beta>0$ and therefore the indicator is 1 as soon as we let $p\rightarrow 0$. Furthermore for $p\rightarrow 0$ we get $-\log(p)\rightarrow \infty$ and find $c_4\rightarrow 0$. Second of all we take $c_1(\alpha,d)$. With a similar argument as before we see that $c_1(\alpha,d)\rightarrow 0$ for $q\rightarrow 0$ as in this case we find $-\log(q)\rightarrow \infty$. Therefore we are left with $c_2(\beta,d)$ and $c_3(\alpha,\beta,d)$. Again, we use the well known fact that in the noiseless case $d=\log(2)$ is the optimal choice. Therefore with $p,q\rightarrow 0$ the two remaining bounds read as follows:
\begin{align*}
    c_2(\alpha,\log(2))=\frac{1}{\log(2)\bc{\log(2)+\alpha\log(\alpha)+(1-\alpha)\log(1-\alpha)}}\\
    c_3(\alpha, \beta,\log(2))=\frac{\theta}{(1-\theta)}\frac{1}{\log(2)\bc{\log(2)+\beta\log(\beta)+(1-\beta)\log(1-\beta)}}
\end{align*}
Again we see that $x\log(x)+(1-x)\log(1-x)$ is negative for $x\in(0,1)$. Therefore we find
\begin{align*}
    c_2(\alpha,\log(2))>c_2(0,\log(2))\\
    c_3(\alpha,\log(2))>c_3(0,\log(2))
\end{align*}
Now as as before in this case again a single negative test as well as a single test with only already classified uninfected individuals is sufficient. Therefore we can choose $\alpha,\beta>0$ sufficiently small. One indeed realizes that for each $\alpha,\beta>0$ one can choose $\eps:=\eps(\alpha,\beta)$ appropriately such that the bounds of \Thm~\ref{thm_DD} recover the noiseless case.
\end{proof}
}

\subsubsection{The Z channel}

In the Z channel, we have $p=0$ and $q>0$, i.e. no truly negative test displays a positive test result. Thus, \gebA{ in this case finding one positive test with only one unclassified individual is a clear indication},  \change{therefore we again can choose $\beta>0$ sufficiently small} 
and remain agnostic about $\alpha$ and $d$. The bounds for {\tt COMP} and {\tt DD} thus read as follows.

\begin{corollary}[Noisy {\tt COMP} for the Z channel] \label{cor_Z_COMP}
Let $p\change{\rightarrow}0, 0<q<1, 0<\theta<1$ and $\eps>0$. Further, let
\begin{align*}
m_{{\tt COMP}, Z} &= \min_{\alpha, d} \max \cbc{b_1(\alpha, d), b_2(\alpha, d)} k \log(n/k)\\
   \text{ with } \quad b_1(\alpha, d) &= \frac{\theta}{1-\theta} \frac{1}{d \KL{\alpha}{q}} \quad \text{ and } \quad b_2(\alpha, d) = \frac{1}{1-\theta} \frac{1}{d \KL{\alpha}{e^{-d} + \bc{1-e^{-d}}q}}.
\end{align*}
If $m > (1+\eps) m_{{\tt COMP}, Z}$, noisy {\tt COMP} will recover $\SIGMA$ \whp\ given $\G, \hat \SIGMA$.
\end{corollary}

\begin{corollary}[Noisy {\tt DD} for the Z channel] \label{cor_Z_DD}
Let $p\change{\rightarrow}0, 0<q<1, 0<\theta<1$ and $\eps>0$. Further, let
\begin{align*}
m_{{\tt DD}, Z} &= \min_{\alpha, d} \max \cbc{c_1(\alpha, d), c_2(\alpha, d), c_3(d)} k \log(n/k)\\
    \text{ with } \quad &c_1(\alpha, d) = \frac{\theta}{1-\theta} \frac{1}{d \KL{\alpha}{q}}
    \quad \text{ and } \quad c_2(\alpha, d) = \frac{1}{d \KL{\alpha}{e^{-d}+\bc{1-e^{-d}}q}} \\
    \quad \text{ and } \quad &c_3(d) = \frac{\theta}{1-\theta} \frac{1}{-d\log \bc{1-e^{-d}(1-q)}}. 
\end{align*}
If $m > (1+\eps) m_{{\tt DD}, Z}$, noisy {\tt DD} will recover $\SIGMA$ \whp\ given $\G, \hat \SIGMA$.
\end{corollary}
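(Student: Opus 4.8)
The plan is to obtain \Cor~\ref{cor_Z_DD} as a direct specialization of \Thm~\ref{thm_DD} to the Z channel ($p=0$), combined with the choice $\beta = 1/\Delta$ that is justified by the structural fact that a truly negative test can never be displayed positive when $p=0$. First I would substitute $p=0$ into $w = e^{-d}p + (1-e^{-d})(1-q)$ to get $w = (1-e^{-d})(1-q)$, and compute $1-w = e^{-d} + (1-e^{-d})q$. This turns $c_2(\alpha,d) = \frac{1}{d\KL{\alpha}{1-w}}$ into the stated $\frac{1}{d\KL{\alpha}{e^{-d}+(1-e^{-d})q}}$, while $c_1$, which does not involve $p$, is unchanged.

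The key conceptual step is to argue that the fourth term $c_4$ drops out of the maximum. Combinatorially, with $p=0$ any test whose only unclassified member is a healthy individual $x \in \zeroplus$ is truly negative, and a truly negative test can never be flipped to display positive; hence $\vP_x = 0$ for every $x \in \zeroplus$, so the second {\tt DD} step never produces a false positive no matter how small the threshold $\beta\Delta \geq 1$ is taken. At the level of the formula this is reflected in the term $\vecone\cbc{\beta > \frac{z e^{-d} p}{w}} z \KL{\frac{\beta}{z}}{\frac{e^{-d}p}{w}}$ appearing in $c_4$: once $p=0$ and $\beta = 1/\Delta > 0$, the indicator is active and $\KL{\frac{\beta}{z}}{0} = +\infty$, so the denominator of $c_4$ is infinite and $c_4 = 0$. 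Thus $c_4$ never attains the maximum and may be discarded.

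Next I would treat $c_3$ under $\beta = 1/\Delta$. Since $\Delta = cd\log(n/k) \to \infty$, we have $\beta \to 0$, and by the continuity convention for the Kullback--Leibler divergence, $\KL{\beta}{(1-q)e^{-d}} \to \KL{0}{(1-q)e^{-d}} = -\log\bc{1-(1-q)e^{-d}}$, giving $c_3(d) = \frac{\theta}{1-\theta}\frac{1}{-d\log\bc{1-e^{-d}(1-q)}}$ as claimed. Moreover, since $\KL{\beta}{(1-q)e^{-d}}$ is decreasing in $\beta$ on $\bc{0,(1-q)e^{-d}}$, the choice $\beta = 1/\Delta$ is the smallest feasible threshold and hence optimal, so no generality is lost by fixing $\beta$ rather than minimizing over it. Assembling these reductions, $\max\cbc{c_1,c_2,c_3,c_4}$ collapses to $\max\cbc{c_1,c_2,c_3}$ and the minimization runs over $\alpha,d$ only, which is exactly $m_{{\tt DD},Z}$; invoking \Thm~\ref{thm_DD} then completes the proof.

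The main obstacle I anticipate is the careful handling of the $\beta = 1/\Delta$ limit: one must verify that replacing the $n$-dependent quantity $\beta = 1/\Delta$ by its limiting constant inside $c_3$ is consistent with the $(1+\eps)$ slack of \Thm~\ref{thm_DD}, i.e. that the $o(1)$ gap between $\KL{1/\Delta}{(1-q)e^{-d}}$ and its limit is absorbed into the arbitrarily small $\eps$, and that the vanishing of $c_4$ is genuine rather than a limiting artifact. Both points are routine but warrant explicit checking.
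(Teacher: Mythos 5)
Your proposal is correct and follows essentially the same route as the paper's own (much terser) proof: both specialize \Thm~\ref{thm_DD} to $p=0$ with $\beta=1/\Delta$, observe that $c_4$ vanishes because the term $\KL{\beta/z}{e^{-d}p/w}$ diverges when $p=0$, and obtain $c_3$ from $\KL{1/\Delta}{e^{-d}(1-q)} = -\log\bc{1-e^{-d}(1-q)} + o(1)$. Your additional remarks on the optimality of the choice $\beta=1/\Delta$ and on absorbing the $o(1)$ gap into the $(1+\eps)$ slack are sound elaborations of points the paper leaves implicit.
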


\begin{proof}
The bounds $c_1$ and $c_2$ follow directly from \Thm~\ref{thm_DD} by  \change{letting $p\rightarrow 0$. An immediate consequence of $p\rightarrow 0$ is that due to the fact that $-\log(p)\rightarrow \infty$ and one finds that $c_4\rightarrow 0$, thus being trivial in this case.} For $c_3$ we use the fact that \change{we can choose $\beta>0$ sufficiently small  we find $\KL{\alpha}{e^{-d}(1-q)} = -\log \bc{1-e^{-d}(1-q)} - \delta(\beta)$ for $\delta(\beta)>0$.  Note that by definition of the noise model, we may choose an arbitrary $\beta_{\min}$ very close to zero and as a consequence $\beta=\beta_{\min}$ leading to $\delta(\beta)\rightarrow \delta_{\min}$. The assertion follows as for each $\beta$ we may choose $\eps:=\eps(\beta)>0$ such that $(1+\eps)>\bc{1+\eps\bc{\beta_{\min}}}$.}
\end{proof}

An illustration of the bounds from Corollary \ref{cor_Z_COMP} and \ref{cor_Z_DD} for sample values of $q$ is shown in Figure \ref{fig_Z_channel}.

\subsubsection{Reverse Z channel}

In the reverse Z channel, we have $q=0$ and $p>0$, i.e. no truly positive test displays a negative test result. Thus, we \change{may choose $\alpha>0$ sufficiently small }
and remain agnostic about $\beta$ and $d$. The bounds for the noisy {\tt COMP} and {\tt DD} thus read as follows.

\begin{corollary}[Noisy {\tt COMP} for the Reverse Z channel] \label{cor_RZ_COMP}
Let $0<p<1, q\rightarrow 0, 0<\theta<1$ and $\eps>0$. Further, let
\begin{align*}
    m_{{\tt COMP}, \text{rev Z}} &= \frac{1}{1-\theta} \min_d \cbc{\frac{1}{-d \log \bc{1- e^{-d}(1-p)}}} k \log(n/k).
\end{align*}
If $m > (1+\eps) m_{{\tt COMP}, \text{rev Z}}$, noisy {\tt COMP} will recover $\SIGMA$ \whp\ given $\G, \hat \SIGMA$.
\end{corollary}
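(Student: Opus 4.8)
The plan is to obtain Corollary \ref{cor_RZ_COMP} as a direct specialisation of \Thm~\ref{thm_COMP} to the reverse Z channel $q=0$, followed by a one-parameter optimisation of the threshold $\alpha$. First I would set $q=0$ in the two bounds of the theorem. For $b_1(\alpha,d)=\frac{\theta}{1-\theta}\frac{1}{d\KL{\alpha}{0}}$, the continuity convention of the excerpt gives $\KL{\alpha}{0}=+\infty$ for every $\alpha>0$ (the term $\alpha\log(\alpha/s)$ diverges as $s\downarrow 0$), so $b_1\equiv 0$. Combinatorially this is precisely the statement that when $q=0$ an infected individual renders each of its tests truly positive, and a truly positive test is displayed positive with probability $1-q=1$; hence no infected individual ever appears in a single displayed negative test, and the half of the {\tt COMP} requirement guarding against misclassifying the infected is satisfied for free. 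Consequently $\max\{b_1,b_2\}$ collapses to $b_2$ alone.

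It then remains to optimise $b_2(\alpha,d)=\frac{1}{1-\theta}\frac{1}{d\KL{\alpha}{e^{-d}(1-p)}}$ over the admissible range $\alpha\in(0,e^{-d}(1-p))$, which is the same as maximising $d\KL{\alpha}{e^{-d}(1-p)}$. Writing $s=e^{-d}(1-p)$, I would use that $\alpha\mapsto\KL{\alpha}{s}$ is convex, vanishes at $\alpha=s$, and is therefore strictly decreasing on $(0,s)$, so its supremum over the interval is the left-endpoint value $\KL{0}{s}=-\log(1-s)$. This supremum is realised by the algorithmic threshold $\alpha=1/\Delta$: since $\Delta\to\infty$ we have $\KL{1/\Delta}{s}=-\log(1-s)+o(1)$, exactly the estimate already invoked in the proof of Corollary \ref{cor_Z_DD}. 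Substituting yields $b_2=\frac{1}{1-\theta}\frac{1}{-d\log\bc{1-e^{-d}(1-p)}}+o(1)$, and minimising the surviving expression over $d$ produces the claimed formula for $m_{{\tt COMP},\text{rev Z}}$.

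The only genuinely delicate point I anticipate is the interface between the constant-$\alpha$ optimisation of the theorem and the vanishing sequence $\alpha=1/\Delta$: the supremum of $\KL{\alpha}{s}$ over the open interval is not attained, so I would argue that the choice $\alpha=1/\Delta$ keeps the feasibility constraint $\alpha\in(0,e^{-d}(1-p))$ valid for all large $n$ while driving $b_2$ to its boundary value with an $o(1)$ error that is harmlessly absorbed into the $(1+\eps)$ slack of \Thm~\ref{thm_COMP}. Once this limiting argument is made precise, the remaining minimisation over $d$ is a routine one-dimensional exercise and reproduces the stated expression verbatim.
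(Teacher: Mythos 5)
Your proposal is correct and follows essentially the same route as the paper's own (very terse) proof: specialise \Thm~\ref{thm_COMP} to $q=0$, note that $b_1$ vanishes because $\KL{\alpha}{0}$ diverges, and evaluate $b_2$ at the boundary threshold via $\KL{1/\Delta}{e^{-d}(1-p)} = -\log\bc{1-e^{-d}(1-p)} + o(1)$. Your additional care about the non-attainment of the supremum over constant $\alpha$ (absorbing the $o(1)$ into the $(1+\eps)$ slack, or equivalently fixing $\alpha$ small depending on $\eps$) is a point the paper glosses over, and it is handled correctly.
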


\begin{proof}
The corollary follows from \Thm~\ref{thm_COMP} and the fact that \change{ for $q\rightarrow 0$ one finds that $\KL{\alpha}{0}$ diverges, Thereby $b_1\rightarrow 0$ just gives a trivial bound in this case. Furthermore for sufficiently small $\alpha>0$ we get $\KL{\alpha}{e^{-d}(1-p)} \rightarrow - \log \bc{1-e^{-d}(1-p)}-\delta(\alpha)$. Due to the noise assumption, we may choose an arbitrary $\alpha_{\min}$ very close to zero and $\alpha=\alpha_{\min}$ which leads to $\delta(\alpha)\rightarrow \delta\bc{\alpha_{\min}}$. The assertion follows by choosing $\eps:=\eps(\alpha)>0$ such that $(1+\eps)>\bc{1+\eps\bc{\alpha_{\min}}}$.}
\end{proof}
%
Note that Corollary \ref{cor_RZ_COMP} does not yield an immediate closed form expression for the optimal value of $d$.

\begin{corollary}[Noisy {\tt DD} in the Reverse Z channel]
\label{cor_RZ_DD}
Let \hspace{1mm} $0<p<1, q \rightarrow 0, 0<\theta<1$ and $\eps>0$. Further, let
\begin{align*}
 m_{{\tt DD}, \text{rev Z}} &= \min_{\beta, d} \max \cbc{c_2(d), c_3(\beta, d), c_4(\beta, d)} k \log(n/k)\\
    \text{ with } &c_2(d) = \frac{1}{-d \log \bc{1-e^{-d}(1-p)}}
    \quad \text{ and } \quad c_3(\beta, d) = \frac{\theta}{1-\theta} \frac{1}{d \KL{\beta}{e^{-d}}} \\
    \text{ and } &c_4(\beta, d) = \frac{1}{1-\theta} \frac{1}{d \bc{-\log\bc{1-e^{-d}(1-p)}+\KL{\beta}{\frac{e^{-d}p}{e^{-d}p+\bc{1-e^{-d}}}}}} 
\end{align*}
If $m > (1+\eps) m_{{\tt DD}, \text{rev Z}}$, noisy {\tt DD} will recover $\SIGMA$ \whp\ given $\G, \hat \SIGMA$.
\end{corollary}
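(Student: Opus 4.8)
The plan is to derive all three surviving bounds directly from \Thm~\ref{thm_DD} by substituting $q=0$ and fixing $\alpha = 1/\Delta$, reusing the boundary conventions for the Kullback--Leibler divergence already exploited in the proofs of Corollaries~\ref{cor_Z_DD} and~\ref{cor_RZ_COMP}. The justification for taking $\alpha = 1/\Delta$ is structural: when $q=0$ no truly positive test is ever flipped to a displayed negative, so a single displayed negative test certifies an individual as healthy exactly as in the noiseless case, and hence the optimal first-step threshold is the smallest admissible one, $\alpha = 1/\Delta \to 0$. With $q=0$ one first records the two quantities that recur throughout, namely $w = e^{-d}p + \bc{1-e^{-d}}$ and $1-w = e^{-d}(1-p)$.

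First I would dispose of $c_1$: since $\KL{1/\Delta}{0}$ diverges (the first argument is positive while the second is $0$), we get $c_1(\alpha,d)\to 0$ and it drops out of the maximum. Next, $c_2$ and $c_3$ follow from routine substitution together with the limit $\alpha = 1/\Delta \to 0$. Using the continuity convention $\KL{0}{s} = -\log(1-s)$ gives $\KL{1/\Delta}{1-w} = -\log\bc{1-e^{-d}(1-p)} + o(1)$, which yields the stated $c_2(d)$; and setting $q=0$ in $\KL{\beta}{(1-q)e^{-d}}$ immediately gives the stated $c_3(\beta,d)$, with the $o(1)$ correction from replacing $\alpha$ by $0$ handled exactly as in Corollary~\ref{cor_Z_DD}.

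The substantive step is the reduction of $c_4$. Because $\alpha = 1/\Delta \to 0$, the interval $[1-\alpha,1]$ over which $c_4$ maximizes $z$ collapses to the single point $z=1$, so by continuity of the bracketed expression at $z=1$ it suffices to evaluate there (up to an $o(1)$ that does not affect the leading prefactor). Using the boundary convention $\KL{1}{s} = -\log s$ gives $\KL{1}{w} = -\log\bc{1-e^{-d}(1-p)}$; the indicator condition $\beta > z e^{-d}p/w$ becomes $\beta > e^{-d}p/\bc{e^{-d}p + (1-e^{-d})}$ at $z=1$; and the correction term $z\,\KL{\beta/z}{e^{-d}p/w}$ evaluates to $\KL{\beta}{e^{-d}p/\bc{e^{-d}p + (1-e^{-d})}}$. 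Assembling these reproduces the stated $c_4(\beta,d)$.

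The main obstacle I anticipate is the careful bookkeeping in $c_4$. One must confirm that the $o(1)$ errors incurred by replacing $\alpha = 1/\Delta$ with its limit $0$ --- both in collapsing the $z$-range to $z=1$ and in the divergence evaluations --- are genuinely negligible at leading order, and one must argue that the optimizing $(\beta,d)$ activates the indicator, so that the form of $c_4$ carrying the extra $\KL{\beta}{\cdot}$ term is the operative one at the minimum. The computation $e^{-d}p/\bc{e^{-d}p + (1-e^{-d})} < e^{-d}$, valid precisely because $p<1$, shows that the admissible range $\beta \in (0, e^{-d})$ straddles the indicator threshold, so this activation must be verified rather than assumed.
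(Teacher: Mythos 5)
Your derivation is correct and follows the paper's proof essentially step for step: with $q=0$ and $\alpha=1/\Delta$, the paper likewise discards $c_1$ via the divergence of $\KL{1/\Delta}{0}$, obtains $c_2$ and $c_3$ by the same boundary conventions for the Kullback--Leibler divergence, and collapses the maximisation in $c_4$ to the single point $z=1$ because $1-\alpha=1-1/\Delta$. The indicator-activation issue you flag at the end is not resolved in the paper either --- its proof silently drops the indicator $\vecone\cbc{\beta > \frac{e^{-d}p}{e^{-d}p+1-e^{-d}}}$, i.e.\ implicitly restricts $\beta$ to the range where it is active --- so your attempt is, if anything, more explicit about this point than the published argument.
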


\begin{proof}
\change{First of all we assume $q\rightarrow 0$. Therefore we find $c_1\rightarrow 0$ as $-\log(q)\rightarrow \infty$.} The bounds $c_2, c_3$ follow from \Thm~\ref{thm_DD} and the same manipulations as above. \change{ For $c_4$, we again see that by definition of the noise model we may choose $\alpha>0$ as close to zero as we like. Therefore we get $(1-\alpha)$ close to 1, which leads to $z\rightarrow 1$. The assertion follows as for each $\alpha$ we can choose $\eps:=\eps(\alpha)>0$ such that $(1+\eps)>\bc{1+\eps\bc{\alpha_{\min}}}$.}
\end{proof}

An illustration of the bounds of Corollary \ref{cor_RZ_COMP} and \ref{cor_RZ_DD} for sample values of $p$ is shown in Figure \ref{fig_rev_Z_channel}.

\subsubsection{Binary Symmetric Channel}

In the Binary Symmetric Channel (BSC), we set $p=q>0$. Even though information-theoretic arguments would suggest setting $d=\log 2$, we formulate the expression below with general $d$. We also keep the threshold parameters $\alpha$ and $\beta$. The bounds for the noisy {\tt DD} and {\tt COMP} only simplify slightly.

\begin{corollary}[Noisy {\tt COMP} in the Binary Symmetric Channel] \label{cor_BSC_COMP}
Let $0<p=q<1/2, 0<\theta<1$ and $\eps>0$. Further, let
\begin{align*}
 &m_{{\tt COMP}, \text{BSC}} = \min_{\alpha, d} \max \cbc{b_1(\alpha, d), b_2(\alpha, d)} k \log(n/k)\\
  \text{ with } \quad  &b_1(\alpha, d) = \frac{\theta}{1-\theta} \frac{1}{d \KL{\alpha}{p}}
  \quad \text{ and } \quad  b_2(\alpha, d) = \frac{1}{1-\theta} \frac{1}{d \KL{\alpha}{e^{-d}+p-2e^{-d} p}} 
   .
\end{align*}
If $m > (1+\eps) m_{{\tt COMP}, \text{BSC}}$, noisy {\tt COMP} will recover $\SIGMA$ \whp\ given $\G, \hat \SIGMA$.
\end{corollary}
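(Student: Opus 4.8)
The plan is to derive this corollary as an immediate specialisation of \Thm~\ref{thm_COMP} to the symmetric case $p=q$, so that essentially no new probabilistic argument is required. First I would check that the hypotheses of \Thm~\ref{thm_COMP} are met: under $0<p=q<1/2$ we have $p,q\geq 0$ and $p+q=2p<1$, while the remaining conditions on $d$, $\alpha$ and $\theta$ carry over unchanged. Invoking the theorem then yields recovery \whp\ whenever $m\geq(1+\eps)\mcomp$, with $\mcomp=\min_{\alpha,d}\max\cbc{b_1(\alpha,d),b_2(\alpha,d)}k\log(n/k)$ and $b_1,b_2$ as defined there.

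The only substantive step is to verify that the two bound functions collapse to the stated forms under $q=p$. For $b_1(\alpha,d)=\frac{\theta}{1-\theta}\frac{1}{d\,\KL{\alpha}{q}}$ the substitution $q=p$ gives $\frac{\theta}{1-\theta}\frac{1}{d\,\KL{\alpha}{p}}$ directly. For $b_2$ I would simplify the second argument of the divergence, namely $e^{-d}(1-p)+\bc{1-e^{-d}}q$; setting $q=p$ produces $e^{-d}(1-p)+\bc{1-e^{-d}}p=e^{-d}+p-2e^{-d}p$, which matches the claimed expression for $b_2(\alpha,d)$.

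Finally I would record that the admissibility interval $\alpha\in\bc{q,\ e^{-d}(1-p)+\bc{1-e^{-d}}q}$ inherited from \Thm~\ref{thm_COMP} becomes $\alpha\in\bc{p,\ e^{-d}+p-2e^{-d}p}$ under the same rearrangement, so the optimisation domain is exactly the symmetric one. I do not anticipate any genuine obstacle here: the corollary is a one-line algebraic consequence of the general theorem, and the entire proof reduces to the substitution $q=p$ together with the elementary simplification of the divergence argument displayed above.
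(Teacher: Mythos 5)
Your proposal is correct and coincides with the paper's treatment: the paper states this corollary without a separate proof, precisely because it is the immediate substitution $q=p$ into \Thm~\ref{thm_COMP}, with the only algebra being $e^{-d}(1-p)+\bc{1-e^{-d}}p=e^{-d}+p-2e^{-d}p$, which you verify. Your additional checks (that $p+q=2p<1$ under $p<1/2$, and that the admissibility interval for $\alpha$ transforms accordingly) are exactly the right sanity checks and nothing more is needed.
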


\begin{corollary}[Noisy {\tt DD} in the Binary Symmetric Channel]
\label{cor_BSC_DD}
Let $0<p=q<1/2, 0<\theta<1$ and $\eps>0$ and define $v=1-e^{-d}-p+2e^{-d}p$. Further, let
\begin{align*}
 &m_{{\tt DD}, \text{BSC}} = \min_{\alpha, \beta, d} \max \cbc{c_1(\alpha, d), c_2(\alpha, d), c_3(\beta, d), c_4(\alpha, \beta, d)} k \log(n/k)\\
    \text{ with }\quad &c_1(\alpha, d) = \frac{\theta}{1-\theta} \frac{1}{d \KL{\alpha}{p}}
    \quad \text{ and } \quad c_2(\alpha, d) = \frac{1}{d \KL{\alpha}{e^{-d}+p-2e^{-d}p}} \\
    \text{ and }\quad &c_3(\beta, d) = \frac{\theta}{1-\theta} \frac{1}{d \KL{\beta}{(1-p) e^{-d}}} \\
    \text{ and } \quad &c_4(\alpha, \beta, d) = \max_{1-\alpha \leq z \leq 1} \cbc{ \frac{1}{1-\theta} \frac{1}{d \bc{ \KL{z}{v} + \vecone \cbc{\beta > \frac{ze^{-d}p}{v}} z \KL{\frac \beta z}{\frac{e^{-d}p}{v}}}}} 
   .
\end{align*}
If $m > (1+\eps) m_{{\tt DD}, \text{BSC}}$, noisy {\tt DD} will recover $\SIGMA$ \whp\ given $\G, \hat \SIGMA$.
\end{corollary}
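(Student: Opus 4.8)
The plan is to derive the corollary as a direct specialization of \Thm~\ref{thm_DD} to the symmetric case $p = q$. No new probabilistic estimates are needed: the entire content is the observation that the four optimization terms $c_1, c_2, c_3, c_4$ of \Thm~\ref{thm_DD} collapse to the stated forms once we substitute $q = p$ and correctly identify the auxiliary quantity $v$. Note first that the hypothesis $0 < p = q < 1/2$ is exactly the condition $p + q < 1$ of \Thm~\ref{thm_DD} specialized to $q = p$, so the general theorem applies verbatim, and the admissible ranges $\alpha \in (q, e^{-d}(1-p) + \bc{1-e^{-d}}q)$ and $\beta \in (0, e^{-d}(1-q))$ carry over after setting $q = p$.

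The one computation I would record explicitly is the algebraic identity that drives everything. Recall from \Thm~\ref{thm_DD} that $w = e^{-d}p + (1-e^{-d})(1-q)$. Setting $q = p$ and expanding gives
\begin{align*}
w = e^{-d}p + (1-e^{-d})(1-p) = 1 - e^{-d} - p + 2e^{-d}p = v,
\end{align*}
so the quantity $w$ of the theorem coincides with the quantity $v$ defined in the corollary, and consequently $1 - w = 1 - v = e^{-d} + p - 2e^{-d}p$. With this identity in hand the substitution $q = p$ into each term of \Thm~\ref{thm_DD} is mechanical: $c_1(\alpha, d) = \frac{\theta}{1-\theta}\frac{1}{d\KL{\alpha}{q}}$ becomes $\frac{\theta}{1-\theta}\frac{1}{d\KL{\alpha}{p}}$; $c_2(\alpha, d) = \frac{1}{d\KL{\alpha}{1-w}}$ becomes $\frac{1}{d\KL{\alpha}{e^{-d}+p-2e^{-d}p}}$ via $1 - w = e^{-d}+p-2e^{-d}p$; $c_3(\beta, d) = \frac{\theta}{1-\theta}\frac{1}{d\KL{\beta}{(1-q)e^{-d}}}$ becomes $\frac{\theta}{1-\theta}\frac{1}{d\KL{\beta}{(1-p)e^{-d}}}$; and in $c_4(\alpha, \beta, d)$ every occurrence of $w$ is replaced by $v$ and of $q$ by $p$, giving precisely the stated expression, including the indicator threshold $\beta > ze^{-d}p/v$. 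Taking $\min_{\alpha, \beta, d}$ of the maximum of these four terms therefore reproduces $m_{{\tt DD}, \text{BSC}}$.

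Since the argument is a pure specialization, there is no genuine obstacle: the only step requiring any care is the bookkeeping identity $w = v$, after which the recovery guarantee for $m > (1+\eps) m_{{\tt DD}, \text{BSC}}$ is inherited directly from the corresponding conclusion of \Thm~\ref{thm_DD}.
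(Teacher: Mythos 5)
Your proposal is correct and matches the paper's (implicit) treatment exactly: the corollary is stated as a direct specialization of \Thm~\ref{thm_DD} to $p=q$, and your key bookkeeping identity $w = e^{-d}p + (1-e^{-d})(1-p) = 1-e^{-d}-p+2e^{-d}p = v$ (hence $1-w = e^{-d}+p-2e^{-d}p$) is precisely what makes each of $c_1,\dots,c_4$ collapse to the stated forms, with $0<p=q<1/2$ being the condition $p+q<1$. Nothing further is needed.
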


An illustration of the bounds of Corollary \ref{cor_BSC_COMP} and \ref{cor_BSC_DD} is shown in Figure \ref{fig_BS_channel}.

\subsection{Comparison of  noisy {\tt COMP} and {\tt DD}} \label{sec:better}

An obvious next question is to find conditions under which the noisy {\tt DD} algorithm requires fewer tests than the noisy {\tt COMP}. For the noiseless setting, it can be easily shown that {\tt DD} provably outperforms {\tt COMP} for all $\theta \in (0,1)$. For the noisy case, matters are slightly more complicated. 

Recall that noisy {\tt COMP} classifies all individuals appearing in less than $\alpha \Delta$ displayed negative tests as infected while noisy {\tt DD} additionally requires such individuals to appear in more than $\beta \Delta$ displayed positive tests as the only yet unclassified individual. Thus, it might well be that an infected individual is classified correctly by noisy {\tt COMP}, while it is missed by the noisy {\tt DD} algorithm. 

That being said, our simulations indicate that noisy {\tt DD} generally \gebA{requires fewer tests than} noisy {\tt COMP}, but for the reason mentioned above we can only prove that  for the reverse Z channel while remaining agnostic about the Z channel and the Binary Symmetric Channel, as the next proposition evinces.
\begin{proposition} \label{prop_comp_dd}
For all $p,q \geq 0$ with $p+q<1$ there exists a $d^* \in (0,\infty)$ such that $\mcomp \geq \mdd$ as long as $e^{-d^*}p \geq q$.
\end{proposition}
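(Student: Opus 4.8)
The plan is to realise the minimisation defining $\mdd$ at the \emph{same} density and threshold that are optimal for $\mcomp$, and then to check that each of the four {\tt DD} exponents is dominated by $\max\cbc{b_1,b_2}$. Concretely, I would let $(\alpha^*,d^*)$ attain (or approach) the minimum defining $\mcomp$, so that $\mcomp=\max\cbc{b_1(\alpha^*,d^*),b_2(\alpha^*,d^*)}\,k\log(n/k)$, and take this $d^*$ as the density in the statement. I then feed the triple $(\alpha,\beta,d)=(\alpha^*,\alpha^*,d^*)$ into $\mdd$; since $\mdd$ is a minimum over $(\alpha,\beta,d)$, it suffices to prove $\max\cbc{c_1,c_2,c_3,c_4}(\alpha^*,\alpha^*,d^*)\le\max\cbc{b_1,b_2}(\alpha^*,d^*)$.

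The first observation is an exact algebraic equivalence. Writing $1-w=e^{-d}(1-p)+\bc{1-e^{-d}}q$, a one-line rearrangement shows that $e^{-d^*}p\ge q$ holds \emph{if and only if} $(1-q)e^{-d^*}\ge 1-w$. This is the hypothesis in a usable form, since $(1-q)e^{-d^*}$ is the second argument of the relative-entropy term in $c_3$, whereas $1-w$ is the second argument in $c_2$ and $b_2$. Two of the four comparisons are then immediate and hold for any parameters: by definition $c_1(\alpha^*,d^*)=b_1(\alpha^*,d^*)$, and $c_2(\alpha^*,d^*)=(1-\theta)\,b_2(\alpha^*,d^*)\le b_2(\alpha^*,d^*)$. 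For $c_4$ I would use that the indicator contribution to its denominator is a nonnegative multiple of a Kullback--Leibler divergence, so deleting it can only enlarge $c_4$; thus $c_4(\alpha^*,\alpha^*,d^*)\le \frac{1}{1-\theta}\frac{1}{d^*}\max_{1-\alpha^*\le z\le 1}\frac{1}{\KL{z}{w}}$. Because $\alpha^*<1-w$ forces $1-\alpha^*>w$, the map $z\mapsto\KL{z}{w}$ is increasing on $[1-\alpha^*,1]$, so its minimum there is at $z=1-\alpha^*$; the symmetry $\KL{1-\alpha^*}{w}=\KL{\alpha^*}{1-w}$ then gives $c_4(\alpha^*,\alpha^*,d^*)\le b_2(\alpha^*,d^*)$.

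The crux, and the only place the hypothesis enters, is the bound on $c_3$. With $\beta=\alpha^*$ one has $c_3(\alpha^*,d^*)=\frac{\theta}{1-\theta}\frac{1}{d^*\KL{\alpha^*}{(1-q)e^{-d^*}}}$, and the choice is feasible since $q<\alpha^*<1-w\le(1-q)e^{-d^*}$ places $\beta=\alpha^*$ in its admissible range. The same chain of inequalities shows that both $1-w$ and $(1-q)e^{-d^*}$ lie to the right of $\alpha^*$, where $s\mapsto\KL{\alpha^*}{s}$ is increasing; invoking $(1-q)e^{-d^*}\ge 1-w$ from the equivalence above yields $\KL{\alpha^*}{(1-q)e^{-d^*}}\ge\KL{\alpha^*}{1-w}$, whence $c_3(\alpha^*,d^*)\le \theta\,b_2(\alpha^*,d^*)\le b_2(\alpha^*,d^*)$. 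Combining the four estimates gives $\max\cbc{c_1,c_2,c_3,c_4}(\alpha^*,\alpha^*,d^*)\le\max\cbc{b_1,b_2}(\alpha^*,d^*)$; multiplying by $k\log(n/k)$ and noting that the right-hand side equals $\mcomp$ while the left-hand side upper-bounds $\mdd$ produces $\mdd\le\mcomp$. I expect the main obstacle to be recognising that the hypothesis $e^{-d^*}p\ge q$ is exactly the ordering $(1-q)e^{-d^*}\ge 1-w$ of the two channels' positive-detection parameters, which is what allows the single choice $\beta=\alpha^*$ to control $c_3$ and $c_4$ against $b_2$ at once; the only remaining wrinkle is that the infimum defining $\mcomp$ may be attained only in the limit, so one runs the argument at a near-optimal $(\alpha^*,d^*)$ and lets the approximation error vanish.
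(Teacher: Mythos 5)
Your proposal is correct, and its skeleton coincides with the paper's: evaluate the {\tt DD} exponents at the {\tt COMP}-optimal pair $(\alpha^*,d^*)$, observe $c_1=b_1$ and $c_2=(1-\theta)\,b_2\le b_2$, dominate $c_4$ by $b_2$ by discarding the nonnegative indicator term and combining the symmetry $\KL{1-\alpha^*}{w}=\KL{\alpha^*}{1-w}$ with monotonicity of $z\mapsto\KL{z}{w}$ for $z\ge w$, and let the hypothesis $e^{-d^*}p\ge q$ enter only through the $c_3$ comparison. The one genuine difference is how $c_3$ is treated: the paper takes $\beta^*=1/\Delta$ (effectively $\beta\to 0$), which minimises $c_3$ and makes feasibility of $\beta$ automatic, and then compares via the lossier chain $\theta\KL{\alpha^*}{1-w}\le-\log w\le-\log\bc{1-e^{-d^*}(1-q)}$; you instead set $\beta=\alpha^*$ --- feasible precisely because the hypothesis, rewritten as $(1-q)e^{-d^*}\ge 1-w$, pushes $e^{-d^*}(1-q)$ above $\alpha^*$ --- and invoke monotonicity of $s\mapsto\KL{\alpha^*}{s}$ for $s\ge\alpha^*$ to get $\KL{\alpha^*}{(1-q)e^{-d^*}}\ge\KL{\alpha^*}{1-w}$, hence $c_3\le\theta\,b_2\le b_2$. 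Both routes bottom out at exactly the same inequality $(1-q)e^{-d^*}\ge 1-w\Leftrightarrow e^{-d^*}p\ge q$, so neither gives a weaker condition; yours is marginally cleaner in its intermediate estimates (it retains the factor $\theta$ and avoids replacing $\KL{\alpha^*}{\,\cdot\,}$ by its $\alpha\to 0$ limit), while the paper's $\beta\to0$ choice sidesteps any feasibility discussion for $\beta$. Your closing remark about running the argument at a near-optimal $(\alpha^*,d^*)$ also patches a point the paper glosses over, namely attainment of the minimum defining $\mcomp$.
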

In terms of the common noise channels \Prop~\ref{prop_comp_dd} gives the following corollary.
\begin{corollary}
In the reverse Z channel, $\mcomp \geq \mdd$.
\end{corollary}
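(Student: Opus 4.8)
The plan is to compare the two bounds term by term under the hypothesis $e^{-d^*}p \geq q$. Recall that $\mcomp = \min_{\alpha,d}\max\{b_1,b_2\}$ involves the two terms $b_1,b_2$, while $\mdd = \min_{\alpha,\beta,d}\max\{c_1,c_2,c_3,c_4\}$ involves four terms. The natural strategy is to exhibit, for a suitable choice of $d^*$, that the optimal {\tt DD} parameters can be taken so that each of $c_1,c_2,c_3,c_4$ is at most the value of $\max\{b_1,b_2\}$ attained at the {\tt COMP} optimum; since $\mdd$ is a minimum over a larger parameter set, it then suffices to produce one feasible triple $(\alpha,\beta,d)$ for {\tt DD} whose maximum is bounded by $\mcomp$.

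First I would observe that $c_1(\alpha,d) = b_1(\alpha,d)$ identically, so these terms pose no obstacle once we align the $\alpha$ and $d$ choices. Next I would compare $c_2(\alpha,d) = \frac{1}{d\,\KL{\alpha}{1-w}}$ with $b_2(\alpha,d) = \frac{1}{1-\theta}\frac{1}{d\,\KL{\alpha}{e^{-d}(1-p)+(1-e^{-d})q}}$. Unwinding the definition $w = e^{-d}p + (1-e^{-d})(1-q)$ gives $1-w = e^{-d}(1-p) + (1-e^{-d})q$, so in fact $c_2(\alpha,d) = (1-\theta)\,b_2(\alpha,d) \leq b_2(\alpha,d)$ because $1-\theta < 1$. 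Thus $c_2 \le b_2$ for free, with matching parameters. This reduces the problem to controlling the genuinely new {\tt DD} terms $c_3$ and $c_4$, which involve the second-stage threshold $\beta$.

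The crux is that $\beta$ is a free parameter: since $\max\{c_1,c_2\} \le \max\{b_1,b_2\}$ already holds, I only need to find \emph{some} admissible $\beta \in (0,e^{-d}(1-q))$ for which both $c_3(\beta,d)$ and $c_4(\alpha,\beta,d)$ drop below $\max\{b_1,b_2\}$. Here is where the hypothesis $e^{-d^*}p \ge q$ enters. The point of that inequality is to control the indicator term in $c_4$: when $e^{-d}p \ge q$ one expects the truly-positive signal $(1-q)e^{-d}$ dominating the false-positive rate, so that a threshold $\beta$ can be chosen comfortably inside $(0,e^{-d}(1-q))$ while keeping $\KL{\beta}{(1-q)e^{-d}}$ large enough to make $c_3$ small and keeping the $z$-maximisation in $c_4$ under control. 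I would pick $\beta$ near the lower admissible range, or a limiting value such as $\beta \to 1/\Delta$ reminiscent of the noiseless analysis, verify admissibility, and check that the resulting KL divergences are bounded below by the corresponding quantities appearing in $b_1,b_2$.

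The main obstacle I anticipate is the term $c_4(\alpha,\beta,d)$, because of the inner maximisation over $z\in[1-\alpha,1]$ and the conditional $\vecone\{\beta > ze^{-d}p/w\}$ term. One must show that for the chosen $\beta$ the worst-case $z$ still yields a denominator at least as large as $d(1-\theta)^{-1}\KL{\alpha}{1-w}$, i.e. that adding the defect channel does not hurt relative to {\tt COMP}. I would argue this by noting that at $z=1$ the leading term $\KL{z}{w}=\KL{1}{w}=-\log w$ and that the extra nonnegative $z\,\KL{\cdot}{\cdot}$ contribution only helps, while the hypothesis $e^{-d^*}p\ge q$ ensures the indicator and the base rate $w$ are arranged so the minimum over $z$ does not degrade below the {\tt COMP} benchmark. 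Once $c_3$ and $c_4$ are both dominated, combining with $c_1=b_1$ and $c_2\le b_2$ gives $\max\{c_1,c_2,c_3,c_4\}\le \max\{b_1,b_2\}$ at the chosen parameters, and taking minima yields $\mdd \le \mcomp$.
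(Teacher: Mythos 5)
Your overall strategy---exhibit one feasible {\tt DD} triple $(\alpha,\beta,d)$ whose four terms are dominated by the {\tt COMP} optimum---is exactly the paper's, and your first two steps are correct and identical to the paper's: $c_1=b_1$ holds identically, and the identity $1-w = e^{-d}(1-p)+(1-e^{-d})q$ gives $c_2 = (1-\theta)\,b_2 \le b_2$. But the second half of your plan swaps the roles of the two remaining terms, and that is where the real content lies. In the correct argument, $c_4 \le b_2$ holds \emph{unconditionally}, with no use of the hypothesis $e^{-d^*}p \ge q$: one drops the nonnegative indicator term, so the denominator of $c_4$ is at least $d\min_{1-\alpha\le z\le 1}\KL{z}{w}$; since $\KL{z}{w}$ is increasing in $z$ on $[w,1]$ and admissibility of $\alpha$ gives $1-\alpha \ge w$, this minimum is attained at $z=1-\alpha$ and equals $\KL{1-\alpha}{w}=\KL{\alpha}{1-w}$, which is exactly the denominator of $b_2$. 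Your analysis at $z=1$ looks at the wrong endpoint: $c_4$ is a \emph{maximum} over $z$, so the binding case is the $z$ minimizing the denominator, namely $z=1-\alpha$, not $z=1$; evaluating at $z=1$ (where $\KL{z}{w}$ is largest) bounds nothing.

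Conversely, the hypothesis $e^{-d^*}p\ge q$ is needed precisely for $c_3\le b_2$, and this is the step your sketch leaves as an assertion. Taking $\beta=1/\Delta$ as you suggest, $c_3 \le b_2$ reduces to $\theta\,\KL{\alpha^*}{1-w} \le -\log\bc{1-e^{-d^*}(1-q)}$. The paper gets this from $\KL{\alpha}{s}\le -\log(1-s)$ (the value at $\alpha = 1/\Delta$) together with $\theta<1$, so that it suffices to have $-\log w \le -\log\bc{1-e^{-d^*}(1-q)}$, i.e.\ $w \ge 1-e^{-d^*}(1-q)$; expanding $w=e^{-d^*}p+(1-e^{-d^*})(1-q)$ shows this is \emph{equivalent} to $e^{-d^*}p\ge q$. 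Without this chain, your claim that some admissible $\beta$ makes both $c_3$ and $c_4$ small is unsupported, and since you routed the hypothesis into $c_4$ instead, following your plan literally would fail: you would find $c_4$ needs no hypothesis and $c_3$ has none available. Finally, you never close the loop on the actual statement: in the reverse Z channel $q=0$, so $e^{-d^*}p \ge q$ holds trivially for every $d^*$; this one-line observation is why the corollary follows from the general comparison, whereas you assume the hypothesis rather than derive it from $q=0$.
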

\geb{The proof can be found in Appendix~\ref{sec_Com_DD_COMP}.} Our simulations suggest that this superior performance of noisy {\tt DD} holds as well for the Z channel and Binary Symmetric Channel. Please refer to Figure~\ref{COMPvsDD} for an illustration.

\begin{figure}
    \centering
    \begin{subfigure}{.4\textwidth}
    \centering
    \includegraphics[width=1\linewidth]{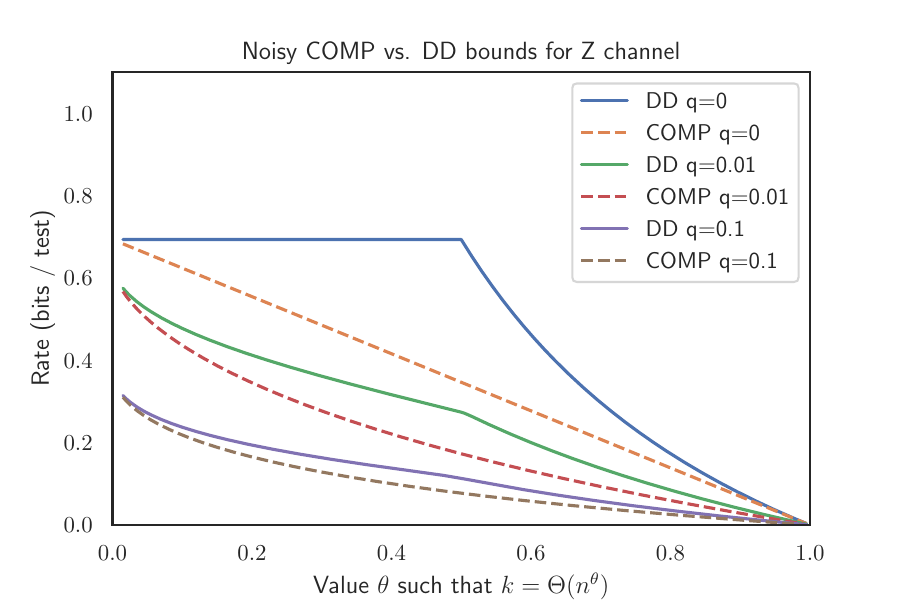}
    \end{subfigure}
    \begin{subfigure}{.4\textwidth}
    \centering
    \includegraphics[width=1\linewidth]{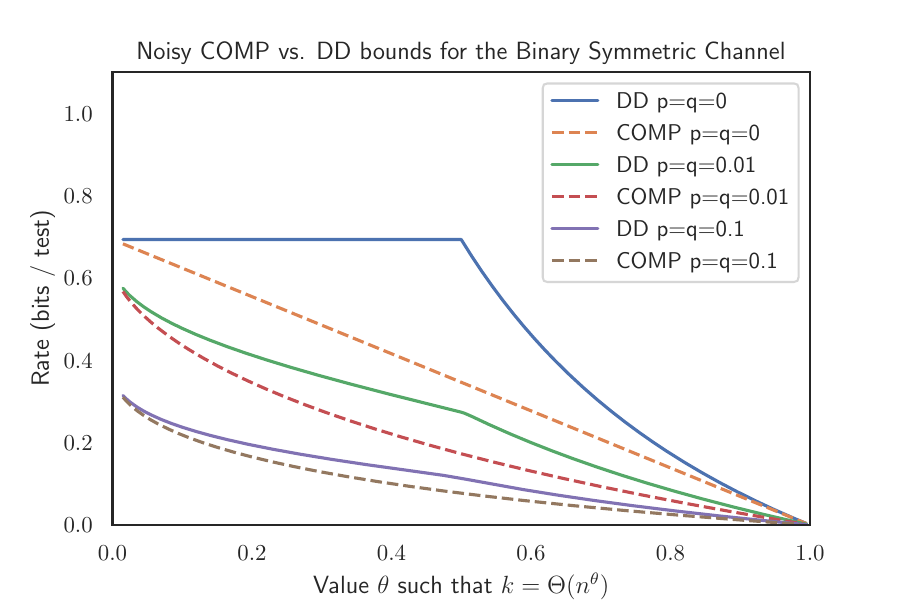}
     \end{subfigure}
     \caption{Comparison of the bound for noisy {\tt DD} and noisy {\tt COMP} in the Z-channel and the Binary Symmetric Channel for different noise level. \geb{(Note for black and white prints: The lines in the diagram are in the same order as given in the legend from top to bottom) }}
     \label{COMPvsDD}
\end{figure}

\subsection{Relation to Bernoulli testing}
\geb{In \cite{Johnson_2018} sufficient bounds for noisy group testing and a Bernoulli test design where each individual joins every test independently with some fixed probability were derived.}
Thus, the variable degrees fluctuate and we end up with some individuals assigned only to few tests.
In contrast, we work under a model in this paper where each individual joins an equal number of tests $\Delta$ chosen uniformly at random without replacement.
For the noiseless case, it is by now clear that the \gebA{near}-constant-column design better facilitates inference than the Bernoulli test design \cite{Coja_2019, Johnson_2019}.
We find that the same holds true for the noisy variant of the {\tt COMP} algorithm. Let us denote by $\mcomp^{\text{Ber}}$ the number of tests required for the noisy {\tt COMP} to succeed under a Bernoulli test design.

\begin{proposition} \label{prop_Ber_COMP}
For all $p+q < 1$, we have
\begin{align*}
    \mcomp^{\text{Ber}} \geq \mcomp 
\end{align*}
\end{proposition}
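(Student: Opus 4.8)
The plan is to compare the two bounds term by term, exploiting the fact that the only structural difference between the designs is the degree distribution of the variable nodes. In the constant-column design every item sits in exactly $\Delta$ tests, so the number of displayed negative tests $\vN_x$ seen by a healthy item is $\Bin(\Delta, s)$ with $s = e^{-d}(1-p)+(1-e^{-d})q$, while for an infected item it is $\Bin(\Delta, q)$; the large-deviation rates $\KL{\alpha}{s}$ and $\KL{\alpha}{q}$ are exactly what produce $b_2$ and $b_1$ in \Thm~\ref{thm_COMP}. First I would record the analogous description under the Bernoulli design: there each item joins each test independently, so its degree is asymptotically $\Po(\Delta)$ and, by Poisson thinning, the displayed negative count is $\Po(\Delta s)$ for a healthy item and $\Po(\Delta q)$ for an infected one, the two designs sharing the same per-test truly-negative probability $e^{-d}$ once the density is matched. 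Consequently $\mcomp^{\text{Ber}}$ has the same shape as $\mcomp$, but with the Bernoulli relative entropy $\KL{\alpha}{x}$ replaced everywhere by the Poisson rate $I(\alpha,x) := \alpha\log(\alpha/x) - \alpha + x$.

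The crux is then the elementary inequality
$$\KL{\alpha}{x} \;=\; \alpha\log\frac\alpha x + (1-\alpha)\log\frac{1-\alpha}{1-x} \;\geq\; \alpha\log\frac\alpha x - \alpha + x \;=\; I(\alpha,x),$$
valid for all $\alpha, x \in (0,1)$. This follows by applying $\log t \geq 1 - 1/t$ with $t = (1-\alpha)/(1-x)$, which gives $(1-\alpha)\log\frac{1-\alpha}{1-x} \geq (1-\alpha) - (1-x) = x - \alpha$ and hence the claim, with equality only at $\alpha = x$. In probabilistic terms this is the statement that a $\Bin(\Delta,x)$ variable obeys a strictly stronger large-deviation bound than a $\Po(\Delta x)$ variable of the same mean, reflecting that fixing the number of trials removes the extra fluctuation of a Poisson degree.

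Plugging this into the two rate expressions shows that, for every fixed pair $(\alpha,d)$, each Bernoulli coefficient dominates its constant-column counterpart: the healthy constraint gives $\frac{1}{1-\theta}\frac{1}{d\,I(\alpha,s)} \geq b_2(\alpha,d)$ and the infected constraint gives $\frac{\theta}{1-\theta}\frac{1}{d\,I(\alpha,q)} \geq b_1(\alpha,d)$. Taking the pointwise maximum preserves the inequality, and since $\min_{\alpha,d}\max_i g_i \geq \min_{\alpha,d}\max_i f_i$ whenever $g_i \geq f_i$ pointwise, optimizing over the thresholds and the density yields $\mcomp^{\text{Ber}} \geq \mcomp$. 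The main obstacle is not this monotone comparison but the first step: pinning down $\mcomp^{\text{Ber}}$ precisely, i.e. verifying that the Bernoulli analysis of \cite{Johnson_2019} indeed reduces to the Poisson rate functions under a density matched to our $e^{-d}$ parametrization, so that the two bounds become directly comparable term by term.
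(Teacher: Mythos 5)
Your proposal is correct and follows essentially the same route as the paper: the paper first derives the Bernoulli bound with binomial rate functions $k\KL{\alpha d/k}{yd/k}$ (\Prop~\ref{prop_COMP_Bernoulli}) and then proves in \Lem~\ref{DKL_reform} that $k\KL{xd/k}{yd/k}=d\left(\KL{x}{y}+v(x,y)\right)+o(1/k)$ with $v(x,y)\leq 0$, which is precisely your inequality $\KL{\alpha}{x}\geq I(\alpha,x)$ (note $\KL{x}{y}+v(x,y)=I(x,y)$), established by the same linearisation of the logarithm, followed by the same pointwise min--max comparison. The only cosmetic difference is that you reach the Poisson rate via Poissonization of the degrees, whereas the paper Taylor-expands the binomial divergence directly; the step you flag as the "main obstacle" (pinning down $\mcomp^{\text{Ber}}$) is indeed handled in the paper by deriving the Bernoulli bound from scratch rather than importing it.
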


We see the same effect for the noisy variant of the {\tt DD} algorithm for all simulations, but for technical reasons only  prove it for the Z channel.

\begin{proposition} \label{prop_Ber_DD}
For the Z channel where $p=0$ and $0<q<1$, we have
\begin{align*}
    \mdd^{\text{Ber}} > \mdd 
\end{align*}
\end{proposition}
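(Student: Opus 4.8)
The plan is to compare $\mdd^{\text{Ber}}$ and $\mdd$ directly, exploiting that for the purpose of these bounds the two designs differ only in the item--degree distribution. In the constant--column design every item lies in exactly $\Delta$ tests, so the number of displayed negative tests at an item --- and, in step~2, the number of displayed positive tests in which it is the lone survivor --- is a binomial $\Bin(\Delta,\mu)$ for the relevant success probability $\mu$. In the Bernoulli design of \cite{Johnson_2019} each item joins each test independently, so its degree is asymptotically $\Po(\Delta)$ and, by Poisson thinning, the corresponding count is $\Po(\Delta\mu)$; the test--level statistic $e^{-d}$ (the probability that a test is truly negative) is unchanged. Hence $\mdd^{\text{Ber}}$, specialised to the $Z$ channel where $c_4=0$, has exactly the min--max form of Corollary~\ref{cor_Z_DD} with every binomial rate $\KL{\cdot}{\mu}$ replaced by the Poisson (Cram\'er) rate $I(\cdot,\mu)$, where $I(a,\mu):=a\log(a/\mu)-a+\mu$. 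First I would record this correspondence, writing $c_1^{\text{Ber}},c_2^{\text{Ber}},c_3^{\text{Ber}}$ for the Bernoulli terms over the same feasible set of $(\alpha,d)$.

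The heart of the argument is the single strict rate inequality
\begin{align}\label{eq:prop-Ber-DD-rategap}
  \KL{a}{\mu} \;>\; I(a,\mu) \qquad \text{for all } \mu\in(0,1),\ a\neq\mu .
\end{align}
To prove \eqref{eq:prop-Ber-DD-rategap} I would set $f(a):=\KL{a}{\mu}-I(a,\mu)=(1-a)\log\tfrac{1-a}{1-\mu}+a-\mu$, observe $f(\mu)=0$, and compute $f'(a)=\log\tfrac{1-\mu}{1-a}$, which is negative for $a<\mu$ and positive for $a>\mu$; thus $f$ attains its minimum value $0$ only at $a=\mu$ and is strictly positive elsewhere. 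This covers the upper tail relevant to $c_1$ ($\alpha>q$), the lower tail relevant to $c_2$ ($\alpha<e^{-d}+(1-e^{-d})q$), and, by continuity at $a=0$, the term $c_3$, where $\KL{0}{\mu}=-\log(1-\mu)>\mu=I(0,\mu)$ with $\mu=e^{-d}(1-q)$. Applying \eqref{eq:prop-Ber-DD-rategap} with $\mu=q$, $\mu=e^{-d}+(1-e^{-d})q$ and $\mu=e^{-d}(1-q)$ gives the strict pointwise domination $c_i^{\text{Ber}}(\alpha,d)>c_i(\alpha,d)$ for $i\in\{1,2,3\}$ at every feasible $(\alpha,d)$.

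It remains to transfer this through the optimisation. Since $\mdd$ and $\mdd^{\text{Ber}}$ are the two $\min_{\alpha,d}\max_{i}$ prefactors times the common factor $k\log(n/k)$, it suffices to compare the prefactors. Let $(\alpha_0,d_0)$ attain the Bernoulli minimum --- it is attained because the objective is continuous and diverges at the boundary of the feasible region --- and let $j$ index a binding constant--column term there, so $c_j(\alpha_0,d_0)=\max_i c_i(\alpha_0,d_0)$. Then
\begin{align*}
  \min_{\alpha,d}\max_i c_i^{\text{Ber}}
  &= \max_i c_i^{\text{Ber}}(\alpha_0,d_0)
     \;\geq\; c_j^{\text{Ber}}(\alpha_0,d_0)
     \;>\; c_j(\alpha_0,d_0) \\
  &= \max_i c_i(\alpha_0,d_0)
     \;\geq\; \min_{\alpha,d}\max_i c_i ,
\end{align*}
where the strict step is \eqref{eq:prop-Ber-DD-rategap} applied to the binding term. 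This yields $\mdd^{\text{Ber}}>\mdd$.

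I expect the main obstacle to be justifying the ``binomial $\to$ Poisson'' correspondence of the first paragraph rather than inequality \eqref{eq:prop-Ber-DD-rategap} itself. For $c_1$ and $c_3$ it is immediate from Poisson thinning of a single item's degree, but for $c_2$ the underlying event governs the contamination of step~2 by the set $\zeroplus$ of healthy items surviving step~1, so one must check that the degree fluctuations decouple from the neighbourhood structure and still reduce to the one--item Poisson rate $I(\alpha,e^{-d}+(1-e^{-d})q)$ with no extra design--dependent loss, and that both optimisations genuinely range over the same parameters. This coupling is also why the statement is restricted to the $Z$ channel: once $p>0$ the term $c_4$ is nonzero and the step--2 event no longer factors into a clean single--item large deviation, so the Poisson/binomial correspondence would have to be re-examined.
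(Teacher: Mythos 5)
Your proposal is correct and is essentially the paper's own argument: the paper's Lemma~\ref{DKL_reform} is exactly your rate inequality in disguise, since $\KL{x}{y}+v(x,y)$ equals your Poisson rate $I(x,y)$, so $v(x,y)\le 0$ is your $\KL{a}{\mu}\ge I(a,\mu)$, and the paper likewise combines this with the vanishing of $c_4$ under the Z channel and a (left implicit) min--max transfer of the kind you spell out. The only discrepancy is harmless: the paper's Bernoulli bound (Proposition~\ref{prop_DD_Bernoulli}) carries an extra factor $(1-\zeta)/(1-\theta)>1$ on its $c_2$ term which your reconstruction omits, but that factor only enlarges $\mdd^{\text{Ber}}$, so your pointwise-domination chain is unaffected.
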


For an illustration on the magnitude of the difference, we refer to Figure~\ref{FigBervsCC} and Figure~\ref{fig_Z_BervsCC}.

\begin{figure}[!ht]
    \centering
    \begin{subfigure}{.4\textwidth}
    \centering
    \includegraphics[width=1\linewidth]{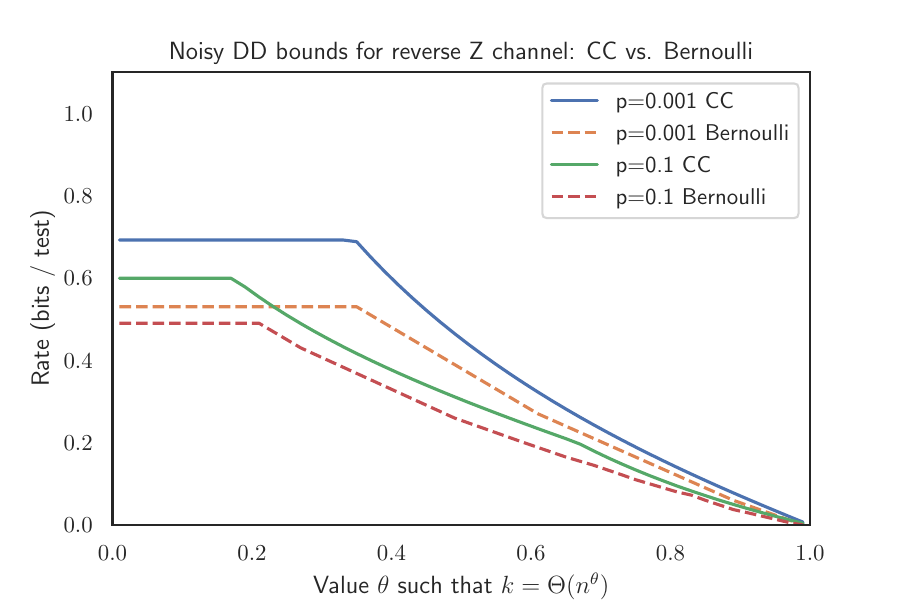}
    \end{subfigure}
    \begin{subfigure}{.4\textwidth}
    \centering
    \includegraphics[width=1\linewidth]{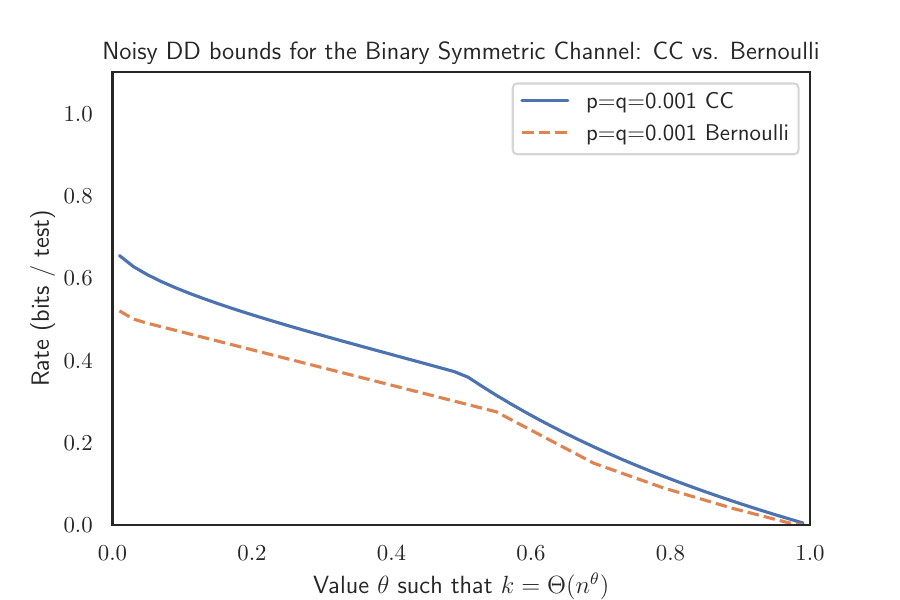}
     \end{subfigure}
     \caption{Comparison of {\tt DD} bounds under a Bernoulli test design (\cite{Johnson_2018}) and constant column test design (present paper) for the reverse Z and Binary Symmetric Channel. \geb{(Note for black and white prints: The solid lines as well as the dashed lines in the diagram are in the same order as given in the legend from top to bottom) }}
     \label{FigBervsCC}
     
\end{figure}

\newpage

\appendix

\section*{Appendix}

The core of the technical sections is the proof of \Thm s~\ref{thm_COMP} and \Thm~\ref{thm_DD}. Some groundwork with standard concentration bounds and group testing properties can be found in \Sec~\ref{sec_groundwork}. We continue with the proof of \Thm s~\ref{thm_COMP} and \ref{thm_DD} in \Sec s~\ref{sec_COMP} and \ref{sec_DD}, respectively. The structure of the proofs follows a similar logic. First, we derive the distributions for the number of displayed positive and negative tests for infected and healthy individuals. Second, we threshold these distributions using sharp Chernoff concentration bounds to deduce the bounds stated in \Thm~\ref{thm_COMP} and \Thm~\ref{thm_DD}. 
Thereafter, we proceed to the proof of \Prop~\ref{prop_comp_dd} in \Sec~\ref{sec_Com_DD_COMP}, while the proofs of \Prop s~\ref{prop_Ber_COMP} and \ref{prop_Ber_DD} follow in \Sec~\ref{sec_Rel_Ber}. \gebA{The proof of Corollary~\ref{thm:shanCAP} can be found in \Sec~\ref{Notes_on_Capacity}. Additional illustrations of our results for the different channels can be found in \Sec~\ref{Appendix_Illustration}.}

\section{Groundwork} \label{sec_groundwork}

For starters, let us recall the Chernoff bound for binomial and hypergeometric distributions.

\begin{lemma}[Chernoff bound for the binomial distribution \cite{Janson_2011}] \label{lem_chernoff_bin}
Let $p<q<r \in (0,1)$ and $\vX \sim \Bin(n,q)$ be a binomially distributed random variable. Then
\begin{align*}
    \Pr \bc{\vX \leq \lceil pn \rceil} &= \exp \bc{-\bc{1+n^{-\Omega(1)}}n \KL{p}{q}} \\
    \Pr \bc{\vX \geq \lceil rn \rceil} &= \exp \bc{-\bc{1+n^{-\Omega(1)}}n \KL{r}{q}}
\end{align*}
\end{lemma}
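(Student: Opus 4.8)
The plan is to prove a sharp two-sided estimate by sandwiching each tail probability between the classical exponential-moment upper bound and the contribution of its single dominant summand, and then passing to logarithms. Since $p,q,r$ are fixed with $0<p<q<r<1$, the rates $\KL{p}{q}$ and $\KL{r}{q}$ are strictly positive constants, so an additive $O(\log n)$ slack in the exponent converts into the claimed multiplicative factor $(1+n^{-\Omega(1)})$. I would treat the upper tail $\Pr(\vX\ge\lceil rn\rceil)$ in detail; the lower tail $\Pr(\vX\le\lceil pn\rceil)$ then follows by applying the same argument to $n-\vX\sim\Bin(n,1-q)$ together with the identity $\KL{p}{q}=\KL{1-p}{1-q}$.

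\emph{Upper bound.} For any $t>0$, the exponential Markov inequality gives
\[
\Pr\bc{\vX \ge \lceil rn\rceil} \le e^{-t\lceil rn\rceil}\,\Erw e^{t\vX} = e^{-t\lceil rn\rceil}\bc{1-q+qe^t}^n.
\]
Optimising over $t$ — the minimiser being $e^t=\frac{r(1-q)}{q(1-r)}>1$, which is admissible precisely because $r>q$ — yields $\Pr(\vX\ge\lceil rn\rceil)\le\exp\bc{-n\KL{\lceil rn\rceil/n}{q}}$. Since $\lceil rn\rceil/n=r+O(1/n)$ and $s\mapsto\KL{s}{q}$ is Lipschitz on any compact subinterval of $(0,1)$, this equals $\exp\bc{-n\KL{r}{q}+O(1)}$.

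\emph{Lower bound.} The probability is at least the single term at $j=\lceil rn\rceil$,
\[
\Pr\bc{\vX\ge\lceil rn\rceil} \ge \binom{n}{j}q^{j}(1-q)^{n-j}.
\]
Stirling's formula gives $\binom{n}{j}=\exp\bc{nH(j/n)+O(\log n)}$ with $H$ the binary entropy, so the right-hand side equals $\exp\bc{-n\KL{j/n}{q}+O(\log n)}$; again using $j/n=r+O(1/n)$ and Lipschitz continuity, this is $\exp\bc{-n\KL{r}{q}+O(\log n)}$.

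\emph{Combining.} The two bounds give $\log\Pr(\vX\ge\lceil rn\rceil)=-n\KL{r}{q}+O(\log n)$. Dividing by the positive constant $\KL{r}{q}$ shows the relative correction is $1+O(\log n/n)=1+n^{-\Omega(1)}$, as required, and the lower tail is identical after the substitution above. The only points demanding care — and the main, though routine, obstacle — are checking that replacing $rn$ by the integer $\lceil rn\rceil$ perturbs the rate by only $O(1/n)$, and that the fixedness of $p,q,r$ (bounded away from $0,1$ and from one another) keeps $\KL{\cdot}{q}$ bounded below, so that the additive $O(\log n)$ genuinely becomes a polynomially small relative error; both follow immediately from the Lipschitz property of the relative entropy on compact subintervals.
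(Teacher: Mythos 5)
Your proof is correct. Note, however, that the paper does not prove this lemma at all: it is quoted as a standard result from the random-graphs literature (Janson--\Luczak--Ruci\'nski), so there is no internal proof to compare against. Your argument is the standard route by which such sharp two-sided bounds are established: the exponential-tilting (Chernoff) upper bound $\Pr\bc{\vX \geq \lceil rn\rceil} \leq \exp\bc{-n\KL{\lceil rn\rceil/n}{q}}$, a matching lower bound from the single dominant summand via Stirling's formula, and Lipschitz continuity of $s \mapsto \KL{s}{q}$ on compact subsets of $(0,1)$ to absorb both the integer rounding and the $O(\log n)$ Stirling error into the factor $(1+n^{-\Omega(1)})$. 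Two small points would need tightening in a fully written-out version: (i) the minimiser of the moment-generating bound should be taken at $s = \lceil rn\rceil/n$ rather than at $r$ (immaterial by your own Lipschitz remark, but as written the displayed optimal $e^t$ and the stated conclusion do not quite match); (ii) in the reduction of the lower tail to the upper tail via $n-\vX \sim \Bin(n,1-q)$, the resulting threshold $n-\lceil pn\rceil = \lfloor (1-p)n\rfloor$ is a floor, not a ceiling, so the statement you proved does not literally apply --- but again the $O(1/n)$ perturbation is harmless for the same reason. A genuine merit of your write-up is that you prove both directions of the asymptotic equality, not merely the upper tail bound: the lemma as stated (and as used in the paper, e.g.\ to turn union-bound conditions into necessary-and-sufficient thresholds on $c$) requires the matching lower bound, which a bare citation of the textbook Chernoff inequality would not supply.
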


\begin{lemma}[Chernoff bound for the hypergeometric distribution  \cite{Hoeffding_1963}] \label{lem_chernoff_hyp}
Let $p<q<r \in (0,1)$ and $\vY \sim H(N, Q, n)$ be a hypergeometrically distributed random variable. Further, let $q = Q/N$. Then
\begin{align*}
    \Pr \bc{\vY \leq \lceil pn \rceil} &= \exp \bc{-\bc{1+n^{-\Omega(1)}} n \KL{p}{q}} \\
    \Pr \bc{\vY \geq \lceil rn \rceil} &= \exp \bc{-\bc{1+n^{-\Omega(1)}} n \KL{r}{q}}
\end{align*}
\end{lemma}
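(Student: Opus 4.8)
The plan is to reduce the hypergeometric estimate to the binomial estimate of Lemma~\ref{lem_chernoff_bin}, exploiting that in the regime we care about the sample size $n$ is negligible compared to the population size $N$. I would treat the lower tail $\Pr\bc{H(N,Q,n)\le\lceil pn\rceil}$ first and obtain the upper tail for free afterwards: applying the lower-tail bound to $n-\vY\sim H(N,N-Q,n)$ replaces $q$ by $1-q$ and the threshold parameter $p$ by $1-r$, and since the complementation identity $\KL{r}{q}=\KL{1-r}{1-q}$ holds for the Bernoulli relative entropy, this converts the bound on $\Pr\bc{\vY\ge\lceil rn\rceil}$ into a lower-tail statement with rate $\KL{r}{q}$.

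For the lower tail I would write the hypergeometric mass in falling-factorial form, $\Pr(\vY=k)=\binom nk\,Q^{\underline{k}}(N-Q)^{\underline{n-k}}/N^{\underline{n}}$, and compare it termwise with the binomial mass $\Pr(\Bin(n,q)=k)=\binom nk q^k(1-q)^{n-k}$. Because $Q=qN$ and $k\le n$, each falling factorial differs from the corresponding power by a product of the form $\prod_{j}\bc{1-j/Q}$, whence $Q^{\underline{k}}/Q^{k}=1+O(n^2/N)$ and likewise for $(N-Q)^{\underline{n-k}}$ and $N^{\underline{n}}$. This yields $\Pr(\vY=k)=\bc{1+O(n^2/N)}\Pr(\Bin(n,q)=k)$ uniformly over $0\le k\le n$. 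Summing over $k\le\lceil pn\rceil$ gives $\Pr\bc{\vY\le\lceil pn\rceil}=\bc{1+O(n^2/N)}\Pr\bc{\Bin(n,q)\le\lceil pn\rceil}$, and since Lemma~\ref{lem_chernoff_bin} is itself a two-sided estimate, it evaluates the right-hand side as $\exp\bc{-\bc{1+n^{-\Omega(1)}}n\KL{p}{q}}$, delivering both directions of the claim at once. For the upper bounds alone one may alternatively bypass the comparison and invoke Hoeffding's classical result \cite{Hoeffding_1963} that the moment generating function of sampling without replacement is dominated by that of sampling with replacement; the Chernoff bound for $\Bin(n,q)$ then transfers verbatim, giving $\Pr\bc{\vY\le\lceil pn\rceil}\le\exp\bc{-n\KL{p}{q}}$ with no error term.

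The main obstacle is purely quantitative bookkeeping: one must verify that the termwise comparison constant is genuinely uniform over the entire tail and, crucially, that the multiplicative factor $1+O(n^2/N)$ induces only a \emph{relative} error of order $n^{-\Omega(1)}$ in the exponent rather than a constant-order error. Since $p<q$ are fixed constants we have $\KL{p}{q}=\Theta(1)$, so the exponent $n\KL{p}{q}$ grows linearly in $n$; writing $\log\bc{1+O(n^2/N)}=O(n^2/N)$ and dividing by $n\KL{p}{q}$ shows the contribution is of relative order $n/N$, which is absorbed into the $n^{-\Omega(1)}$ term precisely because $n=o(N)$ (in the application $N$ is polynomial in the population while the sample size is the test degree $\Delta=\Theta(\log(n/k))$, so $n/N$ is super-polynomially small). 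This is exactly where the small-sample regime is indispensable: in the proportional regime $n=\Theta(N)$ the true hypergeometric rate function is \emph{not} $\KL{p}{q}$, so the comparison to the binomial cannot be taken for granted and must be justified in the quantitative form above.
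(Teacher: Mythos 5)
The paper gives no proof of this lemma at all; it simply cites Hoeffding \cite{Hoeffding_1963}, whose convex-order domination of sampling without replacement by sampling with replacement transfers the binomial moment generating function, and hence the Chernoff \emph{upper} bounds $\Pr\bc{\vY\le\lceil pn\rceil}\le \exp\bc{-n\KL{p}{q}}$ and $\Pr\bc{\vY\ge\lceil rn\rceil}\le\exp\bc{-n\KL{r}{q}}$, to the hypergeometric. Your argument is therefore a genuinely different, self-contained route, and it is essentially correct; moreover, it is what one actually needs to justify the lemma \emph{as stated}, since the displayed relations are equalities up to a factor $1+n^{-\Omega(1)}$ in the exponent, and the matching lower bounds on the tail probabilities do not follow from Hoeffding's domination at all. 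Your termwise comparison delivers both directions at once, with the right bookkeeping: the ratio of the hypergeometric to the binomial mass is $\exp\bc{\pm O(n^2/N)}$ uniformly over $0\le k\le n$ (write it this way rather than as $1+O\bc{n^2/N}$, which presumes $n^2=o(N)$; note also that $\min(Q,N-Q)\gg n$ is needed so that this range is the full support, which holds since $q\in(0,1)$ is fixed), so the relative error contributed to the exponent is $O(n/N)$, and this is $n^{-\Omega(1)}$ exactly when $N\ge n^{1+\Omega(1)}$. You are right that some such hypothesis is tacitly assumed --- as a statement about arbitrary $(N,Q,n)$ the lemma is false, e.g.\ in the proportional regime $n=\Theta(N)$, where the rate function is not $\KL{p}{q}$ --- and right that it holds in every application in this paper, where the population is $m=\Theta(k\log(n/k))$ and the sample size is $\Delta=\Theta(\log(n/k))$, so $\Delta/m=d/k$ is super-polynomially small in $\Delta$ because $k\sim n^\theta$ is exponentially large in $\Delta$. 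Two minor points: the complementation step $n-\vY\sim H(N,N-Q,n)$ has an integer-rounding mismatch ($n-\lceil rn\rceil=\lfloor(1-r)n\rfloor$), which shifts the threshold by at most one term and is absorbed into the error term; and the paper in fact only ever uses the upper-bound direction of this lemma (all applications are union bounds on error events), so the Hoeffding route you mention as an alternative would suffice for the paper's purposes, though not to prove the two-sided statement as written.
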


The next lemma provides that the test degrees, as defined in \eqref{eq:testdegrees} above, are tightly concentrated. Recall from \eqref{eq:param} that the number of tests $m = c k \log(n/k)$ and each item appears in $\Delta = cd\log(n/k)$ tests.

\begin{lemma} \label{lem_gamma_minmax}
With probability $1-o(n^{-2})$ we have
\begin{align*}
    dn/k - \sqrt{dn/k} \log n \leq \vec{\Gamma}_{\min} \leq \vec{\Gamma}_{\max} \leq dn/k  \gebA{+} \sqrt{dn/k} \log n
\end{align*}
\end{lemma}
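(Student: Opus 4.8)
The plan is to show that all $m$ test degrees $\Gamma_a$ lie in the stated window with probability $1-o(n^{-2})$ via a union bound over tests, so it suffices to control a single $\Gamma_a$ and then absorb the factor $m = ck\log(n/k)$ into the error term. First I would identify the exact distribution of $\Gamma_a$. Each of the $n$ items independently (across items, not across its own $\Delta$ slots) selects $\Delta$ of the $m$ tests uniformly without replacement, so the event that item $x$ joins a fixed test $a$ is a hypergeometric draw, and $\Gamma_a = \sum_{x \in [n]} \vecone\{a \in \partial x\}$ is a sum of $n$ independent indicators. The marginal probability that a given item lands in test $a$ is $\Delta/m = d/k$ by \eqref{eq:param}, so $\Erw[\Gamma_a] = n\Delta/m = dn/k$, matching the center of the claimed interval. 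Since the items act independently, $\Gamma_a$ is stochastically a sum of $n$ independent (not quite identical, because each item draws without replacement among its own $\Delta$ choices, but each has the same marginal $d/k$ of hitting $a$) Bernoulli-type variables; in fact each item hits $a$ at most once, so $\Gamma_a \sim \Bin(n, d/k)$ exactly, or is dominated by such a binomial.

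Second I would apply the Chernoff bound of \Lem~\ref{lem_chernoff_bin} with $N = n$, mean parameter $q = d/k$, and deviation $t = \sqrt{dn/k}\,\log n$ around the mean $\mu = dn/k$. The upper-tail probability $\Pr(\Gamma_a \geq \mu + t)$ and lower-tail probability $\Pr(\Gamma_a \leq \mu - t)$ are each bounded by $\exp(-(1+o(1))\,n\,\KL{(\mu\pm t)/n}{d/k})$. For a deviation of order $\sqrt{\mu}\,\log n$ relative to a mean $\mu = dn/k = \Theta(n^{1-\theta}\log n)$, a second-order Taylor expansion of the KL divergence gives $n\,\KL{(\mu\pm t)/n}{d/k} \approx \tfrac{t^2}{2\mu} = \tfrac{1}{2}\log^2 n$, so each tail is at most $\exp(-(1+o(1))\tfrac12\log^2 n) = n^{-\omega(1)}$. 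This is super-polynomially small, comfortably beating the $m = n^{o(1)}\cdot k = n^{\theta + o(1)}$ factor lost in the union bound, yielding the overall $1-o(n^{-2})$ bound.

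I would be careful about two points. The minor subtlety is the precise dependence structure: because item $x$ chooses $\Delta$ tests \emph{without replacement} among the $m$ tests, the event $a\in\partial x$ is a single hypergeometric indicator per item and is independent across items, so $\Gamma_a$ is genuinely binomial $\Bin(n,d/k)$ and \Lem~\ref{lem_chernoff_bin} applies directly; no negative-association machinery is needed. One should double-check that $d/k = \Delta/m \in (0,1)$ and that $\mu \pm t$ stays in $(0,n)$ so the bound is legitimate, and that $t/\mu = o(1)$ so the Taylor expansion of $\KL{\cdot}{\cdot}$ is valid.

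The main obstacle, and the only place requiring genuine care, is verifying that the chosen deviation width $t = \sqrt{dn/k}\,\log n$ is simultaneously (i) large enough that the resulting tail bound, raised to the union over $m$ tests, still gives $o(n^{-2})$, and (ii) small enough relative to $\mu$ that the Gaussian approximation $n\,\KL{(\mu\pm t)/n}{d/k}\sim t^2/(2\mu)$ holds. Concretely I would confirm $t^2/(2\mu) = \tfrac12\log^2 n$ dominates both $\log m \lesssim \log n$ and the target $2\log n$, which it does since $\log^2 n = \omega(\log n)$; hence the exponent $\tfrac12\log^2 n$ overwhelms all polynomial-in-$n$ losses. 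Once this calibration is in place, the union bound $m \cdot n^{-\omega(1)} = o(n^{-2})$ closes the argument.
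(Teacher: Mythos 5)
Your proposal is correct and follows essentially the same route as the paper's proof: identify $\Gamma_a \sim \Bin(n, d/k)$ exactly (each item includes test $a$ independently of the others, with probability $\Delta/m = d/k$), apply the binomial Chernoff bound of \Lem~\ref{lem_chernoff_bin}, and close with a union bound over the $m$ tests. The only slip is cosmetic: $dn/k = \Theta(n^{1-\theta})$, not $\Theta(n^{1-\theta}\log n)$, which changes nothing since the tail exponent $\tfrac{1}{2}\log^2 n$ still overwhelms every polynomial-in-$n$ factor.
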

\begin{proof}
The probability that an individual $x$ is assigned to test $a$ is given by
\begin{align} \label{eq_prop_test}
    \Pr \bc{x \in \partial a} = 1 - \Pr \bc{x \notin \partial a} = 1 - \binom{m-1}{\Delta} \binom{m}{\Delta}^{-1} = \Delta/m = d/k
\end{align}
Since each individual is assigned to tests independently, the total number of individuals in a given test follows the binomial distribution $\Bin \bc{n, d/k }$. \gebA{The assertion now follows from applying the Chernoff bound for this binomial distribution at the expectation }  (\Lem~\ref{lem_chernoff_bin}).
\end{proof}

Next, we show that the number of truly negative tests $\vm_0$ (and thus the number of truly positive tests $\vm_1$) is tightly concentrated.

\begin{lemma} \label{lem_m0}
With probability $1-o(n^{-2})$ we have $\mzero = e^{-d} m + O(\sqrt{m} \log^3 n)$.
\end{lemma}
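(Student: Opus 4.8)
The plan is to write $\mzero$ as a function of independent random variables and apply the bounded-differences (McDiarmid) inequality. Condition on $\SIGMA$; by symmetry the law of $\mzero$ depends only on $\abs{V_1}=k$, so we may fix $V_1$. For each infected individual $x\in V_1$ let $S_x\subseteq[m]$ denote the uniformly random $\Delta$-element set of tests that $x$ is assigned to. By the construction of $\G$ these sets are independent across $x\in V_1$, and the healthy individuals are irrelevant because a test is truly negative exactly when it contains no infected individual. Hence
\[
	\mzero = m - \abs[\Big]{\bigcup_{x\in V_1} S_x}.
\]

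First I would compute the expectation. For a fixed test $a$, the event $\SIGMA^*(a)=0$ is the event $a\notin S_x$ for all $x\in V_1$, which by independence and \eqref{eq_prop_test} has probability $\bc{1-\Delta/m}^k=\bc{1-d/k}^k$. Since $\bc{1-d/k}^k = e^{-d}\bc{1+O(1/k)}$, linearity of expectation gives $\Erw[\mzero]=e^{-d}m+O(m/k)$. As $m/k=c\log(n/k)=O(\log n)$ while $k\sim n^\theta$, the bias $O(m/k)$ is dominated by $\sqrt m\,\log^3 n$, so $\Erw[\mzero]=e^{-d}m+o(\sqrt m\,\log^3 n)$.

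The concentration step is the heart of the argument. Viewing $\mzero$ as a function of the independent coordinates $(S_x)_{x\in V_1}$, replacing a single $S_x$ by another $\Delta$-set $S_x'$ alters the union only within $S_x\cup S_x'$; consequently $\abs{\bigcup_y S_y}$, and therefore $\mzero$, changes by at most $\Delta$. The bounded-differences inequality with $k$ coordinates and difference bound $\Delta$ then yields, for every $t>0$,
\[
	\Pr\bc{\abs{\mzero-\Erw[\mzero]}\geq t}\leq 2\exp\bc{-\frac{2t^2}{k\Delta^2}}.
\]
Taking $t=\sqrt m\,\log^3 n$ and substituting $m=ck\log(n/k)$, $\Delta=cd\log(n/k)$ makes the exponent of order $\log^6 n/\log(n/k)=\Theta(\log^5 n)$, so the right-hand side is $o(n^{-2})$. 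Combined with the expectation estimate, this proves the claim.

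No single step is a genuine obstacle: the result is a routine exponential concentration estimate once the correct independent structure $(S_x)_{x\in V_1}$ and the set-union representation of $\mzero$ are isolated. The only points needing care are verifying independence of the $(S_x)_{x\in V_1}$ and that the bounded-difference constant is $\Delta$ (not $2\Delta$), both of which are immediate from the definition of $\G$; and the generous error term $O(\sqrt m\,\log^3 n)$ leaves ample slack even under a cruder difference bound.
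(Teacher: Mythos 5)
Your proposal is correct and follows essentially the same route as the paper: both compute $\Erw[\mzero]$ from $\Pr\bc{V_1\cap\partial a=\emptyset}=\bc{1-d/k}^k=\bc{1+o(1)}e^{-d}$ and then apply a bounded-differences concentration argument (the paper invokes Azuma--Hoeffding, you invoke McDiarmid, which is the same martingale bound) with difference constant $\Delta$ over the $k$ independent test-assignments of the infected individuals, finishing with $t=\sqrt{m}\log^3 n$. The only cosmetic difference is your explicit set-union representation $\mzero=m-\abs{\bigcup_{x\in V_1}S_x}$, which the paper leaves implicit.
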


\begin{proof}
Recall from \eqref{eq_prop_test} that
\begin{align*}
    \Pr \bc{x \in \partial a} = d/k
\end{align*}
Since infected individuals are assigned to tests mutually independently, we find for a test $a$ that
\begin{align*}
    \Pr \bc{V_1 \cap \partial a = \emptyset} = \Pr \bc{\Bin \bc{k, d/k} = 0} = \bc{1- d/k }^k =\bc{1+n^{-\Omega(1)}} e^{-d}.
\end{align*}
Consequently, $\Erw \brk{\mzero} =\bc{1+n^{-\Omega(1)}} e^{-d} m$. Finally, changing the set of tests for a specific infected individual shifts the total number of negative tests by at most $\Delta$. Therefore, the \gebA{McDiarmid} inequality \gebA{(Lemma 1.2 in \cite{Mcdiarmid_1989})} yields

\begin{align*}
    \Pr \bc{\abs{\mzero - \Erw \brk{\mzero}} \geq t} \leq 2 \exp \bc{-\frac{t^2}{4k\Delta^2}}.
\end{align*}
The lemma follows from setting $t= O \bc{ \sqrt{m} \log^3 n}$.
\end{proof}

With the concentration of $\mzero$ and $\mone$ at hand, we readily obtain estimates for $\vm_0^f, \vm_0^u, \vm_1^f$ and $\vm_1^u$. \geb{We remind ourselves that these are the number of flipped, unflipped negative tests and the number of flipped, unflipped positive tests as defined in Sec.~\ref{sec:design}.}

\begin{corollary} \label{cor_m0_m1}
 With probability $1-o(n^{-2})$ we have 
\begin{itemize}
    \item[(i)] $\mzero^f = e^{-d}pm + O \bc{\sqrt{m} \log^3 n}$
    \item[(ii)] $\mzero^u = e^{-d}(1-p)m + O \bc{\sqrt{m} \log^3 n}$
    \item[(iii)] $\mone^f = (1-e^{-d})qm + O \bc{\sqrt{m} \log^3 n}$
    \item[(iv)] $\mone^u = (1-e^{-d})(1-q)m + O \bc{\sqrt{m} \log^3 n}$
\end{itemize}
\end{corollary}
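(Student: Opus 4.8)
The plan is to condition on the test design $\G$ and the planted assignment $\SIGMA$ and to exploit that the $p$--$q$ channel noise is applied to each true outcome independently of this conditioning. Together $\G$ and $\SIGMA$ determine the noiseless outcomes $\SIGMA^*$ and hence the counts $\mzero$ and $\mone = m - \mzero$; conditionally on these, the displayed vector $\hat\SIGMA$ is obtained by flipping each of the $\mzero$ truly negative tests independently with probability $p$ and each of the $\mone$ truly positive tests independently with probability $q$. Consequently, conditional on $\mzero$ we have $\mzero^f \sim \Bin(\mzero, p)$ and, independently, $\mone^f \sim \Bin(\mone, q)$, and the remaining two quantities are recovered deterministically via $\mzero^u = \mzero - \mzero^f$ and $\mone^u = \mone - \mone^f$. (When $p=0$ or $q=0$ the corresponding binomial is degenerate at $0$ and the associated estimates are immediate, so I may assume $p,q \in (0,1)$ below.)

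First I would invoke \Lem~\ref{lem_m0}, which gives, with probability $1 - o(n^{-2})$, that $\mzero = e^{-d} m + O(\sqrt m \log^3 n)$ and hence $\mone = (1-e^{-d}) m + O(\sqrt m \log^3 n)$. Working on this event, I would then apply the binomial Chernoff bound \Lem~\ref{lem_chernoff_bin} to each conditional binomial, taking the parameter to be the flip probability and the deviation thresholds to be $p \pm t/\mzero$ (resp.\ $q \pm t/\mone$) with $t = \sqrt m \log n$. Since $\KL{\cdot}{\cdot}$ about a fixed interior parameter scales quadratically for small deviations, $\mzero \KL{p + t/\mzero}{p} = \Theta(t^2/m) = \Theta(\log^2 n)$, so each tail probability is of order $\exp\bc{-\Omega(\log^2 n)} = o(n^{-2})$. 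This yields $\mzero^f = p\,\mzero + O(\sqrt m \log n)$ and $\mone^f = q\,\mone + O(\sqrt m \log n)$, each with probability $1 - o(n^{-2})$.

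It then remains only to compose the error terms. Substituting the estimate for $\mzero$ gives $\mzero^f = p\bc{e^{-d} m + O(\sqrt m \log^3 n)} + O(\sqrt m \log n) = e^{-d} p m + O(\sqrt m \log^3 n)$, which is (i); part (ii) follows from $\mzero^u = \mzero - \mzero^f$, and parts (iii) and (iv) follow in exactly the same way from the estimate for $\mone$. Every error term is comfortably absorbed into the stated $O(\sqrt m \log^4 n)$, and a union bound over the constantly many exceptional events keeps the total failure probability at $o(n^{-2})$.

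The individual steps are routine, so I do not anticipate a serious obstacle; the only points that require a little care are verifying that the conditioning is legitimate — that is, that the channel flips are independent of $\G$ and $\SIGMA$, so that conditioning on $\mzero$ leaves a genuine $\Bin(\mzero, p)$ with no residual dependence — and checking that the accumulation of the \Lem~\ref{lem_m0} error with the binomial fluctuation indeed stays within the claimed $O(\sqrt m \log^4 n)$.
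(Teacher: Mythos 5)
Your proposal is correct and takes essentially the same route as the paper's own (one-sentence) proof: condition on $\mzero$ and $\mone$ using \Lem~\ref{lem_m0}, observe that the channel flips are independent so that $\mzero^f \sim \Bin(\mzero, p)$ and $\mone^f \sim \Bin(\mone, q)$, and conclude with the binomial Chernoff bound of \Lem~\ref{lem_chernoff_bin}. The only difference is that you spell out the error-term bookkeeping and the legitimacy of the conditioning, which the paper leaves implicit.
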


\begin{proof}
Since each test is flipped with probability $p$ and $q$ independently, the claims follow from \Lem~\ref{lem_m0} and the Chernoff bound for the binomial distribution (\Lem~\ref{lem_chernoff_bin}).
\end{proof}

In the following, let $\cE$ be the event that the bounds from \Lem~\ref{lem_m0} and \ref{cor_m0_m1} hold. \gebA{Note that $\cE$ holds with high probability.}

\section{Proof of COMP bound, Theorem \ref{thm_COMP}} \label{sec_COMP}

Recall from \eqref{eq:Nxdef} that we write $\vN_x$ for the number of displayed negative tests that item $x$ appears in (as illustrated by the right branch of Fig.~\ref{neighbor}).
The proof of \Thm~\ref{thm_COMP} is based on two pillars. First, \Lem s~\ref{lem_dist_neg_inf} and \ref{lem_dist_neg_healthy} provide the distribution of $\vN_x$ for healthy and infected individuals, respectively. We will see that these distributions differ according to the infection status of the individual. Second, we will derive a suitable threshold $\alpha \Delta$ via \Lem~\ref{lem_COMP_inf} and \ref{lem_COMP_healthy} to tell healthy and infected individuals apart \whp\ We start by analysing individuals in the infected set $V_1$. Throughout the section, we assume $\alpha \in (q, e^{-d}(1-p) + \bc{1-e^{-d}}q)$.

\begin{lemma} \label{lem_dist_neg_inf}
Given $x \in V_1$, its number of displayed negative tests $\vN_x$ is distributed as $\Bin(\Delta, q)$.
\end{lemma}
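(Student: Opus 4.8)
The plan is to condition on the infection status $\SIGMA$ and the random factor graph $\G$, and to compute the distribution of the displayed negative count $\vN_x$ for a fixed infected individual $x \in V_1$. The key observation is that because $x$ is infected, every single test $a \in \partial x$ that $x$ participates in automatically contains an infected individual, so it is truly positive regardless of which other individuals land in $a$. In symbols, $\SIGMA^*(a) = \vecone\cbc{\sum_{y \in \partial a}\SIGMA(y) > 0} = 1$ for every $a \in \partial x$, since the summand corresponding to $y = x$ already equals $1$. Thus all $\Delta$ tests incident to $x$ are truly positive, and this holds deterministically conditional on $x \in V_1$, independently of the placement of the remaining individuals.

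Next I would pass to the displayed outcomes. By the channel description in \eqref{eq:noisysigma}, each truly positive test is sent independently through the $p$--$q$ channel and is displayed as negative (i.e.\ flipped) with probability $q$, and these flips are mutually independent across tests by the definition of the channel. A test $a \in \partial x$ contributes to $\vN_x$ precisely when $\hat\SIGMA(a) = 0$, which for a truly positive test happens exactly when the Bernoulli$(1-q)$ variable in \eqref{eq:noisysigma} evaluates to $0$, an event of probability $q$. Therefore $\vN_x$ is a sum of $\Delta$ independent indicator variables, each equal to $1$ with probability $q$, which is precisely the definition of a $\Bin(\Delta, q)$ random variable.

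The main thing to be careful about — and what I expect to be the only genuine subtlety — is the \emph{independence} of the flip events across the $\Delta$ tests in $\partial x$. One must make sure that conditioning on $x \in V_1$ (and hence on all these tests being truly positive) does not introduce correlations among the channel noise variables. This is immediate from the modelling assumption: the noise is applied to the true outcomes $\SIGMA^*(a)$ independently test-by-test after the graph and $\SIGMA$ are fixed, so once we know every $a \in \partial x$ is truly positive, the displayed values $\hat\SIGMA(a)$ for $a \in \partial x$ are independent $\Be(1-q)$ draws. No appeal to the concentration lemmas of Section~\ref{sec_groundwork} is needed here, since the statement is an exact distributional identity rather than an asymptotic one; the degree $\Delta$ is a fixed deterministic quantity by the construction of $\G$. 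Consequently $\vN_x \sim \Bin(\Delta, q)$ exactly, completing the proof.
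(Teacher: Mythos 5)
Your proof is correct and follows essentially the same route as the paper's: every test containing the infected individual $x$ is truly positive, and the $\Delta$ independent channel flips (each displaying a negative with probability $q$) immediately give $\vN_x \sim \Bin(\Delta, q)$. Your additional care about independence surviving the conditioning on $x \in V_1$ is a fine elaboration but not a different argument.
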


\begin{proof}
Any test containing an infected individual is truly positive because of the presence of the infected individual. Since an infected individual is assigned to $\Delta$ different tests and each such test is flipped with probability $q$ independently, the lemma follows immediately.
\end{proof}

Next, we consider the distribution for healthy individuals. Recall that $\cE$ denotes the event that the bounds from \Lem~\ref{lem_m0} and \Cor~\ref{cor_m0_m1} hold.

\begin{lemma} \label{lem_dist_neg_healthy}
Given $x \in V_0$ and \gebA{conditioned on } $\cE$, \change{the total variation distance of the distribution of $\vN_x$ and $\vec{T}_h$ that is distributed as $H \bc{m, m \bc{e^{-d}(1-p)+\bc{1-e^{-d}}q }, \Delta}$  tends to zero with $n$, that is
$$d_{TV}(\vN_x,\vec{T}_h)=n^{-\Omega(1)}$$}
\end{lemma}
\begin{proof}
Since $x$ is healthy, the outcome of all the tests remains the same if it is removed from consideration (if we perform group testing with $n-1$ items and the corresponding reduced matrix).
 
Thus, given $\cE$, we find that with $x$ removed the $\mzero^f, \mzero^u, \mone^f, \mone^u$ still satisfy the bounds from \Cor~\ref{cor_m0_m1}. As a result the number of displayed negative tests (which consist of unflipped truly negative tests and flipped truly positive tests) is given by
\begin{equation} \label{eq:dispneg}
\mzero^u + \mone^f = \left( e^{-d} (1-p) + (1-e^{-d}) q \right) m
+ O \bc{\sqrt{m} \log^3 n}
\end{equation}
Now,  adding $x$ back into consideration: $x\in V_0$ chooses $\Delta$ tests without replacement independently of this. 
\change{Hence, given that the random quantity $\mzero^u + \mone^f = \ell$, the $\vN_x$ (the number of displayed negative tests that item $x$ appears in) is distributed as $H(m, \ell, \Delta)$. Hence, a conditioning argument shows that the linear combination of distribution functions
 $$ \sum_\ell \pr\bc{ \mzero^u + \mone^f = \ell} \pr( H \bc{m,\ell, \Delta} \leq x)$$
  tends to the distribution function of $H \bc{m, m \bc{e^{-d}(1-p)+\bc{1-e^{-d}}q }, \Delta}$ in total variation distance, due to the concentration of $\mzero^u + \mone^f$ as obtained in \Cor~\ref{cor_m0_m1}. 
}
\end{proof}

Moving to the second pillar of the proof, we need to demonstrate that no infected individual is assigned to more than $\alpha \Delta$ displayed negative tests as shown by the following lemma.

\begin{lemma} \label{lem_COMP_inf}
If $c > (1+\eta)\frac{\theta}{1-\theta} \frac{1}{d \KL{\alpha}{q}}$ for some small $\eta>0$, $\vN_x < \alpha \Delta$ for all $x \in V_1$ \whp\
\end{lemma}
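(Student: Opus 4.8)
The plan is to show that with the stated lower bound on $c$, the probability that any single infected individual $x \in V_1$ appears in at least $\alpha\Delta$ displayed negative tests is so small that a union bound over all $k$ infected individuals still goes to zero. By \Lem~\ref{lem_dist_neg_inf}, for $x \in V_1$ the count $\vN_x$ is exactly $\Bin(\Delta, q)$, so there is no conditioning on $\cE$ needed here and the distribution is clean.

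First I would invoke the upper-tail Chernoff bound for the binomial distribution from \Lem~\ref{lem_chernoff_bin}. Since we assume $\alpha > q$ (the threshold lies above the mean fraction $q$), the event $\{\vN_x \geq \alpha\Delta\}$ is a genuine upper-tail deviation, and the lemma gives
\begin{align*}
    \Pr\bc{\vN_x \geq \alpha\Delta} = \exp\bc{-\bc{1+\Delta^{-\Omega(1)}}\Delta\,\KL{\alpha}{q}}.
\end{align*}
Next I would substitute the parametrisation $\Delta = cd\log(n/k)$ from \eqref{eq:param}, turning the exponent into $-(1+o(1))\,cd\,\KL{\alpha}{q}\log(n/k)$, so that the per-individual failure probability is $(n/k)^{-(1+o(1))cd\KL{\alpha}{q}}$.

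Then I would close with a union bound over the $k \sim n^\theta$ infected individuals. The probability that some $x \in V_1$ violates the threshold is at most $k\,(n/k)^{-(1+o(1))cd\KL{\alpha}{q}}$. Writing $k \sim n^\theta$ gives $n/k \sim n^{1-\theta}$, so $\log k \sim \theta\log n$ and $\log(n/k) \sim (1-\theta)\log n$; thus $k = (n/k)^{\theta/(1-\theta)+o(1)}$. The union bound is therefore $o(1)$ precisely when the exponent $cd\,\KL{\alpha}{q}$ exceeds $\theta/(1-\theta)$, which is exactly the hypothesis $c > (1+\eta)\frac{\theta}{1-\theta}\frac{1}{d\KL{\alpha}{q}}$: the factor $(1+\eta)$ supplies the slack that beats the $o(1)$ corrections in the exponent and drives the bound to zero. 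The main thing to be careful about is bookkeeping the $(1+o(1))$ correction from the Chernoff estimate against the $\theta/(1-\theta)$ growth of the union-bound factor, confirming that the strict inequality with the $\eta$-margin makes the total exponent strictly negative of order $\log(n/k)$; this is routine rather than a genuine obstacle, since both quantities scale linearly in $\log(n/k)$.
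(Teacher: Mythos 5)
Your proposal is correct and follows essentially the same route as the paper's own proof: invoke \Lem~\ref{lem_dist_neg_inf} to identify $\vN_x$ as $\Bin(\Delta,q)$, apply the Chernoff bound of \Lem~\ref{lem_chernoff_bin}, take a union bound over the $k$ infected individuals, and use $k \sim n^\theta$ together with $\Delta = cd(1-\theta)\log n$ to see that the hypothesis on $c$ makes the total exponent negative. Your extra bookkeeping of the $(1+o(1))$ corrections against the $\eta$-margin is exactly the ``rearranging terms'' step the paper leaves implicit.
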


\begin{proof}
We have to ensure that $\Pr(\exists x\in V_1:\vN_x \geq \alpha \Delta)=o(1)$.
By \Lem~\ref{lem_dist_neg_inf} and the union bound, we thus need to have
\begin{align*}
  o(1)=  k \cdot \Pr \bc{\vN_x \geq \alpha \Delta : x \in V_1} = k \cdot \Pr \bc{\Bin(\Delta,q) \geq \alpha \Delta} = k
  \cdot \exp \left( - \left(1+ \Delta^{-\Omega(1)} \right) \Delta \KL{\alpha}{q} \right),
\end{align*}
by the Chernoff bound for the binomial distribution (\Lem~\ref{lem_chernoff_bin}). Since $k \sim n^\theta$ and $\Delta = c d (1-\theta) \log n$ \gebA{the following must hold}
\begin{align*}
    \theta-cd(1-\theta)\KL{\alpha}{q}<0
\end{align*}
The lemma follows from rearranging terms \geb{and the fact that if we choose the number of tests slightly above the required number of tests (larger by a factor of $1+\eta$ for $\eta > 0$), the assertion holds \whp\ as $n\rightarrow \infty$}.
\end{proof}

We proceed to show that no healthy individual is assigned to less than $\alpha \Delta$ displayed negative tests.

\begin{lemma} \label{lem_COMP_healthy}
If $c > (1+\eta) \frac{1}{1-\theta} \frac{1}{d \KL{\alpha}{e^{-d}(1-p)+\bc{1-e^{-d}}q}}$ for some small $\eta>0$,
$\vN_x > \alpha \Delta$ for all $x \in V_0$ \whp\
\end{lemma}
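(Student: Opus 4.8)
The plan is to mirror the structure of \Lem~\ref{lem_COMP_inf}, but now controlling healthy individuals from the opposite side. By \Lem~\ref{lem_dist_neg_healthy}, conditioned on the good event $\cE$, each $\vN_x$ for $x \in V_0$ is hypergeometrically distributed, specifically $\vN_x \sim H\bc{m, s m, \Delta}$ where $s = e^{-d}(1-p)+\bc{1-e^{-d}}q + n^{-\Omega(1)}$. The assumption $\alpha < e^{-d}(1-p)+\bc{1-e^{-d}}q = s + o(1)$ means $\alpha$ sits strictly below the mean-fraction of the hypergeometric, so it is the \emph{lower}-tail bound we require: we want to rule out any healthy individual appearing in \emph{too few} displayed negative tests. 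The overall target is $\Pr(\exists x \in V_0: \vN_x \leq \alpha \Delta) = o(1)$.

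First I would note that $\cE$ holds with probability $1 - o(n^{-2})$ by \Lem~\ref{lem_m0} and \Cor~\ref{cor_m0_m1}, so it suffices to prove the bound on the event $\cE$ and absorb the failure probability. Working on $\cE$, I would apply the union bound over the at most $n$ healthy individuals, reducing the task to showing $n \cdot \Pr(\vN_x \leq \alpha \Delta) = o(1)$ for a single $x \in V_0$. Then I would invoke the Chernoff bound for the hypergeometric distribution (\Lem~\ref{lem_chernoff_hyp}), with sample size $\Delta$ and success-fraction $s$, to obtain
\begin{align*}
\Pr\bc{\vN_x \leq \alpha \Delta} = \exp\bc{-\bc{1 + \Delta^{-\Omega(1)}} \Delta \KL{\alpha}{e^{-d}(1-p)+\bc{1-e^{-d}}q}},
\end{align*}
where the $n^{-\Omega(1)}$ perturbation of the success-fraction is swallowed by the $(1+\Delta^{-\Omega(1)})$ factor because $\KL{\cdot}{\cdot}$ is continuous in its second argument.

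From here the calculation is routine and parallels \Lem~\ref{lem_COMP_inf}: using $n \sim k^{1/\theta}$ so that $\log n = \frac{1}{1-\theta}\log(n/k)$, and $\Delta = cd\log(n/k)$, the requirement $n \cdot \exp(-\Delta \KL{\alpha}{\cdot}) = o(1)$ becomes $1 - \frac{cd(1-\theta)}{1}\KL{\alpha}{e^{-d}(1-p)+\bc{1-e^{-d}}q} < 0$, which rearranges to exactly $c > (1+\eta)\frac{1}{1-\theta}\frac{1}{d\KL{\alpha}{e^{-d}(1-p)+\bc{1-e^{-d}}q}}$, giving the stated threshold. I anticipate the only genuine subtlety — rather than an obstacle — is the bookkeeping around the $n^{-\Omega(1)}$ drift in the hypergeometric parameter and the conditioning on $\cE$: one must check that replacing the exact success-fraction by its nominal value $e^{-d}(1-p)+\bc{1-e^{-d}}q$ inside the KL divergence costs only a $(1+o(1))$ factor, which is exactly the form in which \Lem~\ref{lem_chernoff_hyp} is stated, so it fits cleanly. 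Combining \Lem~\ref{lem_COMP_inf} and this lemma, taking the max of the two thresholds $b_1$ and $b_2$ and then optimizing over $\alpha$ and $d$, yields the bound $\mcomp$ of \Thm~\ref{thm_COMP}.
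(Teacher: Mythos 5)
Your proposal is correct and follows essentially the same route as the paper's proof: condition on the event $\cE$, invoke \Lem~\ref{lem_dist_neg_healthy} to get the hypergeometric distribution of $\vN_x$ for $x \in V_0$, take a union bound over the at most $n$ healthy individuals, apply the hypergeometric Chernoff bound (\Lem~\ref{lem_chernoff_hyp}), and rearrange using $k \sim n^\theta$ and $\Delta = cd\log(n/k)$ to obtain the stated threshold on $c$. Your explicit remark that the $n^{-\Omega(1)}$ perturbation in the success-fraction is absorbed by continuity of the Kullback--Leibler divergence is a point the paper leaves implicit, but it is the same argument.
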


\begin{proof}
We need to ensure that $\Pr(\exists x\in V_0: \vN_x < \alpha \Delta)=o(1)$.
Since $\cE$ occurs \whp\  by \Lem~\ref{lem_m0} and \Cor~\ref{cor_m0_m1}, we need to have by \Lem~\ref{lem_dist_neg_healthy} and the union bound that
\begin{align} \label{eq_bound1_healthy}
   (n-k) \cdot \Pr \bc{\vN_x \leq \alpha \Delta | x \in V_0,\cE} \leq n \cdot \Pr \bc{\change{\vec{T}_h} \leq \alpha \Delta} = o(1).
\end{align}
\change{We remind ourselves that $\vec{T}_h\sim H \bc{m, m \bc{e^{-d}(1-p)+\bc{1-e^{-d}}q}, \Delta}$ and }
together with the Chernoff bound for the hypergeometric distribution (\Lem~\ref{lem_chernoff_hyp}) this \gebA{leads to the following condition\footnote{\gebA{Note that the additive rule of the logarithm allows us to move the error term from inside the KL-divergence to outside}}}
\begin{align*}
    1-cd(1-\theta) \KL{\alpha}{(1-p) e^{-d}+(1-e^{-d})q}<0
\end{align*}
in a similar way to the proof of Lemma \ref{lem_COMP_inf}.
The lemma follows from rearranging terms \geb{and the fact that if we choose the number of tests slightly above the required number of tests (larger by a factor of $1+\eta$ for $\eta > 0$), the assertion holds \whp\ as $n\rightarrow \infty$}.
\end{proof}

\begin{proof}[Proof of \Thm~\ref{thm_COMP}]
The theorem is now an immediate consequence of \Lem~\ref{lem_COMP_inf} and \ref{lem_COMP_healthy} which guarantee that \whp\ classifying individuals according to the threshold $\alpha \Delta$ for negative displayed tests recovers $\SIGMA$, and the fact that the choice of $\alpha$ and $d$ is at our disposal.
\end{proof}

\section{Proof of DD bound, Theorem \ref{thm_DD}} \label{sec_DD}

The proof of \Thm~\ref{thm_DD} follows a similar two-step approach as the proof of \Thm~\ref{thm_COMP} by first finding the distribution of $\vP_x$ (the number of displayed positive tests where individual $x$ appears on its own after removing the 
individuals\gebA{, which were declared healthy already, } $V_0 \setminus \zeroplus$, illustrated \gebA{by  DP-S in}  Fig.~\ref{neighbor}). We then threshold the distributions for healthy and infected individuals. To get started, we revise the second bound from \Thm~\ref{thm_COMP} to allow $k n^{-\Omega(1)}$ healthy individuals to not be classified yet after the first step of {\tt DD}. Recall that, we assume $\alpha \in (q, e^{-d}(1-p) + \bc{1-e^{-d}}q)$ and $\beta \in (0, e^{-d}(1-q))$.

\begin{lemma} \label{lem_COMP_healthy_2}
If 
\begin{align*}
	c > (1+\eta)\frac{1}{d \KL{\alpha}{e^{-d}(1-p)+\bc{1-e^{-d}}q}}
\end{align*}
for some small $\eta>0$, we have 
$\change{\abs{\vzeroplus}} = k n^{-\Omega(1)}$ \whp\
\end{lemma}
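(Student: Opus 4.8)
The plan is to re-run the argument of \Lem~\ref{lem_COMP_healthy}, but with the union bound relaxed so that we only need the expected number of unclassified healthy individuals to be $kn^{-\Omega(1)}$ rather than $o(1)$, which is precisely why the prefactor drops the $\theta/(1-\theta)$ factor and reads $1/(d \KL{\alpha}{e^{-d}(1-p)+(1-e^{-d})q})$. Recall that $\zeroplus = \cbc{x \in V_0 : \vN_x < \alpha \Delta}$, so $\abs{\zeroplus} = \sum_{x \in V_0} \vecone\cbc{\vN_x < \alpha\Delta}$. First I would compute the expectation: by \Lem~\ref{lem_dist_neg_healthy}, on the event $\cE$ (which holds \whp\ by \Lem~\ref{lem_m0} and \Cor~\ref{cor_m0_m1}), each healthy $x$ has $\vN_x$ distributed as $H(m, m(e^{-d}(1-p)+(1-e^{-d})q + n^{-\Omega(1)}), \Delta)$, so the Chernoff bound for the hypergeometric distribution (\Lem~\ref{lem_chernoff_hyp}) gives
\begin{align*}
	\Pr\bc{\vN_x < \alpha\Delta \mid x \in V_0, \cE} = \exp\bc{-\bc{1+\Delta^{-\Omega(1)}}\Delta \KL{\alpha}{e^{-d}(1-p)+(1-e^{-d})q}}.
\end{align*}

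Next I would take expectations over the at most $n$ healthy individuals. Using $\Delta = cd(1-\theta)\log n + o(\log n)$ from \eqref{eq:param} together with $k \sim n^\theta$, the expected size is
\begin{align*}
	\Erw\brk{\abs{\zeroplus} \mid \cE} \leq n \cdot \exp\bc{-\bc{1+\Delta^{-\Omega(1)}}\Delta \KL{\alpha}{e^{-d}(1-p)+(1-e^{-d})q}} = n^{1 - cd(1-\theta)\KL{\alpha}{e^{-d}(1-p)+(1-e^{-d})q} + o(1)}.
\end{align*}
Under the hypothesis $c > (1+\eta)\,1/\bc{d\KL{\alpha}{e^{-d}(1-p)+(1-e^{-d})q}}$, the exponent of $n$ is $1 - (1+\eta)(1-\theta) + o(1)$; since $k \sim n^\theta$ this equals $\theta - \eta(1-\theta) + o(1)$, so that $\Erw\brk{\abs{\zeroplus} \mid \cE} \leq k\, n^{-\eta(1-\theta)+o(1)} = k n^{-\Omega(1)}$.

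Finally I would upgrade this expectation bound to a \whp\ statement via Markov's inequality: for instance $\Pr\bc{\abs{\zeroplus} \geq k n^{-\eta(1-\theta)/2} \mid \cE} \leq n^{-\eta(1-\theta)/2 + o(1)} = o(1)$, and since $\cE$ itself holds \whp, we conclude $\abs{\zeroplus} = k n^{-\Omega(1)}$ \whp. The only genuine subtlety — and the step I would watch most carefully — is the bookkeeping in the exponent: one must verify that replacing the $o(1)$ success probability by the weaker $kn^{-\Omega(1)}$ target is exactly what removes the $\theta/(1-\theta)$ prefactor, and that the $n^{-\Omega(1)}$ perturbation of the hypergeometric parameter in \Lem~\ref{lem_dist_neg_healthy} as well as the $\Delta^{-\Omega(1)}$ correction in the Chernoff exponent are both absorbed into the final $o(1)$ without affecting the leading constant. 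This is the same threshold calculation as in \Lem~\ref{lem_COMP_healthy}, merely with a larger admissible failure budget, so no new technique is required.
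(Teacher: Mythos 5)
Your proof is correct and is essentially the paper's own argument written out in full: the paper likewise re-runs the threshold computation behind \eqref{eq_bound1_healthy} with the relaxed target $kn^{-\delta}$ in place of $o(1)$ (which is exactly what removes the $\theta/(1-\theta)$ prefactor) and then applies Markov's inequality. Your bookkeeping of the exponent and the absorption of the $n^{-\Omega(1)}$ and $\Delta^{-\Omega(1)}$ corrections into the $o(1)$ term match the intended calculation.
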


\begin{proof}
The lemma follows immediately by replacing the r.h.s. of \eqref{eq_bound1_healthy} with $k n^{-\delta}$ for some small $\delta = \delta(\eta)$, rearranging terms and applying Markov's inequality.
\end{proof}

For the next lemmas, we need an auxiliary notation denoting the number of tests $\mzerond$ that only contain individuals from $V_0 \setminus \zeroplus$. In symbols,
\begin{align*}
	\mzerond = \abs{\cbc{a \in F: \partial a \subset V_0 \setminus \zeroplus}}.
\end{align*}

\begin{lemma} \label{lem_m0nd}
If 
\begin{align*}
	c > (1+\eta) \frac{1}{d \KL{\alpha}{e^{-d}(1-p)+\bc{1-e^{-d}}q}}
\end{align*}
for some small $\eta>0$, we have $\mzerond = \bc{1-n^{-\Omega(1)}} e^{-d} m$ with probability $1-o(n^{-2})$.
\end{lemma}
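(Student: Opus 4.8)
The plan is to show that the number of tests $\mzerond$ consisting solely of definitely-healthy individuals (those in $V_0 \setminus \zeroplus$) is asymptotically the same as the total number of truly negative tests $\mzero = e^{-d}m(1+o(1))$ from \Lem~\ref{lem_m0}. The key observation is that $\mzerond$ and $\mzero$ can only differ on two accounts: a truly negative test might contain an individual from $\zeroplus$ (so it is counted by $\mzero$ but not $\mzerond$), and conversely every test counted by $\mzerond$ is automatically truly negative (since $V_0 \setminus \zeroplus \subseteq V_0$ contains no infected individual), so $\mzerond \leq \mzero$ deterministically. Thus it suffices to bound the number of truly negative tests that contain at least one individual from $\zeroplus$, and show this is of lower order, namely $mn^{-\Omega(1)}$.

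First I would invoke \Lem~\ref{lem_COMP_healthy_2}, whose hypothesis on $c$ matches the hypothesis of the present lemma exactly, to conclude that $\abs{\zeroplus} = kn^{-\Omega(1)}$ \whp\ Next I would bound the number of tests incident to $\zeroplus$. Each individual in $\zeroplus$ is assigned to exactly $\Delta$ tests, so the number of tests containing at least one $\zeroplus$-individual is at most $\abs{\zeroplus} \cdot \Delta$. Using $\abs{\zeroplus} = kn^{-\Omega(1)}$, $\Delta = cd\log(n/k)$, and $m = ck\log(n/k)$, we get
\begin{align*}
	\abs{\zeroplus} \cdot \Delta \leq kn^{-\Omega(1)} \cdot cd\log(n/k) = mn^{-\Omega(1)},
\end{align*}
since the extra logarithmic factor is absorbed into the polynomial saving $n^{-\Omega(1)}$. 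Combining with $\mzerond \leq \mzero$ and the lower bound $\mzerond \geq \mzero - \abs{\zeroplus}\Delta$, we obtain $\mzerond = \mzero - mn^{-\Omega(1)} = e^{-d}m(1 - n^{-\Omega(1)}) + O(\sqrt{m}\log^3 n)$, and the error terms all fold into the claimed multiplicative $(1 - n^{-\Omega(1)})e^{-d}m$.

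The only subtlety—and the main point requiring care—is the interplay of probabilistic events and their failure probabilities. The bound $\abs{\zeroplus} = kn^{-\Omega(1)}$ holds \whp, whereas the conclusion is asserted with the stronger probability $1-o(n^{-2})$ (matching \Lem~\ref{lem_m0}). I would handle this by working on the event $\cE$ together with a high-probability event controlling $\abs{\zeroplus}$; to upgrade to probability $1-o(n^{-2})$ one should either sharpen \Lem~\ref{lem_COMP_healthy_2} via the same union bound run with room to spare (the Chernoff estimate in \eqref{eq_bound1_healthy} actually gives a polynomial saving, so a slightly stronger concentration is available) or argue that the deterministic inequality $\mzerond \leq \mzero$ combined with \Lem~\ref{lem_m0} already pins down $\mzerond$ from above with probability $1-o(n^{-2})$, while the matching lower bound only needs $\abs{\zeroplus}\Delta = mn^{-\Omega(1)}$, which itself can be guaranteed with the requisite probability by a Markov/Chernoff argument on the number of $\zeroplus$-incident tests. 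The deterministic sandwich $\mzero - \abs{\zeroplus}\Delta \leq \mzerond \leq \mzero$ is what makes the whole argument short, so I expect essentially no obstacle beyond this bookkeeping of failure probabilities.
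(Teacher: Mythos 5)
Your proof is correct, but it takes a genuinely different route from the paper's. The paper proceeds by a two-round exposure plus a martingale argument: it first pins down $\mzero$ via \Lem~\ref{lem_m0}, then computes for each truly negative test $a$ that $\Pr\bc{\zeroplus \cap \partial a = \emptyset} = \bc{1-d/k}^{\abs{\zeroplus}} = 1-n^{-\Omega(1)}$, concludes $\Erw\brk{\mzerond} = \bc{1-n^{-\Omega(1)}}e^{-d}m$, and finishes with Azuma--Hoeffding (changing the tests of one individual in $V_1 \cup \zeroplus$ shifts $\mzerond$ by at most $\Delta$). Your deterministic sandwich $\mzero - \abs{\zeroplus}\Delta \leq \mzerond \leq \mzero$ replaces both the expectation computation and the concentration step: once $\cE$ holds and $\abs{\zeroplus} = kn^{-\Omega(1)}$, the conclusion is immediate with no further randomness to control. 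This is more elementary, and it also quietly sidesteps a dependency that the paper's expectation computation glosses over --- membership in $\zeroplus$ is itself a function of the individuals' test assignments, so one cannot treat the assignments of $\zeroplus$-individuals as fresh uniform choices when computing $\Pr\bc{\zeroplus \cap \partial a = \emptyset}$; your counting bound needs no such independence. Both arguments share the same bookkeeping weakness, which you correctly flag: \Lem~\ref{lem_COMP_healthy_2} controls $\abs{\zeroplus}$ only \whp\ (it is proved by Markov's inequality), while the lemma asserts probability $1-o(n^{-2})$, and the paper's own proof inherits this gap. Your proposed repair is the right one, with one caveat: Markov alone cannot be pushed to $o(n^{-2})$ here, since the expectation bound on $\abs{\zeroplus}$ only saves a factor $n^{-\eta(1-\theta)+o(1)}$ with $\eta$ small; one must take the Chernoff route. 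That route does work: conditional on the displayed test outcomes (which are determined by the infected individuals' assignments and the noise alone), the events $\cbc{x \in \zeroplus}$ for $x \in V_0$ are independent, so $\abs{\zeroplus}$ is conditionally binomial, and a Chernoff bound at a threshold polynomially above its mean gives $\abs{\zeroplus} = kn^{-\Omega(1)}$ with probability $1-o(n^{-2})$, completing your argument at the stated probability level.
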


\begin{proof} As in the proof of Lemma \ref{lem_dist_neg_healthy} above, we consider the graph in two rounds: in the first round we consider the tests containing infected individuals. Since each healthy individual $x \in V_0$ does not impact the number of positive and negative tests, we know by \Lem~\ref{lem_m0} that with probability $1-o(n^{-2})$ we find that the number of truly negative tests $\mzero = e^{-d} m + O \bc{\sqrt{m} \log^4 n}$ after the first round.
\change{Furthermore the presence of a healthy individual has no impact on the number of displayed negative tests, as unflipped negative tests remain unflipped and flipped positive tests remain flipped. 
In the second round, we consider the effect of adding healthy individuals into the tests. Knowing the number of negative tests \whp\; we can think of the participation of individuals $x\in V_{0,\text{PD}}$ in these tests as a \textit{balls into bins} experiment. Starting with the  
number of truly negative tests $\mzero$ (given by the first round) we conduct a worst case analysis to see how many of those tests may  include one of the $x\in V_{0,\text{PD}}$. } 
Consider some particular truly negative test $a$. \change{We are interested in the probability that none of the elements of $V_{0,\text{PD}}$ is contained.} The probability that a \change{given individual $x\in V_{0,\text{PD}}$ (knowing that it participates in $N_x \leq \alpha \Delta$ displayed negative tests, which is of lower order than $m$)} is assigned to this test is \change{given by\footnote{\change{We refer the reader to \cite{Gebhard_2021} for two results we use while obtaining \eqref{reform} (apply Claim~7.3 to the binomial coefficients) as well as \eqref{reform2}(apply Claim~7.4 as error corrected version of Bernoulli's inequality).Please note that these bounds in particular hold for $\Delta=\Theta(\log(n))$ and $k\sim n^\theta$.}}
\begin{align}
    \Pr \bc{x \in \partial a\change{|x\in V_{0,\text{PD}}}} &= 1 - \Pr \bc{x \notin \partial a\change{|x\in V_{0,\text{PD}}}}\\
    &= 1 - \sum_{i=0}^{\alpha\Delta}\pr\bc{\vec N_x=i |x \in V_{0,PD}}\binom{m-1}{\Delta-i} \binom{m}{\Delta-i}^{-1} \\
    &\leq 1 - \big(1+n^{-\Omega(1)}\big)\sum_{i=0}^{\alpha\Delta}\pr\bc{\vec N_x=i |x \in V_{0,PD}}\bc{1-\frac{1}{m}}^{\Delta-i}\label{reform}\\
    &\leq  1 -  \big(1+n^{-\Omega(1)}\big)\sum_{i=0}^{\alpha\Delta}\pr\bc{\vec N_x=i |x \in V_{0,PD}}\bc{1-\frac{1}{m}}^{\Delta}= \big(1+n^{-\Omega(1)}\big)\bc{\frac{\Delta}{m}+O(k^{-2})}=\frac{d}{k}+O(k^{-2})\label{reform2}
\end{align}}

\change{We can now calculate the probability that no individual $x \in \zeroplus$ is assigned to $a$, bearing in mind that the size of $\zeroplus$ is random, and that
each such individual is assigned to tests mutually independently. Using \eqref{reform2}, and decomposing the sum into two parts, this is given by (for a given $V$)
\begin{align*}
    \Pr \bc{\big\{\zeroplus \cap \partial a\big\} = \emptyset}&=\sum_{j=0}^{n}\Pr\bc{\abs{\vec{V_{0,\textbf{PD}}}}=j}\Pr\bc{\big\{\zeroplus \cap \partial a \big\}= \emptyset\Big|\abs{\vec{V_{0,\textbf{PD}}}}=j}\\
    &= \sum_{j=0}^{V}\Pr\bc{\abs{\vec{V_{0,\textbf{PD}}}}=j}\bc{1-\frac{d}{k}+O\bc{k^{-2}}}^j
    + \sum_{j=V+1}^{n}\Pr\bc{\abs{\vec{V_{0,\textbf{PD}}}}=j}\bc{1-\frac{d}{k}+O\bc{k^{-2}}}^j \\
    &\geq \sum_{j=0}^{V}\Pr\bc{\abs{\vec{V_{0,\textbf{PD}}}}=j}\bc{1-\frac{d}{k}+O\bc{k^{-2}}}^V
= \pr\bc{ \abs{\vec{V_{0,\textbf{PD}}}} \leq V} \bc{1-\frac{d}{k}+O\bc{k^{-2}}}^V
\end{align*}
By \Lem~\ref{lem_COMP_healthy_2}, we can choose $V = k n^{-\Omega(1)}$ such that $\pr\bc{ \abs{\vec{V_{0,\textbf{PD}}}} \leq V}$ is arbitrarily close to 1, and knowing that $\bc{1-\frac{d}{k}+O\bc{k^{-2}}}^V \simeq \exp(- d V/k) = \exp( -d n^{-\Omega(1)})$
 we find
$$\Pr \bc{\big\{\zeroplus \cap \partial a\big\}=\emptyset}=1-n^{-\Omega(1)}.$$
By combining this with the findings of Lemma~\ref{lem_m0} we find $\Erw \brk{\mzerond} = \bc{1-n^{-\Omega(1)}} e^{-d} m$. The lemma follows by a similar application of the 
\gebA{McDiarmid} inequality as used in the proof of \Lem~\ref{lem_m0}.
}

\end{proof}
\change{Note that, changing the set of tests for a specific individual $x \in V_1 \cup \zeroplus$ shifts $\mzerond$ by at most $\Delta$. Thus, such an individual choosing from this set is not affecting the order of $\mzerond$.}\\
Let $\cF$ be the event that $\mzerond = \bc{1-n^{-\Omega(1)}} e^{-d} m$. By \Lem~\ref{lem_m0nd}, $\Pr \bc{\cF} = 1-o(n^{-2})$ if 
\begin{align*}
	c > (1+\eta) \frac{1}{d \KL{\alpha}{e^{-d}(1-p)+\bc{1-e^{-d}}q}}
\end{align*}
for some small $\eta>0$.
With \Lem~\ref{lem_m0nd} at hand, we are in a position to describe the distribution of $\vP_x$ for healthy and infected individuals \change{(recall the definition of $\vP_x$ in \eqref{eq:Pxdef})}. Let us start with infected individuals.


\begin{lemma} \label{lem_dist_pos_inf}
Given $x \in V_1$ and conditioned on $\cF$, \change{the total variation distance between
$\vP_x$ and $\vec{Q}_H$, a random variable with hypergeometric distribution
$H\bc{m, m e^{-d} (1-q), \Delta}$, tends to zero with $n$, that
is 
$$d_{TV} \left( \vP_x, \vec{Q}_H \right) = n^{-\Omega{(1)}}.$$}
\end{lemma}
\begin{proof}
\change{
We are interested in the neighborhood structure of one given infected individual $x \in V_1$, and 
we check how the remaining individuals influence the  test types. 
In particular we are interested in the number of tests $a \in F$ such that $\partial a \subset V_0 \setminus \zeroplus$ are contained in the neighborhood of an infected individual $x$. Knowing the total number of tests $m$ and fixed degree $\Delta$, for a given value of the random quantity $\mzerond =\ell$, we find that this quantity of interest follows a $H \bc{m,\ell, \Delta}$-distribution. 
Given $\cF$, \Lem~\ref{lem_m0nd} gives that $\mzerond$ is highly concentrated,
 $$\mzerond = \bc{1-n^{-\Omega(1)}} e^{-d} m$$
 with high probability. Hence a conditioning argument, similar to Lemma B.2, shows that the 
 linear combination of distribution functions
 $$ \sum_\ell \pr( \mzerond = \ell) \pr( H \bc{m,\ell, \Delta} \leq x)$$
  tends to the distribution function of $H \bc{m, m e^{-d}, \Delta}$ in total variation distance, due to the concentration result obtained in Lemma~\ref{lem_m0nd}.
%
Since each test featuring $x$ will truly be positive (as we assume $x$ to be infected) and will be displayed positive with probability $1-q$ independently, the lemma follows immediately.}
%
\end{proof}

To describe the distribution of $\vP_x$ for healthy individuals, let us introduce the random variable $\vP_x(P)$, which is $\vP_x$ conditioned on the individual appearing in $P$ displayed positive tests, as follows:
\begin{align*}
	\Pr \bc{\vP_x(P) = t} = \Pr \bc{\vP_x = t | \vN_x = \Delta - P} 
\end{align*}
Then, we find for healthy individuals the following conditional distribution.

\begin{lemma} \label{lem_dist_pos_healthy}
Given $x \in V_0$ \change{,conditioned on $\cE$} and $\cF$, \change{the total variation distance between $\vP_x(P)$ and \\
$\vec{B}_h\sim H\bc{m \bc{e^{-d}p+(1-e^{-d})(1-q)},m \bc{e^{-d}p}, P}$ tends to zero with $n$. That is
$$d_{TV}(\vP_x(P),\vec{B}_h)=n^{-\Omega(1)}.$$}
\end{lemma}
\begin{proof}
We proceed with the same exposition \change{and reasoning }as in the proof of \Lem~\ref{lem_dist_pos_inf}. 
\change{Due to the fact that $x$ is healthy we can remove it without affecting the test result. Therefore we can analyse its neighborhood structure induced by the pooling graph while excluding it. Since by assumption individual $x \in V_0$ is assigned to exactly $P$ displayed positive and the total number of displayed positive test is given by $\mzero^f+\mone^u$, we see that $\vP_x(P)$ is $H\bc{\mzero^f+\mone^u, \mzerond, P}$-distributed.
Due to the fact that the event $\cE$  pinpoints the amount of displayed positive and negative tests we can derive the distribution of neighbors the individual may choose from. Recalling the results of \Cor~\ref{cor_m0_m1}, we see that \whp
\begin{align*}
    \mzero^f &= e^{-d}pm + O \bc{\sqrt{m} \log^3 n},\\
    \text{and }\mone^u &= (1-e^{-d})(1-q)m + O \bc{\sqrt{m} \log^3 n}.
\end{align*}
Furthermore we get from \Lem~\ref{lem_m0nd} that \whp
$$\mzerond = \bc{1-n^{-\Omega(1)}} e^{-d} m.$$
Now we apply the concentration results obtained in \Cor~\ref{cor_m0_m1} and \Lem~\ref{lem_m0nd} to obtain a linear combination of distribution functions
 $$ \sum_{\ell,v} \pr( \mzerond = \ell,\mzero^f+\mone^u=v) \cdot \pr( H \bc{v,\ell, \Delta} \leq x)$$
 that tends to $H\bc{m \bc{e^{-d}p+(1-e^{-d})(1-q)},m e^{-d}, P}$.
 The lemma follows since truly negative tests get flipped independently with probability $p$. 
 }
\end{proof}

Having derived the distributions for $\vP_x$ for $x \in V_1$ and $\vP_x(P)$ for $x \in V_0$ we can now determine a threshold $\beta \Delta$ of displayed positive tests where the individual appears only with individuals from the set $V_0 \setminus \zeroplus$ such that we can tell $V_1$ and $\zeroplus$ apart and thus recover $\SIGMA$. Let us start with infected individuals.

\begin{lemma} \label{lem_DD_inf}
 As long as
\begin{align*}
    c > (1+\eta) \max \cbc{\frac{1}{d \KL{\alpha}{e^{-d}(1-p)+\bc{1-e^{-d}}q}}, \frac{\theta}{1-\theta} \frac{1}{d \KL{\beta}{(1-q)e^{-d}}}}
\end{align*}
for some small $\eta>0$, 
we have $\vP_x > \beta \Delta$ for all $x \in V_1$ \whp\
\end{lemma}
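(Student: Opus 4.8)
The plan is to mirror the structure of the {\tt COMP} proof, using the distribution of $\vP_x$ for infected individuals established in \Lem~\ref{lem_dist_pos_inf} together with a union bound over $V_1$. We must ensure that $\Pr(\exists x \in V_1 : \vP_x \leq \beta \Delta) = o(1)$. Since the distribution of $\vP_x$ derived in \Lem~\ref{lem_dist_pos_inf} is conditioned on the event $\cF$, the first step is to recall that $\cF$ holds \whp\ under the first hypothesis on $c$, namely $c > (1+\eta)\frac{1}{d \KL{\alpha}{e^{-d}(1-p)+(1-e^{-d})q}}$, which is exactly why that term appears inside the maximum. This conditioning is harmless because $\Pr(\cF^c) = o(n^{-2})$ decays faster than the union-bound factor $k \sim n^\theta$.

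Conditioned on $\cF$, each infected individual has $\vP_x \disteq H(m, m(e^{-d}(1-q) + n^{-\Omega(1)}), \Delta)$ by \Lem~\ref{lem_dist_pos_inf}. The second step applies the lower-tail Chernoff bound for the hypergeometric distribution (\Lem~\ref{lem_chernoff_hyp}) with success parameter $(1-q)e^{-d}$ and threshold $\beta \Delta$. Since $\beta \in (0, e^{-d}(1-q))$ by the standing assumption, the threshold lies strictly below the mean, so the lemma applies and yields
\begin{align*}
    \Pr\bc{\vP_x \leq \beta \Delta \mid x \in V_1, \cF} = \exp\bc{-\bc{1+\Delta^{-\Omega(1)}} \Delta \KL{\beta}{(1-q)e^{-d}}}.
\end{align*}
The third step combines this with the union bound over the $k$ infected individuals. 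Taking logarithms and using $k \sim n^\theta$ together with $\Delta = cd(1-\theta)\log n$ (from \eqref{eq:param} and $\log(n/k) \sim (1-\theta)\log n$), the requirement that the union bound vanishes becomes $\theta - cd(1-\theta)\KL{\beta}{(1-q)e^{-d}} < 0$, which rearranges precisely to $c > \frac{\theta}{1-\theta}\frac{1}{d \KL{\beta}{(1-q)e^{-d}}}$, the second term in the maximum.

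The main subtlety, rather than any hard calculation, is the interaction between the two hypotheses in the stated maximum. The term $c_3$ governs the tail of $\vP_x$ directly, but this bound is only meaningful once we know $\cF$ occurs, which is guaranteed by the $c_2$-type term. Thus both conditions are genuinely needed: one to validate the conditional distribution of \Lem~\ref{lem_dist_pos_inf}, the other to threshold it. I would state explicitly that under $c > (1+\eta)\max\{\cdots\}$ both events hold \whp, so that the conditioning on $\cF$ can be removed at the cost of an additive $o(1)$, and the union bound over $V_1$ then finishes the argument. No step should present a real obstacle beyond carefully tracking the $n^{-\Omega(1)}$ correction inside the hypergeometric parameter, which only perturbs $\KL{\beta}{(1-q)e^{-d}}$ by $o(1)$ and is absorbed into the slack $\eta$.
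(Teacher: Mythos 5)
Your proposal is correct and follows essentially the same route as the paper's own proof: condition on $\cF$ (guaranteed \whp\ by \Lem~\ref{lem_m0nd} under the first term of the maximum), invoke the hypergeometric distribution of $\vP_x$ from \Lem~\ref{lem_dist_pos_inf}, apply the Chernoff bound of \Lem~\ref{lem_chernoff_hyp}, and take a union bound over the $k \sim n^\theta$ infected individuals to arrive at $\theta - cd(1-\theta)\KL{\beta}{(1-q)e^{-d}} < 0$. Your additional remarks on why both terms of the maximum are needed and on removing the conditioning on $\cF$ make explicit what the paper leaves implicit, but the argument is the same.
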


\begin{proof}
We need to ensure that $\Pr(\exists x\in V_1:\vP_x < \beta \Delta)=o(1)$.
For the bound on $c$ from the lemma, we know that $\cF$ occurs \whp\ by \Lem~\ref{lem_m0nd}.
In combination with \Lem~\ref{lem_dist_pos_inf} and the union bound we need to ensure that
\begin{align} \label{DD11}
    k \cdot \Pr \bc{\vP_x \leq \beta \Delta|x\in V_1,\cF} = k \cdot 
 \pr( \change{\vec{Q}_H}  \leq \beta \Delta) + k n^{-\Omega(1)} = o(1),
\end{align}
\change{where as before $\vec{Q}_H$ is a random variable with hypergeometric distribution}
$H\bc{m, m e^{-d} (1-q), \Delta}$.
Using the Chernoff bound for the hypergeometric distribution (\Lem~\ref{lem_chernoff_hyp}), \gebA{the following condition for \eqref{DD11} to hold arises}
\begin{align}\label{DD12}
    \theta-cd(1-\theta)\KL{\beta}{(1-q)e^{-d}}<0
\end{align}

The lemma follows from rearranging terms in \eqref{DD12} \geb{and the fact that if we choose the number of tests slightly above the required number of tests (larger by a factor of $1+\eta$ for $\eta > 0$), the assertion holds \whp\ as $n\rightarrow \infty$}.
\end{proof}

We proceed with the set of individuals $\zeroplus$.

\begin{lemma} \label{lem_DD_healthy}
As long as 
\begin{align*}
    c &> (1+\eta) \max \Bigg\{\frac{1}{d \KL{\alpha}{e^{-d}(1-p)+\bc{1-e^{-d}}q}}, \\ 
    & \qquad \max_{1-\alpha\leq z \leq 1} \cbc{ \frac{1}{1-\theta} \frac{1}{d \bc{ \KL{z}{e^{-d}p + (1-e^{-d}) (1-q)} + z \KL{\frac \beta z}{\frac{e^{-d}p}{e^{-d}p + (1-e^{-d}) (1-q)}}}}}\Bigg\}
\end{align*}
for some small $\eta>0$, 
we have $\vP_x < \beta \Delta$ for all $x \in \zeroplus$ \whp\
\end{lemma}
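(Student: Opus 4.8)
The plan is to mirror the structure of the preceding lemmas (Lemma~\ref{lem_DD_inf} in particular), bounding the probability that some individual in $\zeroplus$ is erroneously declared infected. We must ensure $\Pr(\exists x \in \zeroplus : \vP_x \geq \beta \Delta) = o(1)$. By Lemma~\ref{lem_COMP_healthy_2}, the first term in the maximum guarantees $\cF$ holds \whp\ and that $\abs{\zeroplus} = k n^{-\Omega(1)}$. Using the union bound over $\zeroplus$, it suffices to show that $k \cdot \Pr(\vP_x \geq \beta \Delta \mid x \in \zeroplus, \cF)$ is $o(1)$, where the extra factor of $k$ (rather than $n$) reflects the fact that only $k n^{-\Omega(1)}$ healthy individuals remain unclassified, absorbing the $n^{-\Omega(1)}$ slack against the larger population.

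The main obstacle, and the reason this lemma is more delicate than Lemma~\ref{lem_DD_inf}, is that for $x \in V_0$ the distribution of $\vP_x$ is only described \emph{conditionally} on the number $P$ of displayed positive tests $x$ appears in, via the two-stage description $\vP_x(P) \sim H(m(e^{-d}p + (1-e^{-d})(1-q) + n^{-\Omega(1)}), m(e^{-d}p + n^{-\Omega(1)}), P)$ from Lemma~\ref{lem_dist_pos_healthy}. Thus I would condition on $P$, writing $\Pr(\vP_x \geq \beta \Delta) = \sum_P \Pr(\vN_x = \Delta - P) \Pr(\vP_x(P) \geq \beta \Delta)$, and optimize over the relevant range of $P$. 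Setting $z = P/\Delta$, the constraint $x \in \zeroplus$ forces $\vN_x < \alpha \Delta$, i.e. $P > (1-\alpha)\Delta$, which is exactly why the inner maximum ranges over $1 - \alpha \leq z \leq 1$. The probability that $x$ lands in $P = z\Delta$ displayed positive tests contributes an exponential cost governed by $\KL{z}{w}$ with $w = e^{-d}p + (1-e^{-d})(1-q)$ (the displayed-positive probability for a healthy individual), while the conditional probability that $\beta \Delta$ of these are \emph{private} positives contributes $z \, \KL{\beta/z}{e^{-d}p/w}$ via Lemma~\ref{lem_chernoff_hyp} applied to $\vP_x(z\Delta)$.

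Concretely, I would invoke the Chernoff bound for the hypergeometric distribution (Lemma~\ref{lem_chernoff_hyp}) twice: once to bound $\Pr(\vN_x = \Delta - P)$ in terms of $\Delta \KL{z}{w}$, and once to bound $\Pr(\vP_x(z\Delta) \geq \beta\Delta) = \Pr(\vP_x(z\Delta) \geq (\beta/z)(z\Delta))$ in terms of $z\Delta \, \KL{\beta/z}{e^{-d}p/w}$. Since $P$ takes only $O(\Delta)$ values, the union bound over $P$ costs at most a polynomial factor in $\log n$, which is negligible against the exponential rate; hence the dominant contribution comes from the worst-case $z$, yielding the combined exponent $\Delta(\KL{z}{w} + z\,\KL{\beta/z}{e^{-d}p/w})$ maximized over $1-\alpha \leq z \leq 1$. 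Requiring this exponent to exceed $\theta \log n / (1-\theta)$ (from $\abs{\zeroplus} \leq k = n^{\theta}$ together with the $n^{-\Omega(1)}$ gain, giving effectively the prefactor $1/(1-\theta)$) and substituting $\Delta = cd(1-\theta)\log n$ yields the stated bound on $c$ after rearranging. The last step to verify is the edge case where the indicator $\vecone\{\beta > z e^{-d}p/w\}$ in Theorem~\ref{thm_DD} vanishes: if $\beta \leq z e^{-d}p/w$ the private-positive count already exceeds $\beta\Delta$ in expectation, so the second divergence term drops and only the $\KL{z}{w}$ cost survives, which is precisely what the indicator encodes.
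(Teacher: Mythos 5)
Your distributional core matches the paper's proof almost exactly: the decomposition $\sum_{P \geq (1-\alpha)\Delta} \Pr(\vN_x = \Delta - P)\,\Pr(\vP_x(P) \geq \beta\Delta)$, the two applications of \Lem~\ref{lem_chernoff_hyp}, collapsing the $\Theta(\Delta) = \Theta(\log n)$-term sum to its largest term via $z = P/\Delta$, and the role of the indicator $\vecone\cbc{\beta > z e^{-d}p/w}$ are all exactly what the paper does. The gap is in the union-bound bookkeeping, and it is not cosmetic. Write $w = e^{-d}p + (1-e^{-d})(1-q)$. The quantity you compute in your second paragraph is the \emph{joint} probability $\Pr\bc{x \in \zeroplus,\ \vP_x \geq \beta\Delta}$: the factor $\Pr(\vN_x = \Delta - P)$ restricted to $P \geq (1-\alpha)\Delta$, which produces the $\KL{z}{w}$ term in the exponent, is precisely the price of membership in $\zeroplus$. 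But in your first paragraph you announced a union bound over $\zeroplus$ with prefactor $k$, which is legitimate only if each member is charged the \emph{conditional} probability $\Pr(\vP_x \geq \beta\Delta \mid x \in \zeroplus)$ -- the joint probability inflated by $\Pr(x \in \zeroplus)^{-1} \approx \exp\bc{\Delta \KL{\alpha}{1-w}}$, a factor you never include. Pairing the size-$k$ prefactor with the joint probability charges for $\zeroplus$-membership twice (once in the set size, once in the exponent), and the condition it actually yields is $\theta \log n < (1+o(1))\Delta\bc{\KL{z}{w} + \cdots}$, i.e.\ prefactor $\theta/(1-\theta)$, not the stated $1/(1-\theta)$. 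That weaker condition cannot be closed by this argument: under it the true union-bound quantity $(n-k)\Pr\bc{x \in \zeroplus, \vP_x \geq \beta\Delta}$ is of order $n^{1-(1+\eta)\theta}$, which diverges whenever $(1+\eta)\theta < 1$. Your final sentence of bookkeeping (``requiring this exponent to exceed $\theta\log n/(1-\theta)$ \dots giving effectively the prefactor $1/(1-\theta)$'') is where this is papered over; it does not follow from anything you set up.

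The repair is what the paper does: take the union bound over \emph{all} $n-k$ healthy individuals -- one cannot restrict in advance to the random set $\zeroplus$ -- and keep the joint probability, i.e.\ require
\begin{align*}
(n-k) \sum_{P=(1-\alpha)\Delta}^{\Delta} \Pr\bc{\vN_x = \Delta - P \mid x \in V_0, \cE}\, \Pr\bc{\vP_x(P) \geq \beta\Delta \mid x \in V_0, \cF} = o(1),
\end{align*}
using \Lem~\ref{lem_dist_neg_healthy} for the first factor and \Lem~\ref{lem_dist_pos_healthy} for the second. The requirement then becomes $\log n < (1+o(1))\,\Delta \min_{1-\alpha \leq z \leq 1}\bc{\KL{z}{w} + \vecone\cbc{\beta > z e^{-d}p/w}\, z \KL{\beta/z}{e^{-d}p/w}}$, and the prefactor $1/(1-\theta)$ emerges simply because $\log n = \log(n/k)/(1-\theta)$ under $k \sim n^\theta$; the remainder of your computation (both Chernoff applications, the maximization over $z$, and the indicator edge case) then goes through verbatim.
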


\begin{proof}
We need to ensure that $\Pr(\exists x\in V_{0,PD}:\vP_x > \beta \Delta)=o(1)$.
For the bound on $c$ from the lemma, we know that $\cF$ occurs \whp\ by \Lem~\ref{lem_m0nd}. Moreover, $\cE$ occurs \whp\ by \Lem~\ref{lem_m0} and \Cor~\ref{cor_m0_m1}. We write $w= e^{-d} p + \bc{1-e^{-d}(1-q)}$ for brevity.
Combining this fact with \Lem~\ref{lem_dist_neg_healthy} and \ref{lem_dist_pos_healthy} we need to ensure
\begin{align} \label{eq_DD2}
    &(n-k) \sum_{P=(1-\alpha) \Delta}^{\Delta} \Pr \bc{\vN_x = \Delta - P | x \in V_0,\cE} \Pr \bc{\vP_x(P) \geq \beta \Delta | x \in V_0,\cF} \\
    &= \bc{1-n^{-\Omega(1)}} n \sum_{P=(1-\alpha)\Delta}^\Delta \Pr \bc{\change{\vec{T_h}} = P} \cdot \Pr \bc{\change{\vec{B_h}} \geq \beta \Delta} = o(1) \label{eq_c4_exp}
\end{align}
\change{We remind ourselves that
\begin{align*}
    \vec{T_h} &\sim H \bc{m, m \bc{e^{-d}(1-p)+\bc{1-e^{-d}}q }, \Delta}\\
    \text{and }\quad \vec{B_h}& \sim H\bc{m \bc{e^{-d}p+(1-e^{-d})(1-q)},m \bc{e^{-d}p}, P}.
\end{align*}
Now} by the Chernoff bound for the hypergeometric distribution (\Lem~\ref{lem_chernoff_hyp}) and setting $z=P/\Delta$, \geb{we establish the following two bounds for the probability terms:
\begin{align}
\Pr \bc{H \bc{m, m \bc{w+ n^{-\Omega(1)}}, \Delta} = P}=\exp\bc{-(1+n^{-\Omega(1)})\Delta \bc{\KL{z}{w
    }}}
\end{align}
\begin{align}
&\Pr \bc{H\bc{m \bc{w+n^{-\Omega(1)}},m \bc{e^{-d}p+n^{-\Omega(1)}}, P} \geq \beta \Delta}\notag\\&= \exp\bc{-\bc{1+n^{-\Omega}}z\Delta\vecone \cbc{\beta > \frac{ze^{-d}p}{w}} z \KL{\frac \beta z}{\frac{e^{-d}p}{w
    }}}\label{chernoff_apply}
\end{align}}
(Note that the indicator in \eqref{chernoff_apply} appears due to the condition given by \Lem~\ref{lem_chernoff_hyp})
We reformulate the left-hand-side of \eqref{eq_c4_exp} to
\begin{align*}
    &n \sum_{P=(1-\alpha)\Delta}^\Delta \exp \bc{-(1+o(1))\Delta \bc{\KL{z}{w
    }+\vecone \cbc{\beta > \frac{ze^{-d}p}{w}} z \KL{\frac \beta z}{\frac{e^{-d}p}{w
    }}}} \\
    &\qquad \qquad = \bc{1+n^{-\Omega(1)}} n \max_{1-\alpha \leq z \leq 1} \Bigg\{ \exp \Bigg(-(1+o(1))\Delta \Bigg(\KL{z}{w 
    } 
    +\vecone \cbc{\beta > \frac{ze^{-d}p}{w}} z \KL{\frac \beta z}{\frac{e^{-d}p}{w 
    }}\Bigg)\Bigg)\Bigg\}
\end{align*}
where the second equality follows since the sum consists of $\Theta(\Delta) = \Theta(\log n)$ many summands. Since $\Pr \bc{\cF} = 1 - n^{-\Omega(1)}$ for our choice of $c$ by \Lem~\ref{lem_m0nd} rearranging terms readily yields that the expression in \eqref{eq_DD2} is indeed of order $o(1)$.\\\\ \geb{To see this, we remind ourselves that by definition $\Delta=c d \log\bc{\frac{n}{k}}=(1-\theta)c d \log(n)$. Furthermore we plug in the definition for $w= e^{-d} p + \bc{1-e^{-d}(1-q)}$. In the end we have to ensure that
$$1<(1-\theta)c d\Bigg(\KL{z}{w 
    } 
    +\vecone \cbc{\beta > \frac{ze^{-d}p}{e^{-d} p + \bc{1-e^{-d}(1-q)}}} z \KL{\frac \beta z}{\frac{e^{-d}p}{e^{-d} p + \bc{1-e^{-d}(1-q)} 
    }}\Bigg)$$
    }\gebA{We solve this inequality for $c$. As we are only interested in a worst case bound, the assertion follows from the non-negativity of $\KL{*}{*}$.}

\end{proof}

\begin{proof}[Proof of \Thm~\ref{thm_DD}]
The theorem is now immediate from \Lem~\ref{lem_COMP_inf}, \ref{lem_COMP_healthy_2}, \ref{lem_DD_inf} and \ref{lem_DD_healthy} and the fact that the choice of $\alpha, \beta$ and $d$ is at our disposal.
\end{proof}

\section{Comparison of the noisy {\tt DD} and {\tt COMP} bounds} \label{sec_Com_DD_COMP}
The following section is intented to provide sufficient conditions under which the {\tt DD} algorithm attains reliable performance requiring fewer tests than the {\tt COMP}.
However, these conditions are not necessary and {\tt DD} might (and for all performed simulations does) \gebA{require fewer tests than} {\tt COMP} for even wider settings.

\begin{proof}[Proof of \Prop~\ref{prop_comp_dd}]
In order to prove the proposition, we need to find conditions under which
\begin{align*}
    \min_{\alpha, d} \max \cbc{b_1(\alpha, d), b_2(\alpha, d)} \geq \min_{\alpha, \beta, d} \max \cbc{c_1(\alpha, d), c_2(\alpha, d), c_3(\beta, d), c_4(\alpha, \beta, d)}
\end{align*}
We write $\alpha^*$ and $d^*$ for the values that minimise the maximum of the two terms at the LHS, at which point we know that
$b_1(\alpha^*, d^*) = b_2(\alpha^*, d^*)$.
Then it is sufficient to show that there exists $\beta^*$ such that
$$
b_1(\alpha^*, d^*)= b_2(\alpha^*, d^*) \geq
\max \cbc{c_1(\alpha^*, d^*), c_2(\alpha^*, d^*), c_3(\beta^*, d^*), c_4(\alpha^*, \beta^*, d^*)}
$$
By inspection for any $\alpha$ and $d$
$b_1(\alpha, d)=c_1(\alpha, d)$ and $b_2(\alpha, d) \geq c_2(\alpha, d)$ since $\theta \in (0,1)$.

Next, we will show that $b_2(\alpha, d) \geq c_4(\alpha, \beta, d)$ for any $\alpha, \beta$ in the respective bounds and $d \in (0,\infty)$.  Writing $w = e^{-d} p + (1-e^{-d}) (1-q)$, and recalling that by assumption that $\alpha \leq 1-w$ (or $w \leq 1-\alpha$) we readily find that
\begin{equation} \label{eq:b2c4}
  \KL{\alpha}{1-w} = \min_{1-\alpha \leq z \leq 1} \bc{ \KL{z}{w}} \leq
\min_{1-\alpha \leq z \leq 1} \bc{ \KL{z}{w} + z \vecone \cbc{\beta > \frac{z e^{-d}p}{w}} \KL{\frac \beta z}{\frac{e^{-d}p}{w}}}
\end{equation}
where the first equality follows since $\KL{\alpha}{1-w}=\KL{1-\alpha}{w}$ and $\KL{z}{w} > \KL{1-a}{w}$ for any $z > 1-\alpha$. The bound follows. Note that \eqref{eq:b2c4} indeed holds for any choice of $\alpha, \beta$ and $d$ in the respective bounds stated in the theorem.

Finally, we need to demonstrate that $c_3 (\beta^*, d^*) \leq b_2 (\alpha^*, d^*)$. Since $\beta$ is not an optimisation parameter in $b_2(\alpha^*, d^*)$ and the bound in \eqref{eq:b2c4} holds for any value of $\beta$, we can simply set it to the value that minimizes $c_3(\beta^*, d^*)$ which is $\beta=1/\Delta$ and for which we find
\begin{align*}
    c_3(\beta^*, d^*) = \frac{\theta}{1-\theta} \frac{1}{d^* \log \bc{1-e^{-d^*}(1-q}}.
\end{align*}
Thus, to obtain the desired inequality we need to ensure that for the optimal choice $\alpha^*$ from {\tt COMP}
\begin{align*}
    \theta \KL{\alpha^*}{e^{-d^*}(1-p) + \bc{1-e^{-d^*}}q} &\leq -\log \bc{1-e^{-d^*}(1-q)}
\end{align*}

Using the bound
\begin{align*}
    \theta \KL{\alpha}{e^{-d}(1-p) + \bc{1-e^{-d}}q} &\leq - \theta \log \bc{1- \left( e^{-d}(1-p) + \bc{1-e^{-d}}q \right)} \\ &\leq -\log \bc{1-\left( e^{-d}(1-p) + \bc{1-e^{-d}}q \right)}
\end{align*}
which is obtained by setting $\alpha = 1/\Delta$, we find that $c_3 (\beta^*, d^*) \leq b_2 (\alpha^*, d^*)$ if
\begin{align*}
    -\log \bc{1-e^{-d^*}(1-q)} \geq -\log \bc{1-e^{-d^*}(1-p) + \bc{1-e^{-d^*}}q} \Leftrightarrow
    e^{-d^*}p \geq q
\end{align*}

\end{proof}

As mentioned before, due to bounding $b_2(\alpha^*, d^*)$ the result is not sharp. However, one immediate consequence of \Prop~\ref{prop_comp_dd} is that {\tt DD} is guaranteed to \gebA{require fewer tests than} {\tt COMP} for the reverse Z channel.

\section{Relation to Bernoulli testing}\label{sec_Rel_Ber}

In the noiseless case \cite{Johnson_2019} shows that the constant column weight design (where each individual joins exactly $\Delta$ different tests)  requires fewer tests to recover $\SIGMA$ than the i.i.d. (Bernoulli pooling) design (where each individual is included in each test with a certain probability independently). In this section we show that in the noisy case, the {\tt COMP} algorithm requires fewer tests for the constant column weight design than for the i.i.d.  design, and derive sufficient conditions under which the same is true for the noisy {\tt DD} algorithm.

To get started, let us state the relevant bounds for the Bernoulli design, taken from \cite[Theorem 5]{Johnson_2018} and rephrased in our notation.




\begin{proposition}[Noisy {\tt COMP} under Bernoulli] \label{prop_COMP_Bernoulli}
Let $p,q\geq 0$, $p+q<1$, $d \in (0,\infty)$, $\alpha \in (q, e^{-d}(1-p) + \bc{1-e^{-d}}q)$. Suppose that $0<\theta<1$ and $\eps>0$ and let
\begin{align*}
    \mcomp^{\text{Ber}} &= \mcomp^{\text{Ber}}(n,\theta, p, q) = \min_{\alpha, d} \max \cbc{b_1(\alpha, d), b_2(\alpha, d)} k \log(n/k) \\
&\text{where} \qquad b_1(\alpha, d) = \frac{\theta}{1-\theta} \frac{1}{k\KL{\alpha d/k}{qd/k}} \\
&\text{and} \qquad b_2(\alpha, d) = \frac{1}{1-\theta}\frac{1}{k\KL{\alpha d/k}{(e^{-d}(1-p)+(1-e^{-d})q)d/k}}
\end{align*}
If $m>(1+\eps) \mcomp^{\text{Ber}}$, {\tt COMP} will recover $\SIGMA$ under the Bernoulli test design \whp\ given $\vec G, \hat \SIGMA$.
\end{proposition}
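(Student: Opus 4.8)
The plan is to follow the two-pillar strategy from the proof of \Thm~\ref{thm_COMP} verbatim in structure, the only conceptual change being that the independence built into the Bernoulli design replaces the sampling-without-replacement of the constant-column design. Concretely, under the Bernoulli design each individual joins each of the $m$ tests independently with probability $\Delta/m = d/k$, so the quantity $\vN_x$ from \eqref{eq:Nxdef} will turn out to be \emph{binomial} (rather than hypergeometric) for both infected and healthy individuals, and the two bounds $b_1,b_2$ will drop out after a Chernoff estimate and a union bound.

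First I would pin down the two distributions. For an infected individual $x\in V_1$, every test containing $x$ is truly positive, $x$ lies in test $a$ with probability $d/k$ independently across tests, and each such test is flipped to a displayed negative with probability $q$ independently; hence $\vN_x\sim\Bin(m,qd/k)$ exactly. For a healthy individual $x\in V_0$, removing $x$ changes no test outcome, so its placements are independent of the entire displayed-outcome vector $\hat\SIGMA$. Conditioning on $\hat\SIGMA$ therefore gives $\vN_x\sim\Bin(\mzero^u+\mone^f,d/k)$, and on the event $\cE$ (which holds \whp\ by the lemma preceding the proposition, together with \Cor~\ref{cor_m0_m1}) we have $\mzero^u+\mone^f = (e^{-d}(1-p)+(1-e^{-d})q)m + O(\sqrt m\log^4 n)$, so $\vN_x$ is binomial with $(1+o(1))(e^{-d}(1-p)+(1-e^{-d})q)m$ trials and success probability $d/k$.

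With the distributions in hand I would keep the classification threshold at $\alpha\Delta=\alpha cd\log(n/k)$, noting $\alpha\Delta/m=\alpha d/k$. To ensure no infected individual is wrongly declared healthy, I bound $\Pr(\exists x\in V_1:\vN_x\ge\alpha\Delta)\le k\,\Pr(\Bin(m,qd/k)\ge\alpha\Delta)$; the Chernoff bound (\Lem~\ref{lem_chernoff_bin}) gives exponent $m\KL{\alpha d/k}{qd/k}$, and forcing this to beat $\log k=\theta\log n$ yields exactly $c>\tfrac{\theta}{1-\theta}\tfrac{1}{k\KL{\alpha d/k}{qd/k}}=b_1(\alpha,d)$. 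Symmetrically, to ensure no healthy individual is wrongly declared infected I bound $\Pr(\exists x\in V_0:\vN_x\le\alpha\Delta)\le (n-k)\,\Pr(\Bin(m,(e^{-d}(1-p)+(1-e^{-d})q)d/k)\le\alpha\Delta)$, writing the healthy count as a single binomial over all $m$ tests with per-test probability $(e^{-d}(1-p)+(1-e^{-d})q)d/k$; the Chernoff bound now gives exponent $m\KL{\alpha d/k}{(e^{-d}(1-p)+(1-e^{-d})q)d/k}$, and beating $\log(n-k)\sim\log n$ produces exactly $c>b_2(\alpha,d)$. Taking the worst of the two conditions and optimising over the free parameters $\alpha,d$ finishes the proof.

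The step I expect to require the most care is the large-deviation analysis in the regime where the per-test success probabilities $qd/k$ and $(e^{-d}(1-p)+(1-e^{-d})q)d/k$ tend to $0$ as $n\to\infty$, in contrast to the constant-column proof where the second $\KL{\cdot}{\cdot}$ argument was a fixed constant in $(0,1)$. Since the recovery guarantee only needs the Chernoff \emph{upper} tail bound $\Pr(\Bin(m,\pi)\ge\alpha\Delta)\le\exp(-m\KL{\alpha d/k}{\pi})$, which holds for all $m$ and all $\pi$, the vanishing probabilities cause no genuine obstruction; one must nonetheless verify that the threshold count $\alpha\Delta=\Theta(\log n)$ grows, that the two ways of writing the healthy exponent (as $\Bin(\rho m,d/k)$ thresholded at $\alpha\Delta$, or as $\Bin(m,\rho d/k)$ with $\rho=e^{-d}(1-p)+(1-e^{-d})q$) agree to leading order through the small-argument identity $\rho\KL{a/\rho}{b}=\KL{a}{\rho b}(1+o(1))$, and that the $O(\sqrt m\log^4 n)$ fluctuation of $\mzero^u+\mone^f$ is absorbed into the $(1+o(1))$ factor in the exponent.
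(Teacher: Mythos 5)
Your proposal is correct and follows essentially the same route as the paper's own proof: the same two-round exposition giving $\vN_x\sim\Bin(m,qd/k)$ for $x\in V_1$ and $\vN_x\sim\Bin(\vm_0^u+\vm_1^f,d/k)$ for $x\in V_0$, followed by the union bound and Chernoff estimates that yield exactly $b_1$ and $b_2$. If anything, your write-up is more careful than the paper's terse version, since you explicitly justify the leading-order identity $\rho\KL{a/\rho}{b}=\KL{a}{\rho b}(1+o(1))$ in the vanishing-probability regime and the absorption of the $O(\sqrt m\log^4 n)$ fluctuation, steps the paper leaves implicit.
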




\begin{proposition}[Noisy {\tt DD} under Bernoulli] \label{prop_DD_Bernoulli}
Let $p,q\geq 0$, $p+q<1$, $d \in (0, \infty)$, $ \alpha \in (q, e^{-d}(1-p) + \bc{1-e^{-d}}q)$ and $\beta \in (e^{-d}p, e^{-d}(1-q))$. Suppose that $0<\theta<1, \zeta \in (0,\theta)$ and $\eps>0$ and let
\begin{align*}
    \mdd^{\text{Ber}} &= \mdd^{\text{Ber}}(n,\theta, p, q) = \min_{\alpha, \beta, d} \max \cbc{c_1(\alpha, d), c_2(\alpha, d), c_3(\beta, d), c_4(\beta, d)} k \log(n/k) \\
&\text{where} \qquad c_1(\alpha, d) = \frac{\theta}{1-\theta} \frac{1}{k\KL{\alpha d/k}{qd/k}} \\
&\text{and} \qquad c_2(\alpha, d) = \frac{1-\zeta}{1-\theta}\frac{1}{k\KL{\alpha d/k}{(e^{-d}(1-p)+(1-e^{-d})q)d/k}} \\
&\text{and} \qquad c_3(\beta, d) = 
\frac{\theta}{1-\theta} \frac{1}{k\cdot \KL{\beta d/k}{e^{-d}(1-q)d/k}} \\
&\text{and} \qquad c_4(\beta, d) = \frac{\zeta}{1-\theta} \frac{1}{k\cdot \KL{\beta d/k}{e^{-d}p d/k}}
\frac{}{}
\end{align*}
If $m>(1+\eps) \mdd^{\text{Ber}}$, {\tt DD} will recover $\SIGMA$ under the Bernoulli test design \whp\ given $\vec G, \hat \SIGMA$.
\end{proposition}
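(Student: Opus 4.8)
The plan is to adapt the two-round exposition of \Sec~\ref{sec_DD} to the Bernoulli design, following the template already used for noisy {\tt COMP} under Bernoulli in \Prop~\ref{prop_COMP_Bernoulli}. For a fixed $x$ I would first reveal the test assignments of all other individuals, so that the displayed outcomes, the definitely-healthy set $V_0\setminus\zeroplus$ and the count $\mzerond$ are all determined, and only then add $x$ back, which joins each of the $m$ tests independently with probability $d/k$. Since $\mzero, \mzero^f, \mzero^u, \mone^f, \mone^u$ concentrate exactly as in \Lem~\ref{lem_m0} and \Cor~\ref{cor_m0_m1} (these carry over by the lemma preceding \Prop~\ref{prop_COMP_Bernoulli}), and $\mzerond=\bc{1-n^{-\Omega(1)}}e^{-d}m$ by the argument of \Lem~\ref{lem_m0nd}, I would condition on the good events $\cE$ and $\cF$ throughout.

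The heart of the argument is to read off the laws of $\vN_x$ and $\vP_x$. The first step is inherited verbatim from the {\tt COMP} analysis, giving $\vN_x\sim\Bin(m,qd/k)$ for $x\in V_1$ and $\vN_x\sim\Bin(\mzero^u+\mone^f,d/k)$ for $x\in V_0$. For $\vP_x$, a test contributes precisely when $x$ lies in it (probability $d/k$), no other infected or $\zeroplus$-individual lies in it (probability $\bc{1+o(1)}e^{-d}$ since only $k+\abs{\zeroplus}=\bc{1+o(1)}k$ individuals matter), and the test is displayed positive; for $x\in V_1$ the test is truly positive and survives with probability $1-q$, whereas for $x\in\zeroplus$ it is truly negative and must be flipped, with probability $p$. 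By independence across tests this yields
\begin{align*}
\vP_x\sim\Bin\bc{m,\,e^{-d}(1-q)d/k}\ \text{ for } x\in V_1
\qquad\text{and}\qquad
\vP_x\sim\Bin\bc{m,\,e^{-d}p\,d/k}\ \text{ for } x\in\zeroplus.
\end{align*}
Unlike the constant-column design, I do not expect the conditioning $\cbc{x\in\zeroplus}$ to distort this law at the level of exponents: there the total degree is pinned to $\Delta$, forcing the hypergeometric coupling between $\vN_x$ and $\vP_x$ that produced the $\max_z$ term in $c_4$ of \Thm~\ref{thm_DD}, while here $\vN_x$ and $\vP_x$ are disjoint cells of a multinomial over $m$ independent tests, so conditioning on $\cbc{\vN_x<\alpha\Delta}=O(\log n)$ changes the effective number of trials by only $O(\log n)\ll m$.

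With the distributions in hand I would impose four requirements via the Chernoff bound (\Lem~\ref{lem_chernoff_bin}) and union bounds, each using the Poisson-type identity $\gamma\,\KL{\alpha d/(\gamma k)}{d/k}=\bc{1+o(1)}\KL{\alpha d/k}{\gamma d/k}$ (with $\gamma=e^{-d}(1-p)+(1-e^{-d})q$), valid in the sparse regime $d/k\to0$, which converts the raw binomial rate into the stated $k\,\KL{\cdot\,d/k}{\cdot\,d/k}$ form. First, no infected individual is removed in step~1: $k\cdot\Pr(\vN_x\geq\alpha\Delta\mid x\in V_1)=o(1)$ yields $c_1$, exactly as $b_1$. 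Second, rather than classifying all healthy individuals I only bound $\abs{\zeroplus}$: taking $c>c_2$ makes $\Pr(\vN_x<\alpha\Delta\mid x\in V_0)\le n^{-(1-\zeta)}$, hence $\Erw\abs{\zeroplus}\le n^{\zeta}$ and Markov gives $\abs{\zeroplus}\le n^{\zeta+o(1)}$ \whp. Third, every infected individual is flagged in step~2: $k\cdot\Pr(\vP_x\leq\beta\Delta\mid x\in V_1)=o(1)$ yields $c_3$, where $\beta<e^{-d}(1-q)$ places us in the lower tail. Fourth, no $\zeroplus$-individual is wrongly flagged: union-bounding over the at most $n^{\zeta}$ elements of $\zeroplus$, the requirement $n^{\zeta}\cdot\Pr(\vP_x\geq\beta\Delta\mid x\in\zeroplus)=o(1)$ yields $c_4$, where $\beta>e^{-d}p$ places us in the upper tail.

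The main obstacle, beyond the bookkeeping for $\cE$ and $\cF$, is justifying that the conditioning on $\cbc{x\in\zeroplus}$ is asymptotically harmless, i.e. that the multinomial coupling between $\vN_x$ and $\vP_x$ costs nothing at the level of exponents; this is precisely where the Bernoulli design simplifies relative to the constant-column design and removes the $\max_{1-\alpha\le z\le1}$ appearing in $c_4$ of \Thm~\ref{thm_DD}. The remaining subtlety is the auxiliary parameter $\zeta\in(0,\theta)$, which trades the cost of controlling $\abs{\zeroplus}$ (driving $c_2$ down as $\zeta$ grows) against the size of the step-4 union bound (driving $c_4$ up); the stated bound holds for any fixed such $\zeta$, and one obtains the sharpest guarantee by optimising over $\alpha,\beta,d$ and $\zeta$ jointly.
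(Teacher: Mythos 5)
Your proposal is correct and follows essentially the same route as the paper's own proof: the same two-round exposition giving binomial laws for $\vN_x$ and $\vP_x$, the same relaxation of the second {\tt COMP} condition to allow $\abs{\zeroplus}\leq n^{\zeta}$ (which is exactly how the paper obtains the $(1-\zeta)$ prefactor in $c_2$ and the $\zeta$ prefactor in $c_4$), and the same Chernoff-plus-union-bound thresholding yielding $c_1,\dots,c_4$. In fact you spell out two points the paper leaves implicit -- the (near-)independence of $\vN_x$ and $\vP_x$ under the Bernoulli design, which is what removes the $\max_{1-\alpha\leq z\leq 1}$ term of \Thm~\ref{thm_DD}, and the asymptotic identity converting the binomial Chernoff exponent into the stated $k\,\KL{\cdot\,d/k}{\cdot\,d/k}$ form -- so your write-up is, if anything, more complete than the original.
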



To compare the bounds of the Bernoulli and constant-column test design we employ the following handy observation.

\begin{lemma}\label{DKL_reform}
Let $0 < x, y < 1$ and $d>0$ be constants independent of $k$. As $k \to \infty$
\begin{align*}
    k \KL{\frac{xd}{k}}{\frac{yd}{k}}=d \left(\KL{x}{y}+v(x,y)\right) +o(1/k)
\end{align*}
with
\begin{align}\label{calc_imp}
v(x,y)=y-x+(1-x)\log\left(\frac{1-y}{1-x}\right) \leq 0
\end{align}
\end{lemma}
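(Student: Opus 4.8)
The plan is to prove the asymptotic expansion of $k\KL{xd/k}{yd/k}$ as $k\to\infty$ by expanding the two terms in the definition of the Kullback-Leibler divergence using the Taylor series of $\log(1-t)$ around $t=0$, treating $xd/k$ and $yd/k$ as small parameters. Writing out the divergence explicitly, we have
\begin{align*}
\KL{\frac{xd}{k}}{\frac{yd}{k}} = \frac{xd}{k}\log\bc{\frac{x}{y}} + \bc{1-\frac{xd}{k}}\log\bc{\frac{1-xd/k}{1-yd/k}}.
\end{align*}
The first term, after multiplication by $k$, contributes exactly $dx\log(x/y)$, which is the leading part of $d\KL{x}{y}$. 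The entire content of the lemma lives in the second term, so the key step is to analyse $k\bc{1-\tfrac{xd}{k}}\log\bc{\tfrac{1-xd/k}{1-yd/k}}$.

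First I would expand $\log(1-xd/k) = -\tfrac{xd}{k} - \tfrac{(xd)^2}{2k^2} + O(k^{-3})$ and similarly for $\log(1-yd/k)$, so that
\begin{align*}
\log\bc{\frac{1-xd/k}{1-yd/k}} = \frac{(y-x)d}{k} + \frac{(y^2-x^2)d^2}{2k^2} + O(k^{-3}).
\end{align*}
Multiplying by $k\bc{1-xd/k}$ and collecting powers of $1/k$, the $O(k)$ and $O(k^0)$ bookkeeping gives a constant term plus an error of order $O(1/k)$. The constant term should reorganise into $d\bc{(y-x) + (1-x)\log\tfrac{1-y}{1-x}}$ up to reconciling with the exact-log pieces; the cleanest route is actually to avoid over-Taylor-expanding the ratio and instead keep $\log\tfrac{1-xd/k}{1-yd/k}$ and expand only the prefactor, writing $k(1-xd/k)\log(\cdots)$ and matching against the exact expression $d\bc{(1-x)\log\tfrac{1-y}{1-x} + (y-x)}$ after a single Taylor step.

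Concretely, the better strategy is to compute $k\bc{1-\tfrac{xd}{k}}\log(1-xd/k)$ and $k\bc{1-\tfrac{xd}{k}}\log(1-yd/k)$ separately. Using $\log(1-t) = -t + O(t^2)$ one finds $k\log(1-xd/k) = -xd + O(1/k)$ and $k\log(1-yd/k)=-yd+O(1/k)$, while the correction from the prefactor $-\tfrac{xd}{k}$ multiplied by $k\log(\cdots) = O(1)$ contributes terms of the form $xd\cdot(\cdots)$. Assembling everything, the constant-order terms combine to $d\bc{xy - \text{(symmetric pieces)}}$; after simplification one recovers precisely $d\bc{\KL{x}{y} + v(x,y)}$ with $v(x,y) = y-x + (1-x)\log\tfrac{1-y}{1-x}$, and all remaining terms are $O(1/k)$. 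The final step is to verify $v(x,y)\leq 0$: this follows because $v(x,y) = (1-x)\bc{\log\tfrac{1-y}{1-x} - \tfrac{y-x}{1-x}} = (1-x)\bc{\log(1+s) - s}$ where $s = \tfrac{x-y}{1-x}$ satisfies $1+s = \tfrac{1-y}{1-x}>0$, and $\log(1+s)\leq s$ for all admissible $s$.

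I expect the main obstacle to be purely organisational: keeping careful track of which $1/k$-order terms survive into the constant and which vanish, since the leading $O(k)$ contributions from $k\log(1-\cdot)$ must cancel cleanly against each other and against the $dx\log(x/y)$ piece without leaving spurious $O(1)$ residue. There is no conceptual difficulty — it is a Landau-notation bookkeeping exercise — but it is easy to drop a cross term from the product $(1-xd/k)\cdot\log(\cdots)$, so I would expand to one order beyond what seems necessary and then discard the genuinely $O(1/k)$ remainder at the end. The inequality $v(x,y)\leq 0$ is immediate from the concavity bound $\log(1+s)\leq s$ and requires no further work.
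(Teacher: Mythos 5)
Your proposal follows essentially the same route as the paper's proof: write out the definition of the divergence so the first term contributes $xd\log(x/y)$ exactly, Taylor-expand the logarithmic part to identify the constant term $d(y-x)$, recombine via the identity $x\log(x/y)+y-x=\KL{x}{y}+v(x,y)$, and deduce $v(x,y)\le 0$ from the linearisation $\log(1+s)\le s$. Two small points: your intermediate expression for $v$ has a sign slip — it should read $(1-x)\bc{\log\tfrac{1-y}{1-x} + \tfrac{y-x}{1-x}}$, which is what your subsequent (correct) substitution $s=\tfrac{x-y}{1-x}$ actually uses — and the error term this computation yields is $O(1/k)$, as you write (the $1/k$ coefficient is $d^2(y-x)^2/2$), which is in fact all one can get, the $o(1/k)$ in the statement notwithstanding.
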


\begin{proof}
Applying the definition of the Kullback-Leibler divergence and Taylor expanding the logarithm we obtain
\begin{align*}
    k \cdot \KL{\frac{xd}{k}}{\frac{yd}{k}}=& 
    xd\cdot \log\left(\frac{x}{y}\right)+(k-xd)\left(\log\left(1-\frac{xd}{k}\right)-\log\left(1-\frac{yd}{k}\right)\right)\\
    &= xd\cdot \log\left(\frac{x}{y}\right)+(k-xd)\left(-\frac{xd}{k}+\frac{yd}{k}+o\left(\frac{1}{k^{2}}\right)\right)\\
    &=d\left(x\cdot \log\left(\frac{x}{y}\right)-x+y\right)+o(1/k)\\
    &=d\left(\KL{x}{y}+y-x-(1-x)\log\left(\frac{1-x}{1-y}\right)\right)+o(1/k).
\end{align*}
We can bound $v(x,y)$ from above by writing the final term as
$(1-x) \log \left( 1 + \frac{x-y}{1-x} \right) \leq (1-x) \frac{x-y}{1-x} = x-y$, using the standard linearisation of the logarithm.
\end{proof}

We are now in a position to prove \Prop~\ref{prop_Ber_COMP} and \ref{prop_Ber_DD}.

\begin{proof}[Proof of \Prop~\ref{prop_Ber_COMP}]
The lemma follows by comparing the bounds from \Thm~\ref{thm_COMP} and \Prop~\ref{prop_COMP_Bernoulli} and applying \Lem~\ref{DKL_reform}.
\end{proof}

\begin{proof}[Proof of \Prop~\ref{prop_Ber_DD}]
As evident from \Cor~\ref{cor_Z_DD}, the fourth bound $c_4(\alpha, \beta, d)$ vanishes under the Z channel. Now comparing the bounds from \Thm~\ref{thm_DD} and \Prop~\ref{prop_DD_Bernoulli}, observing that $(1-\zeta)/(1-\theta) > 1$ for $\zeta < \theta$ and applying \Lem~\ref{DKL_reform}  immediately implies the lemma.
\end{proof}

\section{Notes on Corollary~\ref{thm:shanCAP}}\label{Notes_on_Capacity}
\geb{
\begin{lemma} \label{lem:shanCAP} If $p + q < 1$
the Shannon capacity of the $p-q$ channel of Figure~\ref{pqnoise} measured in nats is
\begin{equation} \label{eq:chancap}
\Cchan = \KL{q}{\frac{1}{1+e^\phi}}
= \KL {p}{\frac{1}{1+e^{-\phi}}},
\end{equation}
where $\phi = (h(p) -h(q))/(1-p-q)$.
This is achieved by taking 
\begin{equation} \label{eq:sigstrat}
\pr(X=0) = \frac{1}{1-p-q}
\left( \frac{1}{1+e^\phi} - q \right).\end{equation}
\end{lemma}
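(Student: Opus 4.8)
The plan is to compute the capacity $\Cchan=\max_{P_X} I(X;Y)$ directly by optimising over the single free parameter $\pi:=\pr(X=0)$, and then to read off both the signalling strategy \eqref{eq:sigstrat} and the closed form \eqref{eq:chancap} from the first-order optimality conditions. Writing $W_0=(1-p,p)$ and $W_1=(q,1-q)$ for the conditional laws of $Y$ given $X=0$ and $X=1$ (listing the probabilities of output $0$ then output $1$), the induced output law has $\pr(Y=0)=\pi(1-p)+(1-\pi)q=q+\pi(1-p-q)$. Since $I(X;Y)=H(Y)-H(Y\mid X)$ and the conditional entropy is linear in $\pi$, I would express the mutual information as
\begin{align*}
I(\pi)=h\bc{q+\pi(1-p-q)}-\pi h(p)-(1-\pi)h(q).
\end{align*}
Because $h$ is concave and its argument is affine in $\pi$, the map $I(\pi)$ is concave on $[0,1]$, so any interior stationary point is the global maximiser.

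Next I would impose $I'(\pi)=0$. Using $h'(r)=\log\frac{1-r}{r}$ and the chain rule, the stationary condition reads $(1-p-q)\log\frac{\pr(Y=1)}{\pr(Y=0)}=h(p)-h(q)$, that is $\log\frac{\pr(Y=1)}{\pr(Y=0)}=\phi$. Together with $\pr(Y=0)+\pr(Y=1)=1$ this pins down the optimal output law
\begin{align*}
Q^*=\bc{\tfrac{1}{1+e^\phi},\ \tfrac{1}{1+e^{-\phi}}}.
\end{align*}
Solving $q+\pi(1-p-q)=\tfrac{1}{1+e^\phi}$ for $\pi$ then yields exactly \eqref{eq:sigstrat}. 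To obtain the capacity value I would use the elementary identity $I(X;Y)=\sum_x P_X(x)\,\KL{W_x}{Q^*}$, valid at any input distribution inducing output law $Q^*$. A short calculation shows $\KL{W_0}{Q^*}=\KL{W_1}{Q^*}$ precisely when $\log\frac{\pr(Y=1)}{\pr(Y=0)}=\phi$, which is how $\phi$ was defined; hence both summands equal a common value and $I(\pi^*)$ equals that value. Evaluating the $X=1$ term gives $\Cchan=\KL{q}{\tfrac{1}{1+e^\phi}}$ and the $X=0$ term gives $\Cchan=\KL{p}{\tfrac{1}{1+e^{-\phi}}}$, reproducing both expressions in \eqref{eq:chancap}.

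I expect the one genuine subtlety to be the admissibility check that the stationary point is interior, i.e. that $\pi^*\in(0,1)$; this is equivalent to $q<\tfrac{1}{1+e^\phi}<1-p$ and is exactly where $p+q<1$ is used. The clean way to settle it is the tangent-line inequality for the strictly concave $h$: evaluating $h(x)\le h(q)+h'(q)(x-q)$ at $x=1-p$ and using $h(1-p)=h(p)$ gives $h(p)-h(q)\le(1-p-q)\,h'(q)$, which rearranges to $\tfrac{1}{1+e^\phi}<1-p$; the symmetric argument with the tangent at $p$ evaluated at $x=1-q$ gives $q<\tfrac{1}{1+e^\phi}$. Strictness (hence $\pi^*$ strictly interior) follows because $p+q<1$ forces $1-p\neq q$ and $1-q\neq p$. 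Everything else --- the entropy differentiation and the divergence algebra --- is routine.
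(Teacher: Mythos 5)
Your proposal is correct and follows essentially the same route as the paper: parametrise by $\pi=\pr(X=0)$, write $I(\pi)=h\bc{q+\pi(1-p-q)}-\pi h(p)-(1-\pi)h(q)$, set the derivative to zero to identify the optimal output law $1/(1+e^\phi)$, and back out the input distribution \eqref{eq:sigstrat}. Two points differ, both mildly in your favour. First, to evaluate the capacity the paper substitutes $T^*=1/(1+e^\phi)$ back into $I$ and manipulates the algebra to reach $\KL{q}{1/(1+e^\phi)}$, and then verifies the second form $\KL{p}{1/(1+e^{-\phi})}$ by a separate computation; your use of the identity $I(X;Y)=\sum_x P_X(x)\KL{W_x}{Q^*}$ together with the equidistance of the two conditional laws from $Q^*$ at the stationary point produces both expressions in \eqref{eq:chancap} in one stroke. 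Second, you verify that the stationary point is admissible, $\pi^*\in(0,1)$, a step the paper omits entirely; your global-concavity observation also subsumes the paper's second-derivative check. One cosmetic slip in that admissibility argument: the tangent of $h$ at $q$ evaluated at $x=1-p$ gives $\phi\le h'(q)=\log\bc{(1-q)/q}$, which rearranges to $q<1/(1+e^\phi)$, \emph{not} to $1/(1+e^\phi)<1-p$; it is the tangent at $p$ evaluated at $x=1-q$ that yields $1/(1+e^\phi)<1-p$. The two conclusions are simply swapped between the two symmetric arguments; the substance of the check is sound.
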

\gebA{Please note that the proof might be a standard result for readers from some research communities, but for others it might be less standard. Therefore we state it here to prevent the interested (but unfamiliar) reader from a long textbook search. }
\begin{proof}  Write $\pr(X=0) = \gamma$ and $\pr(Y= 0) = T(\gamma) := (1-p) \gamma + q (1-\gamma)$. Then since the mutual information
\begin{equation} \label{eq:mutinf}
I(X;Y) = h(Y) - h(Y | X) = h \left(T(\gamma) \right) - \left( \gamma h(p) + (1-\gamma) h(q) \right),\end{equation}
we can find the optimal $T$ by solving 
$$ 0 = \frac{\partial}{\partial \gamma} I(X;Y) =
(1-p-q)  \log \left( \frac{1-T(\gamma)}{T(\gamma)} \right) - \left( h(p) - h(q) \right),$$
which implies that the optimal $T^* = 1/(1+e^\phi)$. We can solve for this for $\gamma^* = (T^*-q)/(1-p-q)$ to find the expression above. As $\frac{\partial}{\partial^2 \gamma} I(X;Y)<0$ it is indeed a maximum. Substituting this in \eqref{eq:mutinf} we obtain that the capacity is given by
\begin{eqnarray}
h(T^*) - \left( \gamma^* h(p) + (1-\gamma^*) h(q) \right)
& = & h \left( \frac{1}{1+e^\phi} \right) - \left( (T^*-q) \phi +
h(q) \right) \nonumber \\
& = & \log(1+e^\phi) - \phi(1-q) - h(q) \label{eq:tocompare} \\
& = & \KL {q}{1/(1+e^\phi)} \nonumber
\end{eqnarray}
as claimed in the first expression in \eqref{eq:chancap} above. We can see that the second expression in \eqref{eq:chancap} matches the first by writing the corresponding expression as $\KL{1-p}{1/(1+e^\phi)} = \log(1+e^\phi) - 
\phi p - h(p)$, which is equal to \eqref{eq:tocompare} by the definition of $\phi$.
\end{proof}
Note that this result suggests a choice of density for the matrix: since each test is negative with probability $e^{-d}$, equating this with \eqref{eq:sigstrat} suggests that we take
$$ d = \dch = \log(1-p-q) - \log \left( \frac{1}{1+e^\phi} - q \right).$$
This is unlikely to be optimal in a group testing sense, since we make different inferences from positive and negative tests, but gives a closed form expression that may perform well in practice. For the noiseless and BSC case observe that $\phi= 0$, and we obtain $\dch = \log 2$.}
\section{Illustration of bounds for Z, reverse Z channel and the BSC}\label{Appendix_Illustration}

\begin{figure}[H]
    \centering
    \begin{subfigure}{.4\textwidth}
    \centering
        \includegraphics[width=1\linewidth]{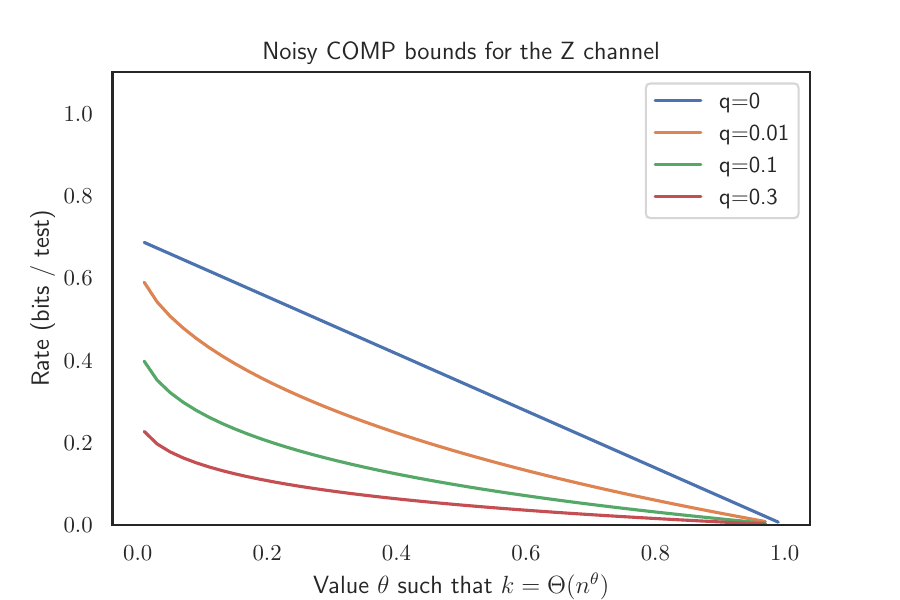}

    \end{subfigure}%
    \begin{subfigure}{.4\textwidth}
    \centering
        \includegraphics[width=1\linewidth]{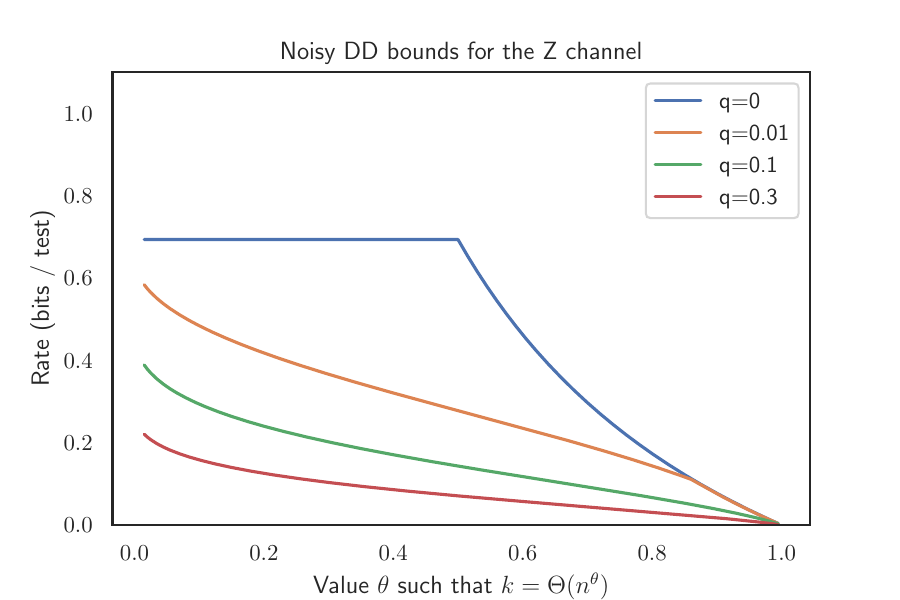}

    \end{subfigure}
    \caption{Illustration of achievability bounds for noisy {\tt COMP} and {\tt DD} under the Z channel. \geb{ (Note for black and white prints: The solid lines as well as the dashed lines in the diagram are in the same order as given in the legend from top to bottom) }} 
    \label{fig_Z_channel}
\end{figure}

\begin{figure}[H]
    \centering
    \begin{subfigure}{.4\textwidth}
    \centering
        \includegraphics[width=1\linewidth]{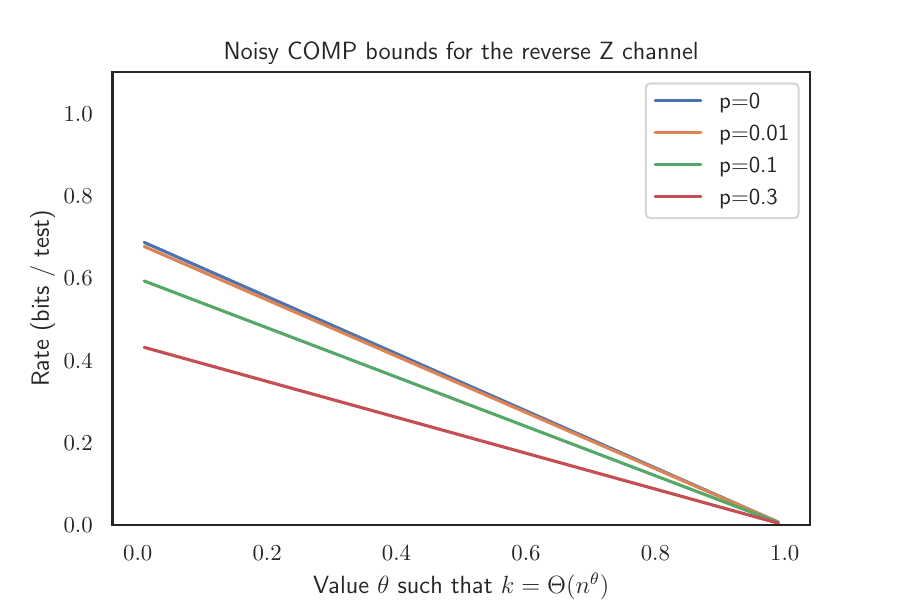}

    \end{subfigure}%
    \begin{subfigure}{.4\textwidth}
    \centering
        \includegraphics[width=1\linewidth]{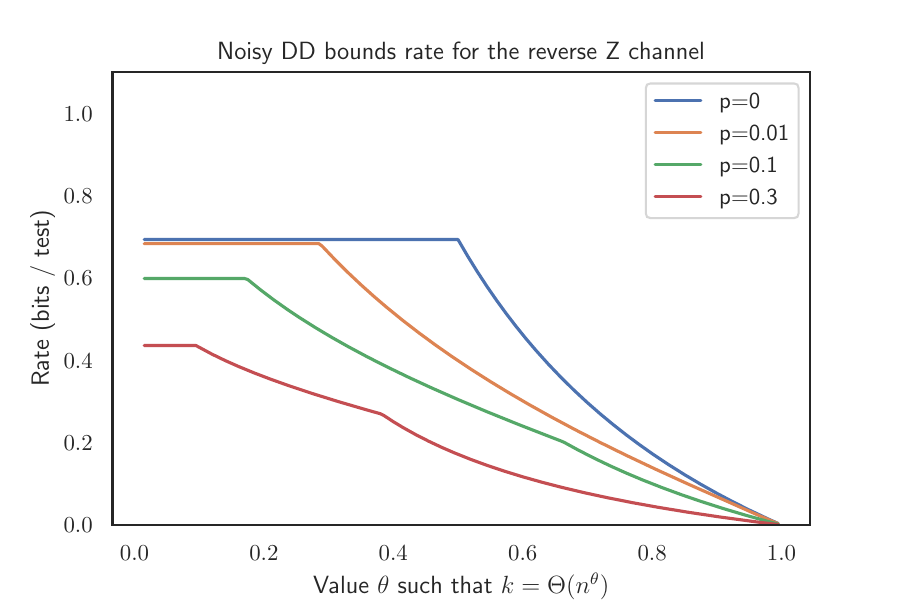}

    \end{subfigure}
    \caption{Illustration of achievability bounds for noisy {\tt COMP} and {\tt DD} under the reverse Z channel. \geb{ (Note for black and white prints: The solid lines as well as the dashed lines in the diagram are in the same order as given in the legend from top to bottom) }}
    \label{fig_rev_Z_channel}
\end{figure}


\begin{figure}[H]
    \centering
    \begin{subfigure}{.4\textwidth}
    \centering
        \includegraphics[width=1\linewidth]{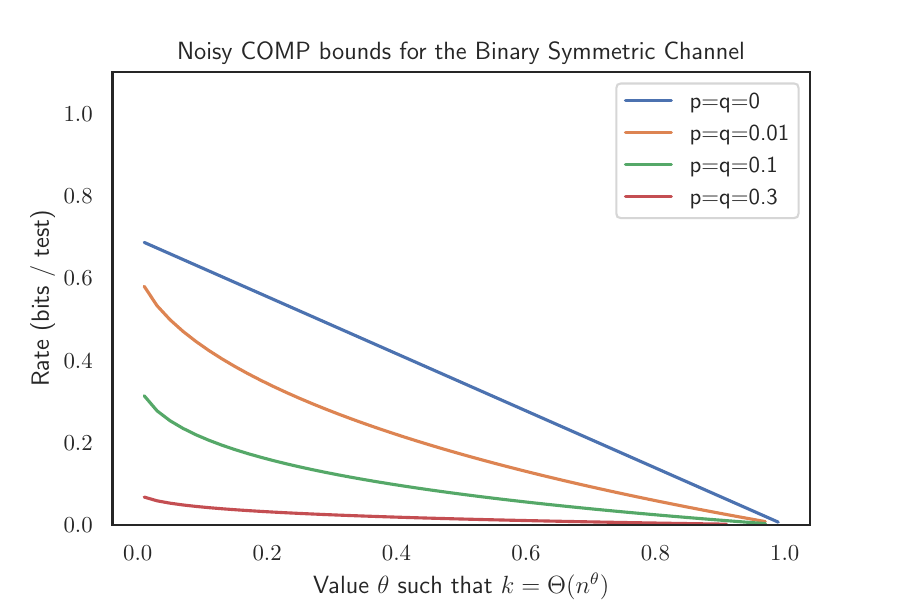}

    \end{subfigure}%
    \begin{subfigure}{.4\textwidth}
    \centering
        \includegraphics[width=1\linewidth]{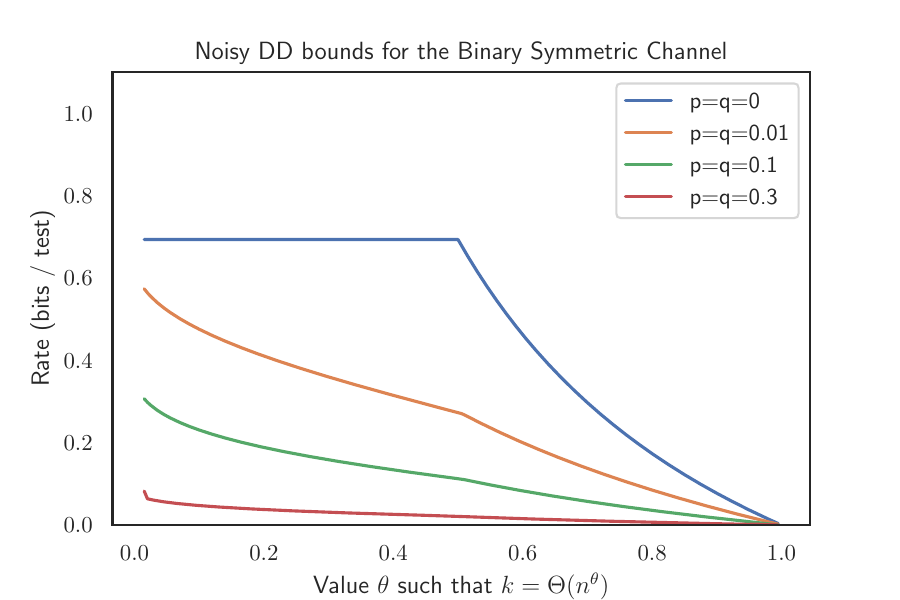}

    \end{subfigure}
    \caption{Illustration of achievability bounds for noisy {\tt COMP} and {\tt DD} under the Binary Symmetric Channel. \geb{ (Note for black and white prints: The solid lines as well as the dashed lines in the diagram are in the same order as given in the legend from top to bottom) }}
    \label{fig_BS_channel}
\end{figure}

\begin{figure}[H]
    \centering
    \includegraphics[width=0.6\linewidth]{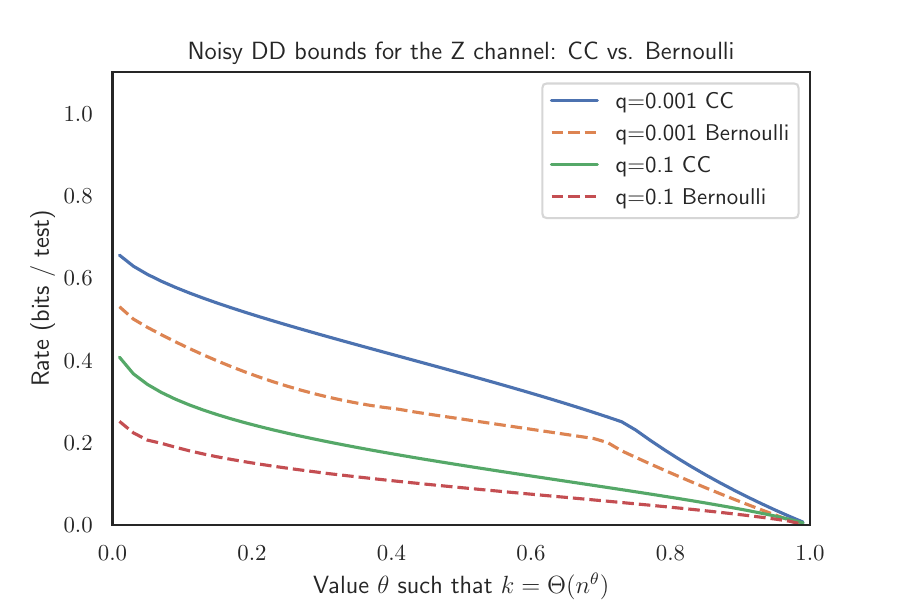}
\caption{Comparison of the noisy {\tt DD} rates under Bernoulli pooling (\cite{Johnson_2018}) with the {\tt DD} bounds with constant-column design as provided in the paper at hand within the Z-Channel.\geb{ (Note for black and white prints: The solid lines as well as the dashed lines in the diagram are in the same order as given in the legend from top to bottom). }}
\label{fig_Z_BervsCC}
\end{figure}

\section*{Acknowledgment}

The authors would like to thank two anonymous referees for their detailed reading of this paper and for the suggestions they made to improve its presentation. Oliver Gebhard and Philipp Loick are supported by DFG CO 646/3.


\end{document}